\begin{document}

\newcommand{\ourTitle}{A Principled Solution to the Disjunction Problem\\ of Diagrammatic Query Representations}

\title[A Principled Solution to the Disjunction Problem of Diagrammatic Query Representations]{A Principled Solution to the Disjunction Problem\\ of Diagrammatic Query Representations}

\author{Wolfgang Gatterbauer}
\orcid{0000-0002-9614-0504}
\affiliation{%
    \orcidicon{0000-0002-9614-0504}
	Northeastern University\country{USA}}
\email{w.gatterbauer@northeastern.edu}

\begin{abstract}
Finding unambiguous diagrammatic representations for first-order logical formulas and relational queries with arbitrarily nested disjunctions has been a surprisingly long-standing unsolved problem.
We refer to this problem as \emph{the disjunction problem} (of diagrammatic query representations).

This work solves the disjunction problem.
Our solution
unifies, generalizes, and overcomes the shortcomings of prior approaches for disjunctions.
It extends the recently proposed \diagrams\xspace and is identical for disjunction-free queries. 
However, it can preserve the relational patterns and the safety
for all well-formed Tuple Relational Calculus (\TRC) queries,
even with arbitrary disjunctions.
Additionally, its size is proportional to the original $\TRC$ query 
and can thus be exponentially more succinct than \diagrams. 
\end{abstract}

\maketitle
\setcounter{page}{1}

\section{Introduction}
\label{sec:introduction}

The goal of query visualization is 
to transform a relational query
into a visual representation that helps a user quickly understand the intent of a query~\cite{Gatterbauer2022PrinciplesQueryVisualization}.
While there is a very long history of visual query languages~\cite{ICDE:2024:diagrammatic:tutorial,DBLP:journals/vlc/CatarciCLB97},
a problem that remains unsolved to this day
is the question of how to ``truthfully'' represent any logical disjunction in a graphical language.
Just like conjunction, disjunction is a fundamental logical operator to combine logical statements,
but it is far harder to represent graphically.
We call this \emph{the disjunction problem} (of visual query representations).

The problem has vexed researchers for centuries, even for basic First-Order 
Logic (FOL).\footnote{FOL is basically the same as Relational Calculus and thus equivalent in expressiveness to 
relationally complete languages.}
Pierce mentions the problem already in 1896 in his influential work on Venn diagrams: 
``It is only disjunctions of conjunctions that cause some inconvenience''~\cite[Paragraph 4.365]{peirce:1933}.
Gardner in his 1958 book `Logic Machines and Diagrams'~\cite{Gardner:1958:logicMachines}
discusses the challenging disjunction 
$(A \vee B) \rightarrow (C \vee D)$
and concludes that
``there seems to be no simple way in which the statement, as it stands, can be diagramed''~\cite[Section 4.3]{Gardner:1958:logicMachines}.
Shin in her work on the logic of diagrams writes 
``any form of representation for disjunctive information--whether a sign is introduced or not--is bound to be symbolic''
\cite[Ch.~3.2]{Shin:2002}.
Englebretsen~\cite{Englebretsen:1996:Shin}
in his review of Shin's 1995 book~\cite{shin_1995} writes:
``In her discussion of perception she shows that disjunctive information is not representable in any system.''
Thalheim
states~\cite{Thalheim:2007:Slides,Thalheim:2013:Slides}
that ``There is no simple way to represent Boolean formulas''
and gives a challenging example that is identical to \cref{Fig_Prior_Disjunctions_6} up to renaming of constants):
$R.A \equal 1 \vee R.B \equal 2 \vee (R.C \equal 3 \wedge R.D \equal 4)$.
Gatterbauer in his tutorial on visual query representations~\cite{ICDE:2024:diagrammatic:tutorial}
lists several open problems 
and includes 
\cref{Fig_Prior_Disjunctions_10} as challenge:
$(R.A \equal S.A \wedge R.B \equal 0) \vee (R.B \equal 1 \wedge R.C \equal 2)$.
A recent paper~\cite{DBLP:journals/pacmmod/GatterbauerD24} states in its conclusions 
that ``it is not clear how to achieve an intuitive and principled diagrammatic representation for arbitrary nestings of disjunctions, such as
$R.A \!<\! S.E \wedge (R.B \!<\! S.F \vee R.C \!<\! S.G)$
or 
$(R.A \!>\! 0 \wedge R.A \!<\! 10) \vee (R.A \!>\! 20 \wedge R.A \!<\! 30)$''.

\begin{figure}[t]
    \centering	
    \begin{subfigure}[b]{.2\linewidth}
        \centering
        \includegraphics[scale=0.4]{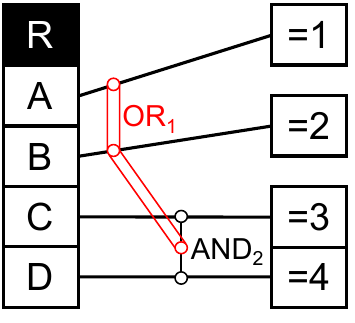}
        \vspace{2mm}
        \caption{\cite{Thalheim:2013:Slides}}
        \label{Fig_Prior_Disjunctions_6}
    \end{subfigure}	
    \hspace{3mm}
    \begin{subfigure}[b]{.2\linewidth}
    \centering        
    \includegraphics[scale=0.4]{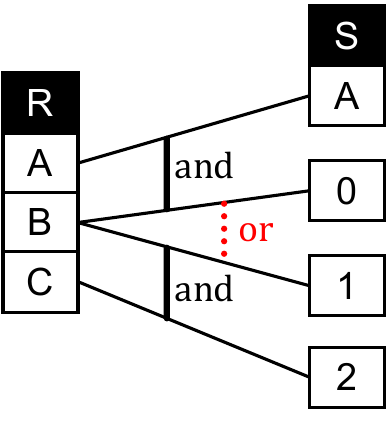}
    \caption{\cite{ICDE:2024:diagrammatic:tutorial}}
    \label{Fig_Prior_Disjunctions_10}
    \end{subfigure}	
    \hspace{3mm}
    \begin{subfigure}[b]{0.2\linewidth}
        \centering
        \includegraphics[scale=0.4]{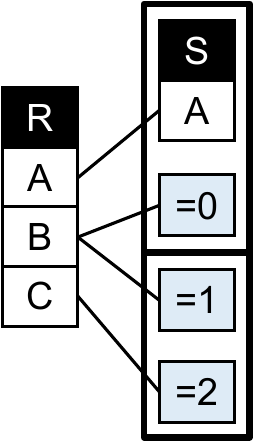}
        \vspace{0mm}    
        \caption{Interpretation \cref{eq:Fig_Disjunction_solutions_1}}
        \label{Fig_Disjunction_solutions_1}
    \end{subfigure}	
    \hspace{3mm}
    \begin{subfigure}[b]{0.2\linewidth}
        \centering
        \includegraphics[scale=0.4]{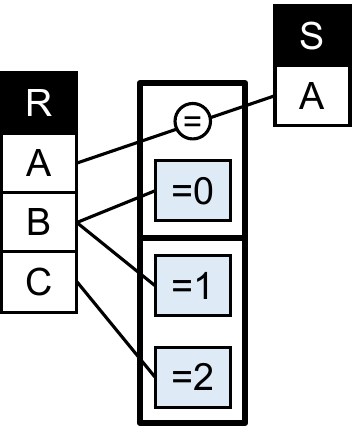}
        \vspace{0mm}    
        \caption{Interpretation \cref{eq:Fig_Disjunction_solutions_2}}
        \label{Fig_Disjunction_solutions_2}
    \end{subfigure}	
    \caption{(a, b): Two previously proposed approaches for representing disjunctions via edges 
    (see text).
    The approaches are incomplete as they leave details of quantification ambiguous and require symbolic annotations 
    to determine precedence of operators.
    (c,d): Our solution $\systemx$
    has precise semantics and can pattern-represent any well-formed $\TRC$ query
    (their interpretation is given in \cref{sec:RDconclusions}). 
    }
    \label{Fig_Introduction}
\end{figure}

\introparagraph{Our contribution}
We give a principled solution to the disjunction problem
that
\emph{unifies, generalizes, and overcomes the shortcomings} of the 3 main prior graphical approaches for disjunction.
Our solution, called $\systemx$,
is a diagrammatic representation 
of well-formed Tuple Relational Calculus (\TRC)
that preserves the relational pattern 
(and a 1-to-1 correspondence with the atoms) 
and the safey of a query.\footnote{Safety is a syntactic criterion that guarantees that the query is domain-independent and thus always returns finitely many answers~\cite{Topor2018}. See \cref{sec:safety} for details.
``Same pattern'' is a semantic notion and means (slightly simplified) that the representation uses the set of relation variables. 
See \cref{sec:relational pattern} for details.}
It is heavily inspired by \diagrams~\cite{DBLP:journals/pacmmod/GatterbauerD24}
which was shown to help users understand queries faster and more accurately than SQL
in a randomized user study.
However, it generalizes \diagrams: it is identical for disjunction-free queries,
yet it is more general and can be exponentially more concise.
It also preserves the safety conditions of $\TRC$
and it is the first to achieve 100\% pattern coverage
on a recently proposed textbook benchmark.

\introparagraph{Our approach in a nutshell}
We proceed in two steps:
\circled{1} We first give a rigorously defined representation that \emph{replaces join and selection predicates 
with built-in relations}
and that uses DeMorgan to replace disjunctions. This approach solves the disjunction problem 
(because it gives ``anchors'' to predicates defined \emph{in arbitrary nestings})
but results in a cluttered representation that loses the safety conditions of \TRC.
\circled{2} We then substitute the built-in relations with prior visual formalisms 
(while keeping the formal semantics of built-in relations)
and add a box-based visual shortcut for disjunction that brings back the safety conditions.
Our definition of boxes allows disjunctions \emph{at any nesting level}, 
while prior box-based approaches restrict disjunctions to be at the root.

\introparagraph{Outline}
\cref{sec:relatedwork} defines our problem and classifies prior approaches for representing disjunctions.
\cref{sec:formal setup} develops our notation for Tuple Relational Calculus (\TRC), its safety conditions, 
and the notion of pattern expressiveness based on an Abstract Syntax Tree (AST) representation of \TRC.
These ASTs are in a 1-to-1 correspondence to our later introduced diagrammatic representations.
\cref{sec:firstsolution} gives our preliminary 
solution to the disjunction problem.
It relies upon \emph{built-in relations} that represent constants and built-in predicates, and a DeMorgan-based transformation of \TRC.
While complete, it is practically unsatisfying due to its visual clutter and the fact that it cannot preserve the safety conditions of \TRC. 
\cref{sec:real solution} replaces the built-in relations 
with prior proposed visual formalisms,
yet keeps our rigorous and principled semantics.
It reintroduces a visual symbol for disjunction called ``\emph{DeMorgan-fuse box}'', which allows us to check safety of a query. This addition leads to \systemx.
\cref{sec:generality}
shows how our fuse boxes unify and generalize prior approaches for disjunction,
presents our solutions to the challenging queries from the introduction,
and justify our perceptual choices.
\cref{sec:pattern coverage} shows 100\% pattern coverage over a test set made available by the authors of \diagrams\xspace.
The optional appendix
includes many details, proofs, and illustrating examples.

\section{Background and Related Work on Diagrammatic Query Representations}
\label{sec:relatedwork}
We discuss diagrammatic (visual) query representations, 
define notions of \emph{a logical diagram} and \emph{relational patterns},
and classify prior approaches for representing disjunctions.

\subsection{Diagrammatic vs.\ Textual Representations}

We use the notion of a \emph{diagrammatic representation} synonymously with one that is
visual, graphical, or non-symbolic (in contrast to \emph{textual} or symbolic), 
and define logical diagrams as follows:

\begin{definition}[Logical Diagram]
    A logical diagram is a graphical representation of a logical formula
    in which the topological relationships between its elements 
    represent logical relationships between the elements of the formula.
\end{definition}

\noindent
Topological relationships are spatial relations that remain invariant under continuous deformations,
such as connectivity, containment, and adjacency. 
Intuitively, in order for a representation 
to be called diagrammatic,
it needs to show joins (i.e.\ the relationships between tables) 
as edges between the respective table attributes,
and it cannot contain non-atomic logical sentences that require symbolic interpretation of logical connectives,
such as ``$A \equal 1 \vee A \equal 2$''.
Our definition
captures the essence of many prior definitions of diagrams. We give examples:
``\emph{Diagram: a simplified drawing showing the appearance, structure, or workings of something; a schematic representation}''~\cite{OxfordLanguages:Diagram}.
``\emph{Diagram: a graphic design that explains rather than represents; especially: a drawing that shows arrangement and relations (as of parts)}''~\cite{MerriamWebster:Diagram}.
``\emph{The relationships established between two sets of elements constitute a diagram}''~\cite[p.~129]{Bertin:1981eu}.
``\emph{Logic diagram: a two-dimensional geometric figure with spatial relations that are isomorphic with the structure of a logical statement}''~\cite[p.~28]{Gardner:1958:logicMachines}.

Notice that while the relationships between elements are captured diagrammatically,
the elements themselves are still represented as text.
This is because relation names and attribute names do not constitute relational information themselves.
For example, the string ``Sailor'' is still used to represent
the name of a relation called ``Sailor'' (instead of an icon with domain-specific interpretation)
and similarly with an attribute named ``name'', 
however the fact that ``Sailor'' is the name of a relation, and the fact that ``name'' is one of its attributes
constitute relationships.
This separation of information carried by individual elements via text from relational information that can be read of diagrammatically
is a key motivation of diagrammatic representations and visualizations in general.
See for example Scott McCloud's influential work on understanding comics~\cite{mccloud:1993:comics}
that shows that using text to convey part of the information 
frees up the image 
to focus on other content, 
and vice versa.
In the case of diagrams, it is the topology that focuses on the relations between elements
rather than the individual elements themselves.
See also, Hearst's recent discussion~\cite{DBLP:journals/cacm/Hearst23} 
of the complex interactions when combining text with visualizations
and many references therein.

\subsubsection{Constitutive vs.\ enabling features}
\label{sec:enabling feature}
De Toffoli~\cite{Toffoli:2022} distinguishes between \emph{constitutive} and \emph{enabling} features of a notational system.
The former have precise mathematical meaning and are essential to interpret the notation correctly
(e.g.\ topological relationships).
The latter facilitate interpretation but are not essential
(e.g.\ colors or relative sizes).
We find this distinction helpful and use it when occasionally
discussing enabling (non-essential) features of our 
diagrammatic representation.
Similar distinctions were made many times throughout history, e.g.\ by Manders~\cite{10.1093/acprof:oso/9780199296453.003.0004} who contrasts
``co-exact'' attributes of a diagram 
(in essence, topological attributes)
from ``exact'' attributes (geometric attributes that are unstable under perturbations, like size and shape).

\subsection{Relational Patterns and the Disjunction Problem}
\label{sec:relational pattern}

Our goal is to find an unambiguous diagrammatic representation of a logical formula
that can preserve ``the structure'' of arbitrary disjunctions.
To properly formalize this problem, we build upon 
a recent definition of \emph{the relational pattern} of a query~\cite{DBLP:journals/pacmmod/GatterbauerD24}. 
Call the signature $\S$ of a query $q$ the set of all its \emph{relation variables} (i.e.\ all the quantified references to some input table).\footnote{The word \emph{relation variables} (and its abbreviation \emph{relvars}) was popularized by Date and Darwen~\cite{Date:2000ta,date2004introduction}
to distinguish between a reference to an input table (e.g., ``$r$'' in the $\TRC$ expression ``$\exists r \in R$'')
and an actual instance of a relation (e.g., a concrete instance of $R$) which they call a \emph{relation value}.
The difference is crucial for queries with repeated references to the same table, e.g.\ conjunctive queries with self-joins.
We also refer to relation variables as \emph{table references}.
}
Then treat each relation variable in the signature $\S$ of a query $q$ 
as a reference to a distinct relation value (i.e.\ a fresh input table)
and call the resulting query the \emph{dissociated query} $q'$ 
over the \emph{dissociated signature} $\S'$.
Finally define the logical function defined by $q'$ as the relational pattern of $q$:

\begin{definition}[Relational pattern~\cite{DBLP:journals/pacmmod/GatterbauerD24}]
	Given a query $q$ with 
 	signature $\S$.
	The \emph{relational pattern} of $q$ is the logical function defined by its
	dissociated query $q'(\S')$.
\end{definition}

\noindent
The intuition behind this formalism is that the dissociated query
defines a function that maps the relation values represented by a set of relation variables (not just a set of input tables) to an output table (an output relation value).
Thus, the dissociated query is a semantic definition
of a relational query pattern that can be applied to any relational query language and that can be used to compare the relative pattern expressiveness of different languages.
Two logically equivalent relational queries $q_1$ and $q_2$ then use the same relational pattern (they are \emph{pattern-isomorphic}) 
if their dissociated queries are also logically equivalent, up to renaming and reordering of their relation variables 
(i.e.\ there is a mapping between the relation variables of their dissociated queries that preserves logical equivalence).
In other words, $q_2$ is a \emph{pattern-isomorphic} representation of $q_1$.

This formalism allows us to give a precise definition of our problem:

\begin{definition}[Disjunction problem]
    The disjunction problem (of diagrammatic query representation) is the problem of finding a pattern-isomorphic diagrammatic representation for any First-Order Logic (FOL) formula, and unambiguous translations back and forth.
\end{definition}

\subsection{The safety problem of DeMorgan-based representations}

An additional challenge for diagrammatic representations is that 
the safety of relational queries is defined via syntactic criteria. 
A simple transformation of a safe formula via a DeMorgen (which alone cannot solve the disjunction problem) 
may lead to an unsafe formula.

\begin{example}[Union of queries]
    \label{ex:disjunction_union}
    Consider two unary tables $R(A)$ and $S(A)$ and the $\TRC$ query:
    \begin{align}
        \{q(A) \mid \exists r\in R[q.A \equal r.A] \;\h{\vee}\; \exists s\in S[q.A \equal s.A]\}      \label{eq:union query}
    \end{align}
    This query expresses the union of the two tables and is safe according to any safety definition we know of 
    (including Ullman's~\cite[Section 3.9]{Ullman1988PrinceplesOfDatabase}, see also \cref{sec:safety}).
    We can remove the disjunction by using DeMorgan,
    however, the resulting query is now unsafe due to the outer $\neg$ operator:
    \begin{align*}
        \{q(A) \mid \h{\neg( 
            \neg(}\exists r\in R[q.A \equal r.A]\h{)} 
            \;\h{\wedge}\; 
            \h{\neg(}\exists s\in S[q.A \equal s.A]\h{))}\}
    \end{align*}
\end{example}

The consequence is that if a diagrammatic representation should convey preserve the safety of a formula, 
then it requires a visual device for disjunction that goes beyond DeMorgan.

\begin{definition}[Safety problem]
    The safety problem (of diagrammatic query representation) 
    is the problem of finding a diagrammatic representation of FOL formulas that preserves their safety.
\end{definition}

\subsection{Existing Visual query representations and diagrammatic reasoning systems}
Visual query languages for writing queries have been investigated since the early days of databases and
a 1997 survey~\cite{DBLP:journals/vlc/CatarciCLB97} has already over 150 references, with examples such as
Query-By-Example (QBE)~\cite{DBLP:journals/ibmsj/Zloof77} and
Query By Diagram (QBD)~\cite{DBLP:journals/vlc/AngelaccioCS90, DBLP:conf/sigmod/CatarciS94}.
Around the same time, Ioannidis~\cite{DBLP:journals/csur/Ioannidis96a} 
lamented that most visual database interfaces were ``ad hoc solutions''
and that ``there are several hard research problems regarding complex querying and visualization that are currently open.''
Today, many commercial and open-source database systems have rudimentary
graphical SQL editors, such as SQL Server Management Studio (SSMS)~\cite{ssms}, Active Query Builder~\cite{activequerybuilder}, QueryScope from SQLdep~\cite{queryscope}, MS Access \cite{msAccess}, and
PostgreSQL's pgAdmin3~\cite{pgadmin}.
Also, new direct manipulation visual interfaces are being developed, such as 
DataPlay~\cite{DBLP:conf/uist/AbouziedHS12,DBLP:journals/pvldb/AbouziedHS12}
and
SIEUFERD~\cite{DBLP:conf/sigmod/BakkeK16}.
More recently, visual query representations have been proposed for the inverse functionality of 
understanding queries, with notable examples
VisualSQL~\cite{DBLP:conf/er/JaakkolaT03},
QueryVis~\cite{DanaparamitaG2011:QueryViz,Leventidis2020QueryVis}, and
SQLVis~\cite{DBLP:conf/vl/MiedemaF21}.

Seemingly disconnected from these developments, the
diagrammatic reasoning community~\cite{10.5555/546459,DBLP:conf/iccs/Howse08,DBLP:conf/icfca/Dau09}
studies diagrammatic representations that have sound and complete inference rules.
Most noteworthy is Shin's influential work~\cite{shin_1995} that proves that a slight modification of Venn-Peirce diagrams (see~\cref{sec:edge-based})
constitutes a sound and complete diagrammatic reasoning system for monadic FOL.
Many variants of diagrammatic reasoning systems have since been proposed
at the annual Diagrams conference~\cite{DBLP:conf/diagrams/2024}.
However neither of these proposals can represent general polyadic predicates (the maximum are dyadic relations~\cite[Table~1]{stapleton:2013:potential-of-diagrams}),
many of them are either not sound or not complete (\cite[Table~2]{stapleton:2013:potential-of-diagrams}),
and neither of them allow pattern-isomorphic representations, even for the fragments of logic they cover
(most often, they can't handle arbitrary disjunctions).

Recent VLDB and ICDE tutorials~\cite{DBLP:journals/pvldb/Gatterbauer23,ICDE:2024:diagrammatic:tutorial}
surveyed diagrammatic representations within and outside the database community
and listed diagrammatic representations of disjunction as open problem, 
which motivated our work.

\subsubsection{\diagrams}

$\diagrams$~\cite{DBLP:journals/pacmmod/GatterbauerD24} 
are a relationally complete and unambiguous diagrammatic representation of safe Tuple Relational Calculus (\TRC).
It uses UML notation for tables and their attributes and
represents negation scopes with hierarchically nested dashed rounded rectangles
that partition the canvas into zones (compartments).
Join predicates are shown with directed arrows and labels on the edges 
(an important detail for us later).
The authors validate their design in a randomized user study
showing that users understand queries faster and more accurately with \diagrams\xspace 
than SQL.
However, that representation (like all prior diagrammatic representations we know of) cannot faithfully represent relational patterns involving disjunctions.
Disjunctions require them to duplicate binding atoms (i.e.\ add fresh relation variables) and to thus change the relational pattern 
(see e.g.\ \cref{ex:Fig_disjunction_bigger_AST}).

We draw a lot of inspiration from that work,
yet develop a diagrammatic representation called $\systemx$that can represent 
\emph{all relational patterns of $\TRC$}
(i.e.\ it is \emph{pattern-complete} for \TRC).
In addition, our solution is backward compatible with $\diagrams$ 
(every \diagram has an identical representation as $\systemx$, but not vice versa).
Interestingly, we achieve this generalization 
by mostly redefining existing visual notations and giving them a stricter semantic interpretation. 
Our representation does not only preserve the table signature (the relation variables), 
but also the join and selection predicates, and thus all atoms from a given $\TRC$ query.

\subsection{The challenge of disjunctions and prior (incomplete) approaches}
\label{sec:intro_disjunctions}

We summarize here 5 conceptual approaches for representing disjunctions that we found throughout the literature.
We illustrate with query $\exists r \in R [r.A \equal 1 \;\h{\vee}\; r.A \equal 2]$ 
and use a standardized and often simplified notation that focuses on the key visual elements.
For the interested reader, 
\cref{app:Original_Drawings}
contains the original figures that led us to this classification.

\subsubsection{Text-based disjunctions}
Logical formulas can always be represented as text~(\cref{Fig_Prior_Disjunctions_1}).
This approach is not diagrammatic and we list it only for completeness.
Example uses are the condition boxes by QBE~\cite{DBLP:journals/ibmsj/Zloof77}
and handling of simple disjunctions by Dataplay~\cite{DBLP:journals/pvldb/AbouziedHS12}.

\subsubsection{(Vertical) form-based disjunctions}
QBE~\cite{DBLP:journals/ibmsj/Zloof77} allows 
filling out two separate rows with alternative information~(\cref{Fig_Prior_Disjunctions_2}).
Recent visual query representations such as SQLVis~\cite{DBLP:conf/vl/MiedemaF21} adopt this approach for simple disjunctions, such as our running example.
However, this formalism does not allow disjunctions between join predicates and selection predicates (e.g., 
$\exists r \in R [r.A \equal 1 \;\h{\vee}\; \exists s \in S [r.A \equal s.A]]$) or nested disjunctions.
Datalog~\cite{DBLP:journals/tkde/CeriGT89} expresses disjunctions with repeated rules, 
and each rule with a new relation variable, and thus not preserving the pattern:
$Q \datarule R(1). Q \datarule R(2)$.

\subsubsection{Edge-based disjunctions}
\label{sec:edge-based}
Around 1896, Charles Sanders Peirce~\cite{peirce:1933} extended Venn diagrams~\cite{venn:1880}
with edges between "O" and "X" markers to express disjunctions. An "O" means the partition is empty (false).
An "X" means there is at least one member (true). A line between two markers means that at least one of these statements is true (\cref{Fig_Prior_Disjunctions_9}).
Connecting disjunctive predicates via lines of various forms 
was suggested repeatedly,
e.g.\ by VQL~\cite{DBLP:journals/tkde/MohanK93},
VisualSQL~\cite{DBLP:conf/er/JaakkolaT03} (\cref{Fig_Prior_Disjunctions_5}),
and QueryViz~\cite{DanaparamitaG2011:QueryViz} (\cref{Fig_Prior_Disjunctions_4}).
Edges were mostly used for disjunctive filters within the same table and cannot represent
complicated more formulas, 
such as 
\cref{eq:Fig_Disjunction_solutions_1}
and
\cref{eq:Fig_Disjunction_solutions_2} discussed in \cref{sec:RDconclusions}.

\begin{figure}[t]
\centering	
\begin{subfigure}[b]{.13\linewidth}
\centering
\includegraphics[scale=0.4]{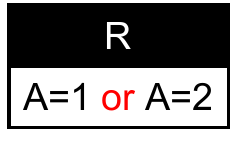}
\caption{}
\label{Fig_Prior_Disjunctions_1}
\end{subfigure}	
\hspace{1mm}
\begin{subfigure}[b]{.05\linewidth}
\centering
\includegraphics[scale=0.4]{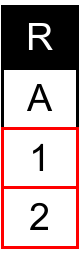}
\caption{}
\label{Fig_Prior_Disjunctions_2}
\end{subfigure}	
\hspace{1mm}
\begin{subfigure}[b]{.13\linewidth}
    \centering    
\includegraphics[scale=0.4]{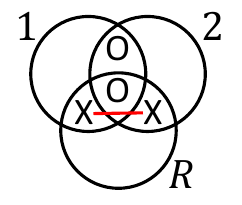}
\vspace{1.5mm}
\caption{}
\label{Fig_Prior_Disjunctions_9}
\end{subfigure}	
\hspace{0mm}
\begin{subfigure}[b]{.15\linewidth}
    \centering
\includegraphics[scale=0.4]{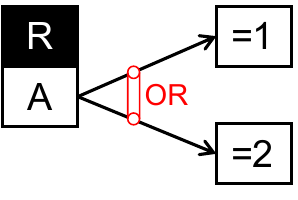}
\caption{}
\label{Fig_Prior_Disjunctions_5}
\end{subfigure}	
\hspace{1mm}
\begin{subfigure}[b]{.15\linewidth}
    \centering
\includegraphics[scale=0.4]{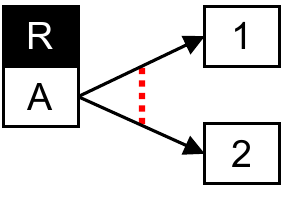}
\caption{}
\label{Fig_Prior_Disjunctions_4}
\end{subfigure}	
\hspace{2mm}
\begin{subfigure}[b]{.26\linewidth}
    \centering
    \includegraphics[scale=0.4]{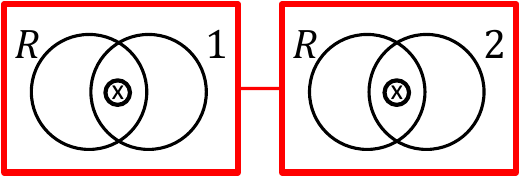}
    \vspace{1mm}
    \caption{}
    \label{Fig_Prior_Disjunctions_8}
\end{subfigure}	
\hspace{1mm}
\begin{subfigure}[b]{.27\linewidth}
    \centering
    \includegraphics[scale=0.4]{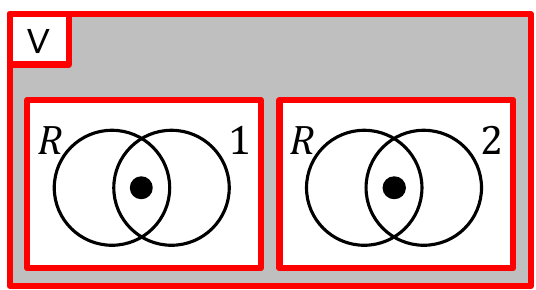}
    \caption{}
    \label{Fig_Prior_Disjunctions_13}
\end{subfigure}	
\hspace{2mm}
\begin{subfigure}[b]{.14\linewidth}
\centering    
\includegraphics[scale=0.4]{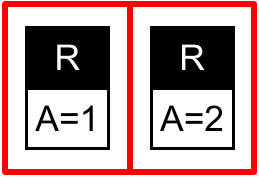}
\caption{}
\label{Fig_Prior_Disjunctions_7}
\end{subfigure}	
\hspace{2mm}
\begin{subfigure}[b]{.22\linewidth}
    \centering        
\includegraphics[scale=0.3]{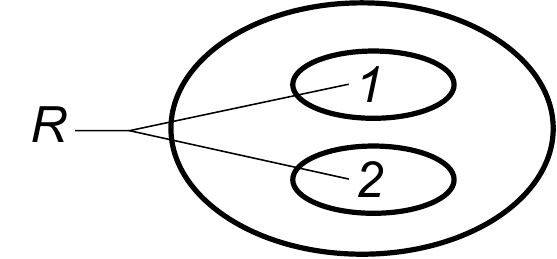}
\caption{}
\label{Fig_Prior_Disjunctions_11}
\end{subfigure}	
\hspace{1mm}
\begin{subfigure}[b]{.14\linewidth}
    \centering    
\includegraphics[scale=0.35]{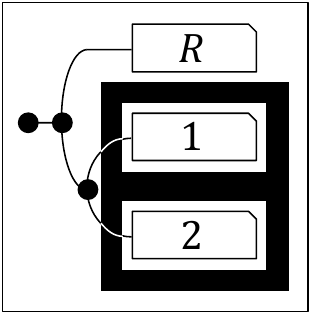}
\caption{}
\label{Fig_Prior_Disjunctions_12}
\end{subfigure}	
\caption{\Cref{sec:intro_disjunctions}: This summary shows 5 conceptual approaches 
for representing disjunctions applied to the deceptively simple problem of representing $R.A \equal 1 \vee R.A \equal 2$:
text-based (a),
form-based (b),
edge-based (c-e),
box-based (f-h),
and DeMorgan-based (i,j).}
\label{Fig_Prior_Disjunctions}
\end{figure}

\subsubsection{Box-based disjunctions}
Peirce
proposed another
solution to disjunctions~\cite{peirce:1933}:
He put unitary Venn diagrams into rectangular boxes and interpreted adjacent boxes as alternatives, i.e.\ disjuncts.
Shin~\cite{shin_1995} adds back lines between boxes 
(\cref{Fig_Prior_Disjunctions_8}).\footnote{We slightly simplified here Shin's proposal. The conclusion is the same, and the appendix gives the full details.}
Spider diagrams~\cite{HowseST2005:SpiderDiagrams} remove the lines between boxes
and place them in a larger ``box template'' with explicit $\vee$ labels
(\cref{Fig_Prior_Disjunctions_13}).
\diagrams~\cite{DBLP:journals/pacmmod/GatterbauerD24} represent a union of queries 
via adjacent ``union cells'' (\cref{Fig_Prior_Disjunctions_7}).
All of these prior box-based approaches represent disjunctions as unions of well-formed diagrams.
Ours is the first to 
allow 
and give a well-defined semantics to
disjunction of logical expressions \emph{within diagrams}.

\subsubsection{DeMorgan-based disjunctions}
We use the term for representations that use only symbols for negation and conjunction, 
and apply negation in a nested way in accordance with the logical identity
$A \vee B = \neg(\neg A \wedge \neg B)$.
Peirce's beta existential graphs~\cite{peirce:1933}
use closed curves to express negation and juxtaposition for conjunctions
(\cref{Fig_Prior_Disjunctions_11}).
String diagrams~\cite{DBLP:conf/diagrams/Haydon2020:StringDiagrams,StringDiagrams:2024:arxiv}
are a variant that represent bound variables by a dot at the end of lines 
(\cref{Fig_Prior_Disjunctions_12}).
We will show that DeMorgan-based disjunctions alone cannot recover the standard safety conditions of relational calculus.

\section{Formal setup on Tuple Relational Calculus (\TRC)}
\label{sec:formal setup}

We give a succinct and necessary background on $\TRC$.
\Cref{app:TRC details}
provides a far more detailed discussion of how our formalism for $\TRC$
and the formal safety conditions compare to prior work.

\subsection{Tuple Relational Calculus ($\TRC$)}
\label{sec:TRC}
Our key claim is that our representation has a natural and pattern-preserving mapping from \emph{any} safe $\TRC$ query.
To prove this claim, we are extra careful in defining safe $\TRC$.
Our formalism is heavily inspired by the sections on $\RC$ of several textbooks
\cite{DBLP:books/aw/AbiteboulHV95,
DBLP:books/mg/SKS20,
Ullman1988PrinceplesOfDatabase,
cowbook:2002,
Elmasri:dq,
10.5555/1882723.1882742},
yet our notation is somewhat streamlined.
\Cref{ex:safety} gives an end-to-end example using our notations.

\introparagraph{Well-formed $\TRC$ formulae}
Given a relational vocabulary $\vec R = \{R_1, R_2, \ldots \}$.
A $\TRC$ formula can contain 3 types of \emph{atoms}:
\begin{enumerate}
    \item \emph{Binding atoms} 
    ``$r \in R$'', 
    where $r$ is a tuple variable and $R \in \vec R$ is a relation.

    \item \emph{Join predicates}
    ``$r.A \,\theta\, s.B$'', 
    where $r$ and $s$ are tuple variables, 
    $A$ and $B$ are attributes from $r$ and $s$ respectively, 
    and $\theta$ is an arithmetic comparison operator from $\{<, \leq, =, \neq , >, \geq \}$.

    \item \emph{Selection predicates}
    ``$r.A \,\theta\, c$'', 
    where $r$, $A$, and $\theta$ are defined as before, 
    and $c$ is a constant.
\end{enumerate}

\noindent
A (tuple) variable is said to be free unless it is bound by a quantifier ($\exists, \forall$).
(Well-formed) formulas are built up inductively 
from atoms with the following 3 families of \emph{formation rules}:
\begin{enumerate}
    \item
    An atom is a formula.

    \item
    If $\varphi_1, \varphi_2, \ldots ,\varphi_k$ are formulas, 
    then so are 
    $\neg(\varphi_1)$,
    $(\varphi_1 \rightarrow \varphi_2)$, 
    $(\varphi_1 \vee \varphi_2 \vee \ldots \vee \varphi_k)$, and
    $(\varphi_1 \wedge \varphi_2 \wedge \ldots \wedge \varphi_k)$.

    \item
    If $\varphi$ is a formula 
    and $R_1, \ldots, R_k$ are relations from $\vec R$, 
    then
    $(\exists r_1 \in R_1, \ldots, \exists r_k \in R_k[\varphi])$ 
    and
    $(\forall r_1 \in R_1, \ldots, \forall r_k \in R_k[\varphi])$ 
    are also formulas.
\end{enumerate}

\noindent
We assume the usual operator precedence 
$(\neg) > (\wedge) > (\vee) > {(\rightarrow)} > (\exists, \forall)$
and can omit parentheses if this causes no ambiguity about the semantics of the formula.
WLOG, no variable can be bound more than once, and no variable occurs both free and bound.

\introparagraph{Well-formed $\TRC$ queries}
A Boolean query (or sentence) is a formula without free variables.
A non-Boolean query 
is an expression
$\{q(\vec H) \mid \varphi \}$
where $q$ is the only free variable of formula $\varphi$, 
and the \emph{header} $\vec H = (A_1, \ldots, A_k)$ is its schema,
i.e.\ the list of attributes (or components) of $q$ that appear in predicates of $\varphi$.
The set builder notation defines the set of tuples that makes 
$\varphi$ true.

\introparagraph{Abstract Syntax Tree (AST)}
The Abstract Syntax Tree (AST) of a $\TRC$ query is a tree-based representation 
that encodes a unique logical decomposition into subexpressions.
It
abstracts away certain syntactic details from the parse tree and
gives a unique reversal of the inductively applied formation rules.
\emph{Atoms} form the leaves (inputs). 
Inner nodes (gates)
belong to 3 families:
\begin{enumerate}
    \item
    The root for non-Boolean queries
    $\{q(\vec H) \mid \varphi \}$
    is formed by a \emph{query} node.
    Its two children are \emph{output} for the output relation $q(\vec H)$ and \emph{formula} for $\varphi$.
    The root of Boolean queries is formed by a \emph{formula} node.

    \item 
    $\exists r_1 \in R_1, \ldots \exists r_k \in R_k [\varphi]$ 
    and
    $\forall r_1 \in R_1, \ldots \forall r_k \in R_k [\varphi]$
    are represented by \emph{$\exists$} and \emph{$\forall$} quantifier nodes, respectively.
    Their two children are \emph{bindings} (with $k$ binding atoms as grandchildren), and \emph{formula} $\varphi$.

    \item 
    \emph{Logical connectives} are nodes that have either one child ($\neg$), 
    two children ($\rightarrow$), 
    or $k \!\geq\! 2$ children ($\wedge, \vee$).
\end{enumerate}

\noindent
We require that no $\neg$ is the child of another $\neg$ node
(we can always cancel double negations by $\neg \neg \varphi = \varphi$)
and that the polyadic connectives $(\wedge, \vee)$ to be ``flattened''~\cite[Sect.~5.4]{DBLP:books/aw/AbiteboulHV95}, 
i.e.\ they can have more than 2 children, yet no child of an $\wedge$ is another $\wedge$ (analogously for $\vee$).
Similarly, quantifier nodes 
$(\exists, \forall)$
can't have a quantifier node of the same type as a grandchild,
i.e.\ a $\exists$ quantifier node can't have another $\exists$ quantifier node as child of its formula child.

\introparagraph{Maximally scoped $\TRC$}
We call $\TRC$ formula \emph{maximally scoped}
if no
$\exists$ quantifier node is the child of an $\wedge$ node.
This is WLOG, as existential quantifiers can always be pushed before an $\wedge$ node, as in:
$\exists r \in R[\varphi_1]  \wedge  \exists s \in S[\varphi_2]
=
\exists r \in R, \exists s \in S[\varphi_1  \wedge  \varphi_2]$.

\subsection{Safety of $\TRC$}
\label{sec:safety}
The idea of \emph{safety} is to \emph{syntactically restrict} the well-formed $\TRC$ queries 
s.t.\ ($i$) safe queries are guaranteed to be domain-independent
(and thus have only finitely many answers), 
yet ($ii$) this subset can express all possible finite queries~\cite{Topor2018}.
In the following, we refer to the subtree of the AST in which each node is connected to the root via a path that is not blocked 
by any negation ($\neg$), implication ($\rightarrow$), nor ($\forall$ quantifier)
as the \emph{base partition} of the AST.
Boolean queries (closed formulas) are by definition domain-independent and thus safe.
We say that a non-Boolean $\TRC$ query $\{q(\vec H) \mid \varphi\}$ is safe if 
it is well-formed
and the following 4 conditions hold on $\varphi$:

\begin{enumerate}
    \item 
    Every attribute $A$ of the header $\vec H$ 
    is bound in $\varphi$ to either 
    ($i$)~an attribute $B$ of an existentially quantified table $\exists r \in R$ via an \emph{equijoin predicate} $q.A \equal r.B$, or 
    ($ii$) a constant $c$ via an \emph{equi-selection predicate} $q.A \equal c$.
    In both cases, we refer to the equality predicate as the \emph{binding predicate} of $q.A$.
    
    \item  
    Every binding predicate is in the base partition of the AST.

    \item  
    If there is an $\vee$ operator on the unique path
    from a binding predicate to the root node of the AST 
    then all child subformulas of that $\vee$ node have 
    only one free tuple variable, and it is the same variable with the same attributes defined.

    \item  
    If there is no $\vee$ operator on that path from a binding attribute involving a header attribute $q.A$
    then the 
    header
    attribute $q.A$ appears in no other predicate.
\end{enumerate}

Call \emph{base disjunction} any disjunction that appears in the base partition.
Intuitively, these conditions ensure that safe $\TRC$ queries have an AST 
in which each output attribute is bound to exactly one existentially quantified table column (or a constant)
in the base partition if, for all base disjunctions, all except for one subtree is removed.
\Cref{Fig_my_safety_example_AST} illustrates with a purposefully involved example.

\begin{figure}[t]
    \begin{subfigure}[t]{.55\linewidth}
        \begin{example}	
            \label{ex:safety}	
            Consider the following safe non-Boolean $\TRC$ query:
            \begin{align*}
            &\{ q(A,B) \mid 
                (\hblue{q.A \equal 0} \wedge (\exists r\in R [\hblue{q.B \equal r.B}] \,\horange{\vee}\, q.B \equal 1))  \\
            &\hspace{5mm}            
                \horange{\vee} \, (\exists r \in R [\hblue{q.A \equal r.A} \wedge \hblue{q.B \equal r.B} \\
            &\hspace{7mm}
                 \wedge \horange{\forall} s\in S[\exists r_2 \in R[r.A \equal r_2.A \wedge r_2.B \equal s.B ]]]) \}
            \end{align*}
            The figure to the right
            shows its unique AST. 
            Notice how the 4 safety conditions are fulfilled.
            In particular, the two child subformulas 
            ``\,$\exists r\in R [\hblue{q.B \equal r.B}]$\!''
            and
            ``\,$q.B \equal 1$\!''
            of the lower nested disjunction have both 
            $q(B)$ as free variable.
            However, the child subformulas of the higher disjunction have both $q(A,B)$ as free variable.
            \end{example}
\end{subfigure}
\hspace{3mm}
\begin{subfigure}[t]{.4\linewidth}
    \centering
    \vspace{2mm}
    \includegraphics[scale=0.35]{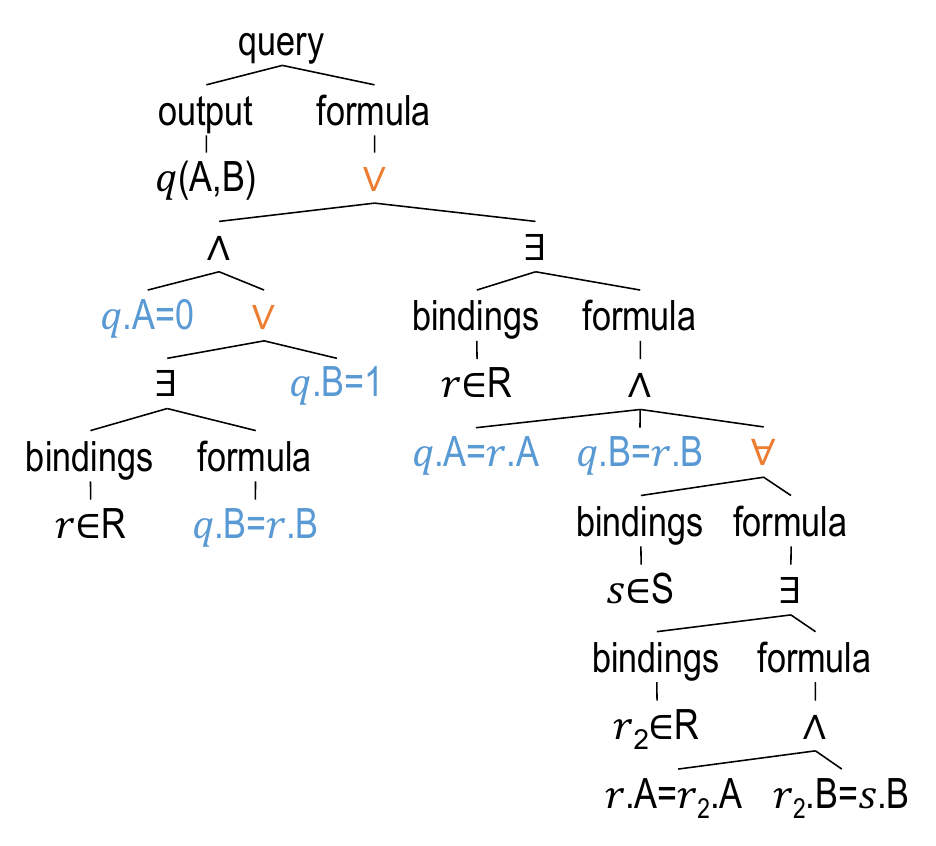}
\end{subfigure}
\caption{\Cref{ex:safety}: AST for $\TRC$ query with nested disjunctions.}
\label{Fig_my_safety_example_AST}   
\end{figure}

\section{A preliminary solution with built-in relations}
\label{sec:firstsolution}

This section develops an extension to $\diagrams$ that solves the disjunction problem and 
gives the resulting representation 
the same pattern expressiveness as $\TRC$.
The approach relies on ``built-in relations'' 
(i.e.\ unary and binary relations that are not part of the input schema and represent constant and built-in predicates). 
The approach is 
simple in that it does not require any novel visual syntactic devices 
(it uses even a smaller visual vocabulary than \diagrams).
However, it is practically unsatisfying due to its additional visual clutter, 
and the fact that it \emph{cannot preserve the safety conditions} of $\TRC$.
We address these problems in the subsequent section.

Here, we first discuss two important fragments of $\TRC$ (\cref{sec:ENC-TRC}),
then discuss our idea of built-in relations (\cref{sec:anchors}),
before we 
prove it to be pattern-isomorphic to $\TRC$
(\cref{sec:Theorem-built-in-relations}).

\subsection{$\TRC^{\neg \exists \wedge}$ is an atom-preserving fragment of $\TRC$ (but not safe $\TRC$)}
\label{sec:ENC-TRC}
We first show that 
universal quantifiers $\forall$,
material implications $\rightarrow$, and
disjunctions can be replaced in $\TRC$ by using only the symbols for 
negation $\neg(...)$, 
existential quantification $\exists$,
and conjunction $\wedge$
\emph{without changing the relational pattern}.\footnote{This statement is not obvious. 
Take as an example the single connective NOR ($\downarrow$) which
is also truth-functionally complete~\cite[Sections 7.4]{Barker-Plummer:2011}.
It is easy to see that NOR is \emph{not pattern preserving}: 
$\neg(\varphi) = \varphi \downarrow \varphi$.}
While it is standard textbook knowledge that the connectives $\neg, \wedge$ are truth-functionally complete~\cite[Sections 7.4]{Barker-Plummer:2011}, 
we show the slightly more general statement that we can preserve \emph{all atoms} in the AST.

\begin{lemma}
    \label{lemma:ENC-TRC}
    Given a $\TRC$ formula $\varphi$ with 
    universal quantification, implication, or disjunction. 
    Then there exists a logically equivalent $\TRC$ formula $\varphi'$ that
    ($i$) is pattern equivalent to $\varphi$, 
    ($ii$) does not use universal quantifiers, implications, or disjunction,
    ($iii$) uses the identical set of atoms, and
    ($iv$) can be found in a polynomial number of steps in the size of $\varphi$.
\end{lemma}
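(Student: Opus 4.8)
The plan is to prove \cref{lemma:ENC-TRC} by a single structural induction on the AST of $\varphi$, applying three well-known logical rewrite rules --- elimination of $\rightarrow$, elimination of $\forall$, and DeMorgan for $\vee$ --- but tracking carefully that each rewrite introduces no new atoms, removes no atoms, and produces a logically equivalent formula. First I would record the three identities I will use: $(\varphi_1 \rightarrow \varphi_2) = \neg(\varphi_1 \wedge \neg(\varphi_2))$ (or equivalently $\neg(\varphi_1) \vee \varphi_2$, but I want to avoid reintroducing $\vee$), $\forall r_1 \in R_1, \ldots, \forall r_k \in R_k[\varphi] = \neg(\exists r_1 \in R_1, \ldots, \exists r_k \in R_k[\neg(\varphi)])$, and the polyadic DeMorgan $(\varphi_1 \vee \ldots \vee \varphi_k) = \neg(\neg(\varphi_1) \wedge \ldots \wedge \neg(\varphi_k))$. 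Each identity is a purely syntactic substitution: the multiset of atoms appearing on the left equals the multiset on the right, so claim ($iii$) is immediate for a single application, and logical equivalence (claim ($i$)'s prerequisite, and ($ii$)) is standard.

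Next I would set up the induction. Process the AST top-down (or equivalently bottom-up --- either works), and at each inner node apply the corresponding rewrite if the node is a $\rightarrow$, a $\forall$-quantifier block, or a $\vee$; leave $\neg$, $\wedge$, $\exists$, and query/output nodes untouched except to recurse into their children. The order matters slightly: eliminate $\rightarrow$ first (so that no new $\rightarrow$ can appear), then $\forall$, then $\vee$ via DeMorgan. Since DeMorgan for $\vee$ introduces fresh $\neg$ nodes but no new $\vee$, $\rightarrow$, or $\forall$, and the $\forall$-elimination introduces $\neg$ and $\exists$ but no $\vee$ or $\rightarrow$, the process terminates. After all rewrites, re-normalize the AST to the canonical form required in \cref{sec:formal setup}: cancel any double negations $\neg\neg\varphi = \varphi$, flatten nested $\wedge$ nodes, and merge adjacent $\exists$-quantifier blocks. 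None of these normalizations touch atoms, so ($iii$) survives; and each preserves logical equivalence, giving ($i$) and ($ii$). For the relational pattern (claim ($i$) in the strong sense of \cref{def:relational pattern}), I would note that pattern equivalence is defined via logical equivalence of the dissociated queries, and since the rewrites never add, remove, or duplicate a binding atom (hence never change the signature $\S$), the dissociation of $\varphi'$ is obtained by applying the very same rewrites to the dissociation of $\varphi$; logical equivalence is preserved under the rewrites, so the dissociated queries stay equivalent and $\varphi' $ is pattern equivalent to $\varphi$.

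For the complexity bound ($iv$): each of the three rewrites, applied once at a node, at most doubles a local piece (DeMorgan on a $k$-ary $\vee$ adds $k+1$ negation nodes; $\forall$-elimination adds two negations), and is applied at most once per original inner node of that type, so the total number of rewrite steps is linear in the size of $\varphi$, and each step is a constant-size local graph surgery; the final normalization pass (double-negation cancellation, flattening, quantifier merging) is also linear. Hence the whole procedure runs in a polynomial (in fact linear) number of steps, establishing ($iv$).

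The main obstacle I anticipate is not any single rewrite --- those are textbook --- but making the bookkeeping around the AST normal-form constraints airtight. Specifically, after DeMorgan one can get a $\neg$ directly below a $\neg$ (violating the ``no $\neg$ child of $\neg$'' rule) and a $\wedge$ whose child is a $\wedge$ that used to be hidden under a $\vee$; I need to argue that the cleanup pass restores every invariant simultaneously and that doing so still preserves the atom multiset and logical meaning --- in particular that flattening/merging never accidentally collides two distinct relation variables or drops a binding atom. A secondary subtlety is the interaction with \emph{maximally scoped} form and with the $\exists$-merging rule: after $\forall$-elimination produces $\neg(\exists \ldots [\neg(\ldots)])$, the inner formula may itself begin with an $\exists$ block that should be merged upward, and I must confirm this merging is legitimate (it is, since $\exists r \exists s[\psi] = \exists r, s[\psi]$ is atom-preserving and meaning-preserving). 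I would state these normalization facts as a small sub-claim and verify each invariant is re-established, then the induction goes through cleanly.
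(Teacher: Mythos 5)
Your proposal is correct and follows essentially the same route as the paper: both rely on recursively applying the atom-preserving identities for $\forall$-elimination via DeMorgan, $\rightarrow$-elimination, and DeMorgan for $\vee$, followed by double-negation cancellation, with the observation that none of these touch the leaves of the AST. Your treatment is in fact somewhat more careful than the paper's (which simply lists the five identities and asserts the outside-in application order terminates in polynomial time), particularly regarding re-establishing the AST normal-form invariants and the dissociation argument for pattern equivalence.
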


We call 
``Existential-Negation-Conjunctive $\TRC$''
($\TRC^{\neg \exists \wedge}$)
the resulting syntactic fragment 
of $\TRC$
that does not use the connectives $\vee$ and $\rightarrow$, nor the universal quantifier $\forall$.\footnote{We have consulted a long list of standard textbooks on logic
\cite{Barker-Plummer:2011,GeneserethK:2018,enderton2001mathematical},
online resources, and LLMs, yet have not found a standardized, shorter, non-ambiguous terminology for this rather natural fragment, despite being often implicitly used.}
We call 
``Existential-Negation $\TRC$'' 
($\TRC^{\neg \exists \wedge \vee}$)
the variant of that syntactic fragment that allows disjunctions.

\subsubsection{Comparison with the non-disjunctive fragment}
\label{sec:non-dijunctive-fragment}
The \emph{non-disjunctive fragment} defined by the authors of \diagrams~\cite{DBLP:journals/pacmmod/GatterbauerD24},
includes an extra condition requiring all predicates to be ``guarded'' 
(each predicate needs to contain a ``local attribute'' whose relation is quantified within the scope of the last negation).
This condition leads to a reduction in logical expressiveness, which the authors fixed by adding a union operator and corresponding new visual element. 
It also leads to cases where expressing a query requires a \emph{different relational signature}.
This is in contrast to $\TRC^{\neg \exists \wedge \vee}$ and $\TRC^{\neg \exists \wedge}$ which are only syntactic restrictions 
of non-leaf nodes of the AST.
\Cref{ex:Fig_disjunction_bigger_AST} in
\Cref{sec:non-dijunctive-fragment-illustration} 
contrasts example ASTs for 
$\TRC^{\neg \exists \wedge \vee}$ and $\TRC^{\neg \exists \wedge}$.

\subsubsection{Safety is preserved by $\TRC^{\neg \exists \wedge \vee}$, but not by $\TRC^{\neg \exists \wedge}$}
Recall that safety is a syntactic criterion, 
and applying DeMorgan can render a safe query unsafe and v.v.\ 
(recall \cref{ex:disjunction_union}).
Thus removing disjunction from the vocabulary makes it impossible to represent all logical queries \emph{while preserving safety}.
It is easy to see that the safe query from \cref{ex:disjunction_union}
cannot be expressed in $\TRC^{\neg \exists \wedge}$ 
while maintaining safety:
The output needs to be restricted to the union of $R(A)$ and $S(A)$.
In the absence of disunction $\vee$ that can only be achieved with DeMorgan, which renders the resulting query unsafe
since the output can't be bound to any table column.
In contrast, 
$\TRC^{\neg \exists \wedge \vee}$ \emph{preserves safety} 
since all transformations for removing $\forall$ and $\rightarrow$
from a safe $\TRC$ query must happen \emph{outside the base partition} of its AST, 
and thus no binding predicate changes during the transformation.

\subsection{Built-in relations reduce the visual vocabulary but extend the pattern expressiveness of $\diagrams$}
\label{sec:anchors}

We next add unary and binary \emph{built-in relations} to the vocabulary of $\diagrams$ 
and show that this addition enables the resulting visual representation to represent all patterns of $\TRC$.
The intuition is that these additional tables can serve as ``anchors'' for negation scopes and thus permit a direct translation from $\TRC^{\neg \exists \wedge}$ to such extended \diagrams.\footnote{We call those relations ``built-in'' in reminiscences of ``built-in predicates'' like $<$ in SQL~\cite{Ullman1988PrinceplesOfDatabase}.
An alternative name we had considered is ``anchor relations'' as they give us anchors for negation and disjunction scopes.}
Furthermore, by expressing predicates as relations, we \emph{do not have to introduce any new visual elements}.
However, our representation loses the ability to encode safety (and becomes visually more complex).
We fix both issues in the next section.

\subsubsection{Constants represented as unary built-in relations}
\label{sec:built-in:constants}
For each constant $c$ and each arithmetic comparison operator $\theta \in \{=, <, \leq, >, \geq, \neq\}$,
we allow a new unary relation
\fcolorbox{black}{unaryBuiltInPredicate}{\textcolor{white}{$\theta c$}}
descriptively named ``$\theta c$''
that contains the subset of the (possibly infinite) domain that fulfills that condition. 
Each selection predicate ``$R.A \,\theta\, c$'' is then represented as equijoins with a different occurrence of that relation.\footnote{When clear from the context, we write the table name instead of a table variable.}
WLOG, we adopt Ullman's notation~\cite{Ullman1988PrinceplesOfDatabase} for the ordered, unnamed perspective
and name the column $\$1$.
\Cref{Fig_built-in_anquors_1} shows the representation for the selection predicate ``$R.A \!<\! 4$''.
Notice that our choice of blue background color for unary built-in relations is only ``enabling''
(\cref{sec:enabling feature}).

\subsubsection{Join predicates represented as binary built-in relations}
\label{sec:built-in:joins}
For each arithmetic comparison operator $\theta \in \{=, <, \leq, >, \geq, \neq\}$,
we allow a new binary relation 
\fcolorbox{black}{binaryBuiltInPredicate}{\textcolor{white}{$\theta$}}
descriptively named ``$\theta$''
that contains the subset of the (possibly infinite) cross-product of the domain that fulfills that arithmetic comparison. 
Each join predicate ``$R.A \,\theta\, S.B$'' is then represented as two equijoins of R and S with an instance of such a relation.
We again adopt Ullman's notation and name the columns $\$1$ and $\$2$, respectively.
\Cref{Fig_built-in_anquors_2} shows the representation for the join predicate ``$R.A \!<\! S.B$''.
Notice that our choice of orange background color for binary built-in relations is again only ``enabling''.

\subsubsection{Correct placement of built-in relations}
\label{sec:correct-placement}
The placement of the labels for built-in predicates by $\diagrams$ is not important since they are interpreted as ``labels'' on the edges, and the correct interpretation is guaranteed by the ``guard'' of each predicate (i.e. the inner-most nested relations).
This freedom gets restricted with built-in relations as we illustrate next.

\begin{example}
    \label{ex:builtin relations}
Consider the two upper $\diagrams$ in \cref{Fig_built-in_anquors_1,Fig_built-in_anquors_2}.
According to \cite{DBLP:journals/pacmmod/GatterbauerD24},
their interpretation is identical since the placement of the label \builtinlabel{$<$} 
does not affect its interpretation.
Both are interpreted as
``There exists a value in $R.A$ s.t.\ there is no value in $S.B$ that is bigger'', i.e.\
\begin{align}
\phantom{=} &
    \exists r \in R [\neg (
        \exists s \in S [\h{r.A \!<\! s.B}])] 
        \label{eq:biultin relations base}
\end{align}

\noindent
Once we replace the built-in predicate 
$r.A \!<\! s.B$
with a built-in relation 
\fcolorbox{black}{binaryBuiltInPredicate}{\textcolor{white}{$<$}},
the placement of that relation matters,
as the two possible placements (bottom in \cref{Fig_built-in_anquors_1,Fig_built-in_anquors_2}) are interpreted differently:
\begin{align}
\phantom{=} &
\exists r \in R [\neg (
    \exists s \in S, \h{\exists \anchor{j} \in \anchor{\textrm{``$<$''}}} [ 
    \h{r.A \equal \anchor{j}.\$1 \wedge \anchor{j}.\$2 \equal s.B}]  )]     \label{eq:builtintranslation1}\\
\phantom{=} &
\exists r \in R, \h{\exists \anchor{j} \in \anchor{\textrm{``$<$''}}} [\neg (
    \exists s \in S [ 
    \h{r.A \equal \anchor{j}.\$1 \wedge \anchor{j}.\$2 \equal s.B}]  )]     \label{eq:builtintranslation2}
\end{align}  

\noindent
Query \cref{eq:builtintranslation1} is identical to \cref{eq:biultin relations base},
but query \cref{eq:builtintranslation2} states something far more permissive:
"There exists a value in $R.A$ s.t.\ there exists a bigger value that is not in $S.B$".
For example, assume the database is $R(1)$ and $S(2)$.
Then variant \cref{eq:builtintranslation1} is false (as expected), 
whereas  variant \cref{eq:builtintranslation2} is true for the assignment:
$R.A \equal \textrm{``$<$''}.\$1 \equal 1$ and 
$\textrm{``$<$''}.\$2 \equal 3$ 
(since the value $3$ does not exist in $S.B$).
\end{example}

\begin{figure}[t]
    \centering
    \begin{subfigure}[b]{0.12\linewidth}
        \centering
        \includegraphics[scale=0.4]{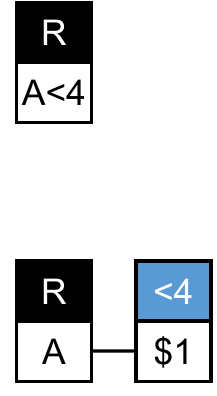}
        \vspace{1mm}
        \caption{}
        \label{Fig_built-in_anquors_1}
    \end{subfigure}	
    \hspace{1mm}
    \begin{subfigure}[b]{0.21\linewidth}
        \centering
        \includegraphics[scale=0.4]{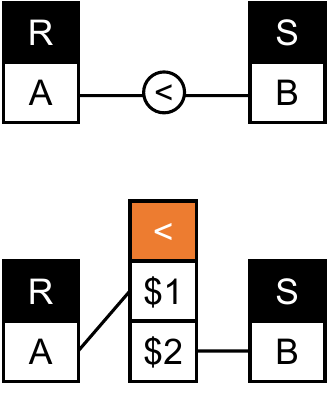}
        \vspace{1mm}
        \caption{}
        \label{Fig_built-in_anquors_2}
    \end{subfigure}	    
    \hspace{10mm}
    \begin{subfigure}[b]{0.2\linewidth}
        \centering
        \includegraphics[scale=0.4]{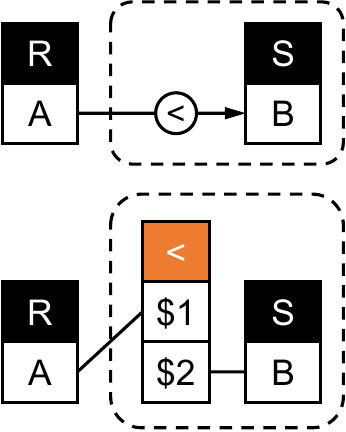}
        \caption{}
        \label{Fig_new_anquors_3}       %
    \end{subfigure}	
    \hspace{1mm}
    \begin{subfigure}[b]{0.2\linewidth}
        \centering
        \includegraphics[scale=0.4]{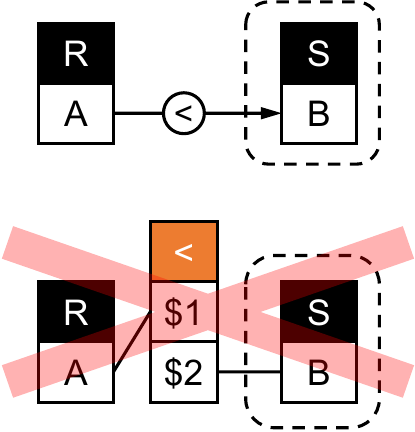}
        \caption{}
        \label{Fig_new_anquors_4}       %
    \end{subfigure}	
    \caption{
        \cref{sec:built-in:constants,sec:built-in:joins}: (a) Unary (blue) and (b) binary (orange) ``anchor relations'' are added to the visual vocabulary of $\diagrams$ in order to give the resulting diagrammatic representation system the same pattern expressiveness as \TRC.    
    \Cref{sec:correct-placement} (c), (d):
    the placement of operator labels 
    does not matter for $\diagrams$. However, it does make a difference when replacing the labels with new relations.}
    \label{Fig_Anquors}
\end{figure}

Achieving a correct translation (the expected interpretation) is straightforward: 
we place each built-in relation in exactly the negation scope that it appears in the $\TRC$ expression.

\begin{example}[\Cref{ex:builtin relations} continued]
\label{ex:builtin relations2}
Our representation replaced the exact position of the join predicate with the built-in relation,
which nests it in the correct negation scope:
\begin{align*}
    \phantom{=} &
    \exists r \in R [\neg (
        \exists s \in S[ \h{\exists \anchor{j} \in \anchor{\textrm{``$<$''}} [ 
        r.A \equal \anchor{j}.\$1 \wedge \anchor{j}.\$2 \equal s.B]}  ])]  
\end{align*}    
\end{example}

\subsection{$\diagrams$ with built-in relations are pattern-complete for $\TRC$}
\label{sec:Theorem-built-in-relations}

We are ready to state our first of two main results of this paper.

\begin{theorem}[Full pattern expressiveness]
    \label{th:pattern expressiveness}
    There is an algorithm that translates any well-formed $\TRC$ query into \diagrams\ extended with built-in relations. 
    That representation has an unambiguous logical interpretation (there is another algorithm that translates that diagram back into $\TRC$)
    and has the same atoms as the original $\TRC$ query and thus has the same relational pattern.
\end{theorem}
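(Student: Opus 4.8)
\emph{Proof plan.} The plan is to factor the translation through the fragment $\TRC^{\neg \exists \wedge}$ and a uniform ``predicate-to-relation'' rewrite, and to obtain the promised back-translation by inverting both steps. Concretely, given a well-formed $\TRC$ query $q = \{q(\vec H) \mid \varphi\}$ (the Boolean case is analogous and simpler), I would first apply \cref{lemma:ENC-TRC} to $\varphi$ --- if it uses $\forall$, $\rightarrow$, or $\vee$ --- to obtain a logically equivalent, pattern-equivalent formula $\varphi' \in \TRC^{\neg \exists \wedge}$ that uses the identical set of atoms and is produced in polynomially many steps; then, WLOG and without touching any atom, make $\varphi'$ maximally scoped. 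Since this first step changes neither the atom set nor the logical function, it reduces the theorem to queries already in maximally scoped $\TRC^{\neg \exists \wedge}$.

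Next I would walk the AST of $\varphi'$ and replace \emph{in place} each selection predicate ``$r.A\,\theta\,c$'' by $\exists j \in \text{``$\theta c$''}[\, r.A \equal j.\$1 \,]$ and each join predicate ``$r.A\,\theta\,s.B$'' by $\exists j \in \text{``$\theta$''}[\, r.A \equal j.\$1 \wedge j.\$2 \equal s.B \,]$, where $j$ is fresh and ``$\theta c$''/``$\theta$'' are the reserved built-in relations of \cref{sec:built-in:constants,sec:built-in:joins}; ``in place'' means the new $\exists$ node occupies exactly the negation scope of the predicate it replaces, which by \cref{ex:builtin relations} and \cref{sec:correct-placement} is the unique placement that preserves the logical function. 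Call the result $\varphi''$. Two facts then hold: (a) because ``$\theta c$'' denotes $\{x \mid x\,\theta\,c\}$ and ``$\theta$'' denotes $\{(x,y) \mid x\,\theta\,y\}$, every rewrite is a logical equivalence, so $\varphi'' \equiv \varphi' \equiv \varphi$; and (b) every atom of $\varphi''$ is now guarded in the sense of the disjunction-free fragment of $\diagrams$ --- the binding atoms trivially, and each new equijoin contains a local attribute $j.\$1$ or $j.\$2$ whose relation $j$ is quantified in the very same scope (hence inside the innermost enclosing negation), regardless of where $r$ and $s$ sit. Therefore $\varphi''$ lies in the fragment that $\diagrams$ can draw, and the existing $\diagrams$ construction yields a diagram $D$ in which every built-in relation is rendered as a UML box labelled by its descriptive name with columns $\$1$ (and $\$2$).

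For the back-translation I would run the known $\diagrams$-to-$\TRC$ reading on $D$ to recover $\varphi''$, then recognize each built-in relation occurrence by its reserved name and collapse it: a variable $j$ bound to ``$\theta c$'' sits in a single equijoin $r.A \equal j.\$1$ that becomes ``$r.A\,\theta\,c$'' in the scope of $j$; a variable $j$ bound to ``$\theta$'' sits in the pair $r.A \equal j.\$1 \wedge j.\$2 \equal s.B$ that becomes ``$r.A\,\theta\,s.B$'' (the column labels $\$1, \$2$ fix the orientation). This is literally the inverse of the previous step, so the two directions compose to the identity and $D$ has a single well-defined logical interpretation, namely the function computed by $\varphi$. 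For the relational-pattern claim, observe that Step~1 preserves the atom set (\cref{lemma:ENC-TRC}) and that the built-in rewrite is a bijection between the selection/join predicates of $\varphi'$ and the built-in-relation gadgets of $D$ while leaving the input-table binding atoms $r \in R$ ($R \in \vec R$) untouched; since built-in relations are by definition not part of the input schema, they do not enter the signature of $D$, so dissociating $D$ dissociates exactly the same relation variables as dissociating $q$, and --- each gadget being logically equivalent to the predicate it encodes --- the two dissociated queries define the same logical function. Hence $D$ realizes the relational pattern of $q$.

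The technically delicate points are, first, the ``in place'' placement in Step~2: one has to prove that nesting each built-in relation in exactly the negation scope of its predicate is both \emph{necessary} (any other placement moves $j$ across a $\neg$ and changes the meaning, as \cref{ex:builtin relations} shows) and \emph{sufficient} for guardedness of \emph{every} resulting atom, including join predicates whose two arguments are quantified at different nesting depths; and second, making the ``same atoms $\Rightarrow$ same pattern'' step precise, which rests on the observation that the relational pattern only sees input-relation variables and the semantics of the predicates among them, so the bookkeeping must carefully separate built-in relation variables from the signature. A minor additional point is that $\diagrams$ was originally stated for \emph{safe} $\TRC$ whereas $\varphi''$ need not be safe; one checks that the nested-box syntax and its reading rule never invoke the safety conditions and so carry over verbatim to the well-formed guarded fragment used here.
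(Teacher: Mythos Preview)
Your proposal is correct and follows essentially the same route as the paper: factor through $\TRC^{\neg \exists \wedge}$ via \cref{lemma:ENC-TRC}, replace every join/selection predicate by an existentially quantified built-in relation placed in the predicate's own negation scope, and invert both steps for the back-translation. The paper's appendix spells this out as a direct 4-step forward and 5-step backward construction rather than routing through the existing $\diagrams$ machinery, but the content is the same.

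One small difference in framing is worth noting. You argue that after the rewrite every atom becomes \emph{guarded} in the sense of the non-disjunctive fragment of $\diagrams$ (because each fresh $j.\$1$/$j.\$2$ is local to the scope where $j$ is bound), and then invoke the existing $\diagrams$ construction as a black box. The paper instead gives the translation from scratch and never appeals to guardedness explicitly. Your framing is a clean explanation of \emph{why} built-in relations fix the expressiveness gap, and it makes the dependence on prior results transparent; the paper's self-contained construction avoids having to verify that the $\diagrams$ translation carries over to the (possibly unsafe) well-formed fragment---a point you correctly flag as needing a check. The paper also takes a minor shortcut you do not: equijoins that already share a negation scope with one of their two relations are left untouched rather than rewritten, which reduces clutter but is not needed for correctness.
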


\noindent
The proof uses constructive translations from $\TRC$ to $\diagrams$ with built-in relations and back
that are pattern-preserving. We thus give a solution to the disjunction problem.

\section{\systemx: a backward compatible solution without built-in relations}
\label{sec:real solution}
Our preliminary solution from the previous section solves the disjunction problem.
However, it has arguably two important problems:
(1) We lost the ability to use standard syntactic safety conditions to determine whether a query is domain-independent.
(2) The built-in relations and multiple nestings of negations increase the visual complexity and make the diagrams hard to 
read.\footnote{There is a reason why we have the disjunction operator in logic and natural language although it is not strictly necessary. 
Citing from~\cite{Barker-Plummer:2011} on disjunctions:
``...the fewer connectives we have, the harder it is to understand our sentences.''
}
The two solutions we introduce in this section are conceptually simple:
\Cref{sec:anchor simplifications} substitute the built-in relations with visual formalisms 
proposed in prior literature,
yet keep our rigorous and principled semantics defined earlier (recall \cref{Fig_Anquors}).
\Cref{sec:disjunctions} reintroduces disjunctions 
as (visual) shortcuts 
for our earlier rigorous semantics.
The result is a precisely defined pattern-preserving diagrammatic representation for \emph{any $\TRC$ query}
that allows visual verification of the safety conditions
and that specializes into \diagrams\xspace for the fragment of disjunction-free TRC.

\subsection{Substituting built-in relations}
\label{sec:anchor simplifications}

\subsubsection{Simpler anchors for unary built-in relations}
\label{sec:simpler:anchors}
Unary built-in relations consist of two boxes: the predicate (condition) name (e.g. \fcolorbox{black}{unaryBuiltInPredicate}{\textcolor{white}{$<$4}}) and an attribute box \selPredicateBox{\$1}.
We eliminate this unnecessary
indirection\footnote{This is similar in spirit to Tufte's recommendation to avoid legends if possible:
``labels are placed (directly) on the graphic itself; no legend is required.''~\cite{tufte2001visual}
}
and substitute both boxes with one box containing the condition
(e.g.\ \fcolorbox{black}{unaryBuiltInPredicatesimple}{{$<$4}}).
We thereby also recover visual formalisms from prior proposals, such as
VisualSQL~\cite{DBLP:conf/er/JaakkolaT03}
and VQL~\cite{DBLP:journals/tkde/MohanK93}
(see \cref{Fig_Prior_Disjunctions_4,Fig_Prior_Disjunctions_5}):
a selection is an equijoin between an attribute and a condition 
(e.g.\ 
\selPredicateBox{A}\textemdash{}\fcolorbox{black}{unaryBuiltInPredicatesimple}{{$<$4}}).
That condition still provides an anchor and could be in a deeper nesting than the table.
We call this the ``canonical'' representation.

If the condition is in the same negation scope as the relation, 
then we apply a shortcut that 
fuses the two attributes and thereby
recovers the selection formalism used by $\diagrams$
(e.g.\ \fcolorbox{black}{unaryBuiltInPredicatesimple}{{A$<$4}}).\footnote{We use a slight blue background for selection conditions as enabling feature (\cref{sec:enabling feature}), similiar to the yellow background used by QueryVis~\cite{DanaparamitaG2011:QueryViz}.}
Our formalism is thus backward compatible to $\diagrams$,
yet also allows us to give the condition a separate ``anchor'', which we need to express certain relational patterns.

\subsubsection{Binary built-in relations}
Binary built-in relations consist of three boxes: 
The predicate name (e.g.\ \fcolorbox{black}{binaryBuiltInPredicate}{\textcolor{white}{$<$}}) 
and two attributes connected to the respective relational attributes via equijoins.
We substitute these built-in relations with the symbols that were originally used by $\diagrams$ \emph{as labels} on
directed edges (arrows) (e.g.\ \builtinlabel{$<$}).
The important difference is that we treat the former \emph{labels} now as \emph{anchors} with the full semantic interpretation we developed in the last section (see e.g.\ \cref{Fig_anquors_simplification_4}).
This semantics allows us to explicitly place the anchor in a deeper nesting than either of the relations joined by that comparison predicate
and thereby improve upon the limited pattern expressiveness of \diagrams.

\begin{figure}[t]
    \centering
    \begin{subfigure}[b]{0.13\linewidth}
        \centering
        \includegraphics[scale=0.4]{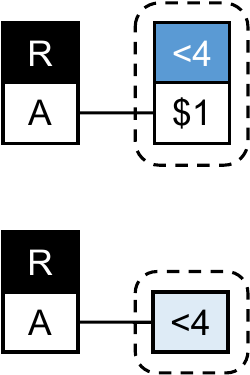}
        \vspace{0.5mm}
        \caption{}
        \label{Fig_anquors_simplification_3}
    \end{subfigure}	
    \hspace{0mm}
    \begin{subfigure}[b]{0.2\linewidth}
        \centering
        \includegraphics[scale=0.4]{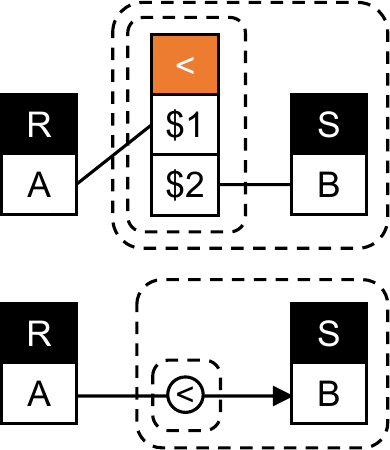}
        \caption{}
        \label{Fig_anquors_simplification_4}
    \end{subfigure}	
    \hspace{2mm}
    \begin{subfigure}[b]{.1\linewidth}
        \centering	
        \includegraphics[scale=0.4]{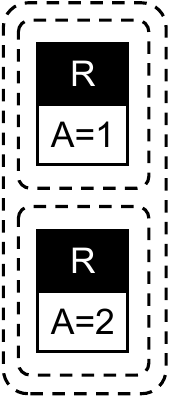}
        \caption{}
        \label{Fig_Disjunctions_b_1}
    \end{subfigure}	
    \hspace{0mm}
    \begin{subfigure}[b]{.11\linewidth}
        \centering	
        \includegraphics[scale=0.4]{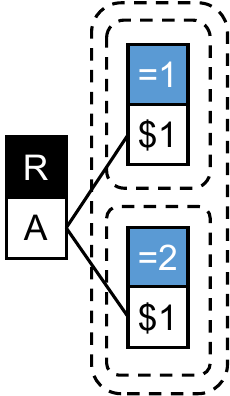}
        \caption{}
        \label{Fig_Disjunctions_b_2}
    \end{subfigure}	
    \hspace{1mm}
    \begin{subfigure}[b]{.11\linewidth}
        \centering	
        \includegraphics[scale=0.4]{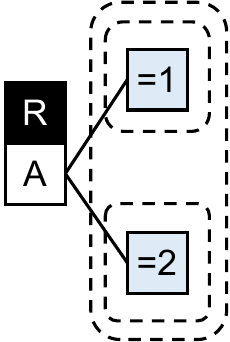}
        \caption{}
        \label{Fig_Disjunctions_b_3}
    \end{subfigure}	
    \hspace{1mm}
    \begin{subfigure}[b]{.12\linewidth}
        \centering	
        \includegraphics[scale=0.4]{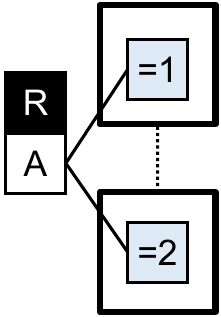}
        \vspace{0.8mm}
        \caption{}
        \label{Fig_Disjunctions_b_4}
    \end{subfigure}	
    \hspace{0mm}
    \begin{subfigure}[b]{.12\linewidth}
        \centering	
        \includegraphics[scale=0.4]{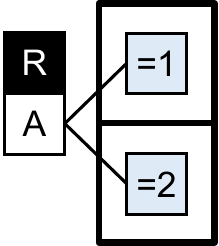}
        \vspace{2.9mm}
        \caption{}
        \label{Fig_Disjunctions_b_5}
    \end{subfigure}	
    \caption{
    \Cref{ex:simplifying builtin}(a-b):
    $\systemx$ are at both the middle and bottom rows,
    $\diagrams$ at the bottom row,
    $\diagrams$ with built-in relations at the top row.
    \Cref{ex:disjunction} (c-g): $\exists r \in R [r.A \equal 1 \vee r.A \equal 2]$: 
    (c): \diagrams.
    (d): $\diagrams$ with built-in relations.
    (e-g): $\systemx$.
    (f-g) show our \emph{visual shortcut} for disjunctions, formally justified with what we refer to as ``DeMorgan-fuse boxes''.}
    \label{Fig_anquors_simplification}
\end{figure}

\begin{example}[Substituting built-in relations]
    \label{ex:simplifying builtin}
Consider the Boolean $\TRC$ query
$\exists r \in R [\h{\neg (r.A \!<\! 4)}]$
shown in the lower row of \cref{Fig_anquors_simplification_3}
as $\systemx$.
The top row shows \diagrams\xspace with built-in relations which correspond to 
$\exists r \in R [\h{\neg (
    \exists \anchor{c} \in \anchor{\textrm{``$<$4''}} [ 
        r.A \equal \anchor{c}.\$1] )}]$.
Next, consider the Boolean query
\begin{align*}
    & \exists r \in R [\neg (
            \exists s \in S [\h{\neg(r.A < S.B)}])] \\
\intertext{
shown as $\systemx$ on the bottom of \cref{Fig_anquors_simplification_4}.
The top shows $\diagrams$ with built-in relations which correspond to 
}
& \exists r \in R [\neg (
        \exists s \in S[ 
            \h{\neg(\exists \anchor{j} \in \anchor{\textrm{``$<$''}} [ 
            r.A = \anchor{j}.\$1 \wedge \anchor{j}.\$2 = s.B])} ])]     
\end{align*}    
\end{example}

\subsection{Visual shortcut for disjunctions}
\label{sec:disjunctions}

As already mentioned earlier, 
frugality in primitive elements has downsides:
The fewer connectives we have, the harder it is to understand our sentences.
Despite negation and conjunction being truth-functionally complete, 
we regularly use disjunctions in logic and natural language.
We next introduce a visual shortcut for disjunction that 
allows us to recover a checkable safety condition and that
generalizes the various (incomplete) approaches for disjunctions we have seen in 
\cref{sec:intro_disjunctions}.

Our key idea is to \emph{keep the formal semantics we have developed so far} (and that solves the disjunction problem), 
but to allow a visual shortcut that we refer to as ``DeMorgan-fuse boxes.''
These boxes allow us to express 
$\neg(\neg(\varphi_1) \wedge \ldots \wedge \neg(\varphi_k))$, $k>2$
with 
$\varphi_1 \vee \ldots \vee \varphi_k$
by substituting nested double negations 
with bold rectangles, optionally connected via dotted lines.

\begin{definition}[DeMorgan-fuse boxes]
    Bold rectangular boxes that are either adjacent to each other or connected via dotted lines are 
    interpreted as if their anchored elements were connected via disjunctions.
    They are interpreted as if each of the boxes were replaced with individual negation scopes and the union of those boxes with another negation scope.
\end{definition}

\begin{example}[Simple disjunction]
	\label{ex:disjunction}
Consider the following disjunction from \cref{sec:intro_disjunctions}:
\begin{align}
	& \exists r \in R [r.A \equal 1 \;\h{\vee}\; r.A \equal 2]  \label{eq:simple-disjunction}
\end{align}
$\diagrams$ need to show two $R$ tables, either with union cells 
(\cref{Fig_Prior_Disjunctions_7})
or with a double negation (\cref{Fig_Disjunctions_b_1}):
\begin{align*}
	& \hspace{7.6mm}\exists r \in R [r.A \equal 1] \;\h{\vee}\; \exists r \in R[r.A \equal 2] \\
	& \hspace{1mm}\h{\neg}(\h{\neg}(\exists r \in R [r.A \equal 1]  \;\h{\vee}\; \exists r \in R[r.A \equal 2])) \\	
	& \h{\neg}(\h{\neg}(\exists r \in R [r.A \equal 1]) \;\h{\wedge}\; \h{\neg}(\exists r \in R[r.A \equal 2])) 
\end{align*}
$\diagrams$ with built-in relations replace the selection predicates with equijoins to two built-in relations (\cref{Fig_Disjunctions_b_2}):
\begin{align*}
    & \exists r \in R [\h{\neg(\neg(} \exists e_1 \!\in\! \textup{``=1''}[r.A \equal e_1.\$1]\h{)} 
        \;\h{\wedge}\; 
        \h{\neg(}\exists e_2 \!\in\! \textup{``=2''}[r.A \equal e_2.\$1] \h{))} ]                        
\end{align*}
$\systemx$ can either represent the statement via De Morgan (\cref{Fig_Disjunctions_b_3}):
\begin{align*}
	& \exists r \in R [\h{\neg(\neg(} r.A \equal 1 \h{)\; \wedge\; \neg(} r.A \equal 2 \h{))}]
\end{align*}
or via DeMorgan-fuse boxes (\cref{Fig_Disjunctions_b_5}), optionally connected via dotted edges (\cref{Fig_Disjunctions_b_4}).
\end{example}

\subsection{$\systemx$ is pattern-complete for $\TRC$ and solves the safety problem}
\label{sec:representationB}

We are ready to state our second of two main results of this paper.

\begin{theorem}[Pattern-isomorphism and safety of $\systemx$]
    \label{th:safety preservation}
    Every \diagram\ with built-in predicates has a pattern-isomorphic representation as $\systemx$ and vice versa.
    At the same time, $\systemx$ preserves the safety conditions of $\TRC$, 
    i.e.\ the syntactic safety conditions can be directly verified from the diagrammatic representation.
\end{theorem}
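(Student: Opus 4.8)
The plan is to prove the two halves separately, each resting on \cref{th:pattern expressiveness} together with the observation that a $\systemx$ diagram is, by construction, what one obtains from a \diagram\ with built-in relations by a short list of reversible, meaning-preserving rewrites: (a)~replace each unary built-in relation ``$\theta c$'' equijoined to an attribute $R.A$ by a condition box ``$\theta c$'' attached to $R.A$ --- optionally fused with $R.A$ into a single selection box when ``$\theta c$'' sits in the same negation scope as $R$ (recovering the selection notation of \diagrams); (b)~replace each binary built-in relation ``$\theta$'' equijoined to $R.A$ and $S.B$ by a $\theta$-anchored directed edge between $R.A$ and $S.B$; and (c)~replace any nested double negation $\neg(\neg(\varphi_1)\wedge\cdots\wedge\neg(\varphi_k))$ by a DeMorgan-fuse box over $\varphi_1,\ldots,\varphi_k$.

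For \textbf{pattern-isomorphism} I would show that each of (a)--(c) is a local AST rewrite that (i)~preserves the logical function --- (a) and (b) because $\exists j\in \text{``}\theta\text{''}[R.A \equal j.\$1 \wedge j.\$2 \equal S.B] \equiv R.A\,\theta\,S.B$ and $\exists c\in \text{``}\theta c\text{''}[R.A \equal c.\$1] \equiv R.A\,\theta\,c$, and (c) because it is DeMorgan; (ii)~touches only tuple variables ranging over built-in relations, hence leaves the signature $\S$ (the relation variables over input tables) and all join and selection predicates intact; and (iii)~is invertible --- a condition box determines a unique built-in relation in a unique negation scope, a $\theta$-anchored edge likewise, and an outer negation box whose entire content is a juxtaposition of inner negation boxes fuses to a unique $\vee$-group. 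Composing with the pattern-preserving two-way translation of \cref{th:pattern expressiveness} then yields, in either direction, a partner diagram with the same input relation variables and the same atoms, hence a pattern-isomorphic one.

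For \textbf{safety} I would first set up the structural dictionary between a $\systemx$ diagram $D$ and the unique (flattened, maximally scoped, double-negation-free) AST of the $\TRC^{\neg \exists \wedge \vee}$ query it represents: atoms $\leftrightarrow$ table boxes, $\theta$-anchored join edges, and condition boxes; $\exists$-nodes $\leftrightarrow$ placements of table boxes; $\neg$-nodes $\leftrightarrow$ dashed negation boxes; $\vee$-nodes $\leftrightarrow$ DeMorgan-fuse boxes; the query node $\leftrightarrow$ the output box $q$; and ``ancestor on the AST path'' $\leftrightarrow$ ``encloses in the diagram''. The crucial point is that a fuse box is the image of a $\vee$-node, not of the $\neg$ it abbreviates, so it does \emph{not} block the base partition; hence the base partition of the AST is precisely the \emph{outermost zone} of $D$, the region enclosed by no dashed negation box. (For a general safe \TRC\ query one first normalizes via \cref{lemma:ENC-TRC}; as noted in \cref{sec:ENC-TRC} the $\forall$- and $\rightarrow$-eliminations act outside the base partition and move no binding predicate and no $\vee$ above a binding predicate, so all four safety conditions are literally unchanged.) Each condition then becomes a topological check on $D$: condition~1, every attribute of the output box $q$ carries either an $\equal$-edge to a column of some table box or a ``$\equal c$'' condition; condition~2, every such binding edge/condition lies in the outermost zone; condition~3, for every fuse box enclosing (transitively) such a binding edge, each component box attaches to $q$, attaches to exactly the same attributes of $q$, and contains (draws inside itself) every table box it references other than $q$ --- the spatial reading of ``all disjuncts share one free tuple variable with the same attributes''; and condition~4, whenever a binding edge for $q.A$ is enclosed by no fuse box, $q.A$ carries no further predicate edge. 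Each inspects only containment and adjacency of drawn elements, which is exactly the assertion.

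The conceptual heart --- and the step I expect to be the main obstacle --- is making the fuse-box part of this dictionary watertight: a fuse box abbreviates the three-level $\neg(\neg(\varphi_1)\wedge\cdots\wedge\neg(\varphi_k))$ pattern, so one must check that (i)~treating it as a single $\vee$-node is consistent with AST uniqueness (nested fuse boxes correspond to the flattened polyadic $\vee$, a fuse box inside a negation box to a $\vee$ below a $\neg$); (ii)~the diagrammatic enclosure relation still matches the AST path relation used in conditions~3 and~4 when fuse boxes are nested inside one another or inside negation boxes; and (iii)~the semantic ``free tuple variable of a disjunct'' is faithfully rendered by the spatial ``table box drawn outside this component box but attached to something inside it''. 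Verifying that the optional selection-fusing shortcut, and the appearance of ``$q.A\equal c$'' binding predicates attached to the output box, never change a safety verdict is then routine bookkeeping.
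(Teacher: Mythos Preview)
Your proposal is correct and follows essentially the same route as the paper. The paper's proof (sketched in the main text and detailed in the appendix) likewise proceeds by giving pattern- and atom-preserving translations between $\TRC^{\neg\exists\wedge\vee}$ and $\systemx$: it adapts the earlier translation of \cref{th:pattern expressiveness} by (i)~using the visual shortcuts of \cref{sec:anchor simplifications} in place of built-in relations (your rewrites~(a) and~(b)) and (ii)~mapping each $\vee$-node in the AST to a DeMorgan-fuse box rather than to a triple of nested negation scopes (your rewrite~(c)); the reverse translation reads off negation scopes and disjunction boxes into the $\TRC^{\neg\exists\wedge\vee}$ nesting structure. Your observation that fuse boxes correspond to $\vee$-nodes and therefore do not block the base partition is exactly the point the paper relies on when it says the four safety conditions ``can be immediately read and verified'' from the diagram; you simply spell out, more explicitly than the paper does, what each of the four conditions looks like as a containment/adjacency check on the drawn elements.
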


\noindent
The proof uses constructive translations from $\TRC$ to $\systemx$ and back
that are pattern-preserving and safety-preserving.
$\systemx$ thus solves both the disjunction and the safety problem.
Recall that $\TRC^{\neg \exists \wedge \vee}$ preserves the relational pattern and the safety conditions.
Because our translation from $\TRC^{\neg \exists \wedge \vee}$ preserves 
the negation scopes, 
the disjunctions,
and all atoms from the AST, the 4 safety conditions from
\cref{sec:safety} can be immediately read and verified from a $\systemx$ diagram.
\Cref{Fig_my_safety_example_Diagram_example} discusses our running example.

\begin{figure}[t]
    \begin{subfigure}[t]{.65\linewidth}
        \begin{example}[\cref{ex:safety} continued]
            \label{ex:safety_continued}        
            The $\TRC$ query from \cref{ex:safety} is equivalent to
            the following safe $\TRC^{\neg \exists \wedge \vee}$ fragment:
            \begin{align}
                &\{ q(A,B) \mid 
                    (\hblue{q.A=0} \wedge (\exists r\in R [\hblue{q.B=r.B}] \,\horange{\vee}\, q.B=1)) 
                    \hspace{10mm}\notag \\
                &\hspace{2mm}
                    \horange{\vee} \,
                    (\exists r \in R [\hblue{q.A=r.A} \wedge \hblue{q.B=r.B} \,\wedge \label{query:safety_continued}\\
                &\hspace{4mm}
                    \horange{\neg(\exists s}\in S[\horange{\neg (\exists r_2} \in R[r.A=r_2.A \wedge r_2.B=s.B ]\horange{)}]\horange{)}]) \}    \notag
            \end{align}
            
            \noindent
            The figure to the right shows the
            $\systemx$ representation for this query, which also corresponds to the AST from
            \cref{ex:safety} and
            \cref{Fig_my_safety_example_AST}.
            Notice how the 4 safety conditions can be applied directly on this diagram to verify that this query is safe.
        \end{example}
\end{subfigure}
\hspace{3mm}
\begin{subfigure}[t]{.3\linewidth}
    \centering
    \vspace{2mm}
    \includegraphics[scale=0.4]{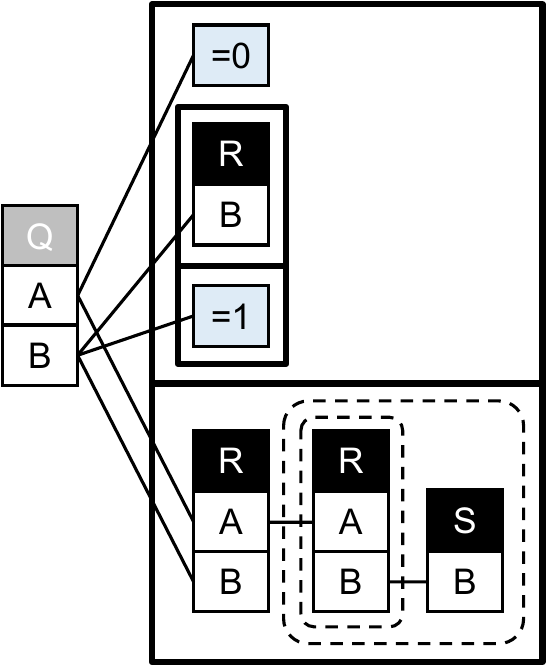}
\end{subfigure}
\caption{\Cref{ex:safety_continued} (\cref{ex:safety} continued).}
\label{Fig_my_safety_example_Diagram_example}
\end{figure}

\subsubsection{Size of representation}
\label{sec:size}
$\systemx$ has the same asymptotic size as $\TRC$. 
The reason is that the leaves of the AST (and the negation and disjunction scopes) get directly mapped to objects in the diagram.
At the same time, $\systemx$ is an exponentially smaller representation of $\TRC$ than \diagrams.
This is because $\diagrams$ require CNF formulas to be first transformed into DNF (i.e.\ to have disjunctions or unions as the root, which requires an exponential blow-up),
while our approach leaves disjunctions as \emph{inner operators} in the AST.

\begin{proposition}[Size preservation of $\systemx$]
    \label{prop:exponentialsize}
    $\systemx$ has the same asymptotic size as $\TRC$ and can be exponentially smaller than \diagrams.
\end{proposition}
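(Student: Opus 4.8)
\emph{Overview.} The plan is to prove the two claims separately: first that $|\systemx(q)| = \Theta(|q|)$ for every well-formed $\TRC$ query $q$ (measuring size by the number of AST nodes), and then to exhibit a family of queries on which $\systemx$ is linear while any pattern-preserving $\diagrams$ representation is exponential.

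\emph{Part 1 (same asymptotic size as $\TRC$).} I would prove $|\systemx(q)| = \Theta(|q|)$ by inspecting the constructive translations already behind \cref{th:safety preservation}. For the upper bound $O(|q|)$ I proceed in two stages. First, by \cref{lemma:ENC-TRC}, $q$ has a logically- and pattern-equivalent $\TRC^{\neg\exists\wedge\vee}$ formula $q'$ with the \emph{same set of atoms}; the rewrites producing $q'$ (eliminating $\forall$ via $\neg\exists\neg$, eliminating $\rightarrow$ via $\neg(\cdot\wedge\neg\cdot)$, pushing negations inward, cancelling double negations, flattening $\wedge,\vee$ and quantifier blocks) attach only a bounded number of extra $\neg/\wedge/\vee$ nodes per existing AST node and never duplicate a subtree, so $|q'|=O(|q|)$. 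Second, the translation of $q'$ to $\systemx$ is node-by-node along the AST: each atom becomes a bounded constellation of primitives (a table box with its attribute boxes, an anchor/condition box, and $O(1)$ edges), each negation scope becomes one dashed rectangle, each $\vee$ node becomes one group of DeMorgan-fuse boxes, and each quantifier/conjunction/query node contributes $O(1)$ further elements; summing over the $O(|q|)$ AST nodes gives $O(|q|)$. For the matching $\Omega(|q|)$ bound I would run the reverse translation of \cref{th:safety preservation}, which reads each diagram primitive and emits a bounded amount of $\TRC$ syntax, so $|q|=O(|\systemx(q)|)$; since the representation is atom-preserving the two sizes are within a constant factor.

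\emph{Part 2 (exponential gap to $\diagrams$).} I would use the CNF family over a $2n$-ary relation $R(A_1,\ldots,A_n,B_1,\ldots,B_n)$,
\[
q_n \define \exists r\in R\Bigl[\,\bigwedge_{i=1}^{n}\bigl(r.A_i \equal 0 \;\vee\; r.B_i \equal 0\bigr)\Bigr],
\]
which is a closed (hence safe), well-formed $\TRC^{\neg\exists\wedge\vee}$ query of size $O(n)$ whose signature is the single relation variable $r$. By Part 1, $\systemx$ represents $q_n$ in size $O(n)$: one table box $R$ with $2n$ attribute boxes and $n$ pairs of DeMorgan-fuse boxes, the $i$-th pair anchoring the conditions $\equal 0$ on attributes $A_i$ and $B_i$, all in the base partition. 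For $\diagrams$ I would invoke their structural restriction (from the cited $\diagrams$ work, recalled in \cref{sec:intro_disjunctions}) that a disjunction is expressible only as a root-level union of \emph{disjunction-free}, guarded diagrams: in $q_n$ each disjunction is between two subformulas whose only free variable is $r$, bound \emph{outside} the disjunction, so no DeMorgan rewrite $\varphi_1\vee\varphi_2 = \neg(\neg\varphi_1\wedge\neg\varphi_2)$ yields a guarded diagram without introducing a fresh copy of $R$ (joined to $r$ on all $2n$ attributes), i.e.\ without changing the relational pattern. Hence a pattern-preserving $\diagrams$ representation of $q_n$ must realise its DNF $\bigvee_{v\in\{0,1\}^n}\exists r\in R[\bigwedge_i \gamma_{i,v_i}]$ with $\gamma_{i,0}=(r.A_i\equal 0)$, $\gamma_{i,1}=(r.B_i\equal 0)$, contributing one union cell per term. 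Since the monotone Boolean function $\bigwedge_i(a_i\vee b_i)$ has exactly $2^n$ prime implicants, all of size $n$, and each union cell — being a fresh-relvar-free guarded diagram over the single variable $r$, hence (after simplification, using the guard condition again to exclude bare negated predicates) a positive conjunction of selection predicates on $r$ — is an implicant, no two terms can be merged and all $2^n$ cells are needed, so the $\diagrams$ representation has size $2^{\Omega(n)}$. Combining with $|\systemx(q_n)|=O(n)$ gives the exponential separation.

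\emph{Main obstacle.} The delicate step is the $\diagrams$ lower bound and, in particular, pinning down in what sense it is unavoidable. Three things need care: (i) formalising ``$\diagrams$ express disjunction only as a root-level union of guarded diagrams'' precisely enough to reason with, via the non-disjunctive fragment and its guard condition (\cref{sec:non-dijunctive-fragment}); (ii) showing that for $q_n$ every guard-satisfying rewrite of a clause disjunction forces a new relation variable, so the comparison is honest only once we require pattern preservation — without that requirement a $\neg\neg$-with-duplicated-bindings rewrite yields a merely polynomial but pattern-changing diagram, which is exactly the phenomenon already noted for $\diagrams$ and disjunctions; and (iii) the irredundancy of the $2^n$-term DNF, which I would get from the prime-implicant count above together with a minterm argument (the instance where $R$ is the single tuple that is $0$ on exactly the coordinates selected by $v$ and non-$0$ elsewhere satisfies $q_n$ and selects term $v$ and no other). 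The remaining parts — the constant-factor bookkeeping of Part 1 — are routine.
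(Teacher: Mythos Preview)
Your proposal is correct and takes essentially the same approach as the paper: the same linear-size argument via the atom-preserving translations, and the same CNF-over-a-single-relation-variable family for the separation (the paper uses $\exists r\in R[\bigwedge_i (r.A_i\equal c_{i1}\vee r.A_i\equal c_{i2})]$, yours uses distinct attribute pairs with a shared constant, which is cosmetically different but identical in spirit). The one notable difference is rigor on the lower bound: the paper simply asserts that $\diagrams$ ``need to transform the query into DNF'' and counts the $2^k$ clauses, whereas you actually argue why no sub-DNF pattern-preserving $\diagrams$ representation exists, via the guard condition forcing fresh relation variables under any DeMorgan rewrite and via the prime-implicant irredundancy of the DNF---this is a genuine strengthening of the paper's somewhat informal treatment of that step.
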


\section{Generality of \systemx\ and enabling features}
\label{sec:generality}

We show that $\systemx$
unifies and generalizes prior approaches for disjunction (\cref{sec:generalizing prior solutions}),
justify its perceptual choices (\cref{sec:perceptual choices}),
and give solutions to prior challenges (\cref{sec:prior challenges}).

\subsection{Unification of prior approaches}
\label{sec:generalizing prior solutions}
We formalized our disjunction boxes as a visual shortcut for DeMorgan which allow us to recover safety conditions for $\TRC$.\footnote{Similarly, the expansion by built-in relations gives our diagrams formal and precise semantics, and we deploy the visual symbols as a shortcut for that formal semantics.}
They are also reminiscent of the ``union cells'' used by \diagrams\xspace
and prior box-based approaches going back all the way to Pierce (\cref{sec:intro_disjunctions}).
We next show that our formalism
actually \emph{unifies and generalizes the major diagrammatic approaches for disjunction} that we have surveyed earlier 
(thus all except text-based and form-based in \cref{Fig_Prior_Disjunctions})
and that it recovers each of them as a special case.

(1) \emph{DeMorgan-based disjunctions}: The semantics of our approach fundamentally builds upon DeMorgan and it is thus an instance by replacing the shortcut with its semantics.

(2) \emph{Box-based disjunctions}: our visual shortcut is similar to Peirce~\cite{peirce:1933}, Shin~\cite{shin_1995}, compound spider diagrams~\cite{HowseST2005:SpiderDiagrams}, and the union cells for $\diagrams$~\cite{DBLP:journals/pacmmod/GatterbauerD24}.
However, ($i$) our boxes are not just a union box, but it can be used in any nesting depth of the AST,
which gives us an exponentially more succinct representation.
Furthermore, ($ii$) our boxes are semantically justified via the DeMorgan-fuse as merely a visual shortcut for the actual semantics.

(3) \emph{Edge-based disjunctions}:
Similar to Shin~\cite{shin_1995}, we allow disjunction boxes to not merely be adjacent but also connected via dotted lines. 
We have two reasons:
($i$) Lines connecting the boxes give more flexibility in the placement of the boxes and
allow disjuncts to be placed in non-adjacent areas.\footnote{We do not discuss layout algorithms, and merely give a topological definition. It is easy to construct examples where a requirement of having anchors placed in adjacent boxes leads to overly strict constraints on the placement that would require extremely distorted join edges.}
($ii$) Optional lines also allow us to recover all prior edge-based solutions:
In the grammar of node-link diagrams~\cite{Ware:2020:InfoVis}
``\emph{line marks}'' connect ``\emph{point marks}'' (and not other line marks).
Edge-based disjunctions, however, draw specially highlighted or annotated lines between predicates, which are lines themselves
(see \cref{ex:disjunctions as edges} in \cref{app:recover edge-based disjunctions}).
As a basic visual construct, lines make a connection between entities~\cite{CardMackinleyShneiderman:1999:Readings}.
Thus, even if not shown, the lines connected via a disjunction line require some anchor points on the predicate edges.
Our disjunction boxes, even if infinitesimally small, give them these anchors and formal semantics via DeMorgan-fuse boxes.

(4) \emph{Form-based disjunctions}: Form-based approaches of disjunction like QBE~\cite{DBLP:journals/ibmsj/Zloof77} are not diagrammatic.
However, vertically arranged boxes 
(alternative choices for the same predicate are shown in different rows) 
have become a familiar visual pattern.
In order to mirror this familiar syntax 
we recommend showing disjunction boxes vertically aligned (as ``enabling feature'', \cref{sec:enabling feature}).

\subsection{Perceptual choices and Peirce shading }
\label{sec:perceptual choices}

\Cref{sec:perceptual justifications} justifies some of enabling features of the for \systemx, i.e.\ its perceptual choices.
We also show how to leverage an idea of alternative shading by Peirce~\cite{peirce:1933} that allows multiple readings of a given diagram  and thereby even recover universal quantifiers.

\subsection{Solutions to prior challenging queries}
\label{sec:prior challenges}

\Cref{sec:RDconclusions} shows our solution to prior challenges mentioned in \cref{sec:introduction}.

\section{100\% coverage of textbook benchmark}
\label{sec:pattern coverage}

\begin{wrapfigure}[9]{r}{0.5\textwidth}
        \centering
        \vspace{-7mm}
        \hspace{-5mm}
        \includegraphics[scale=0.23]{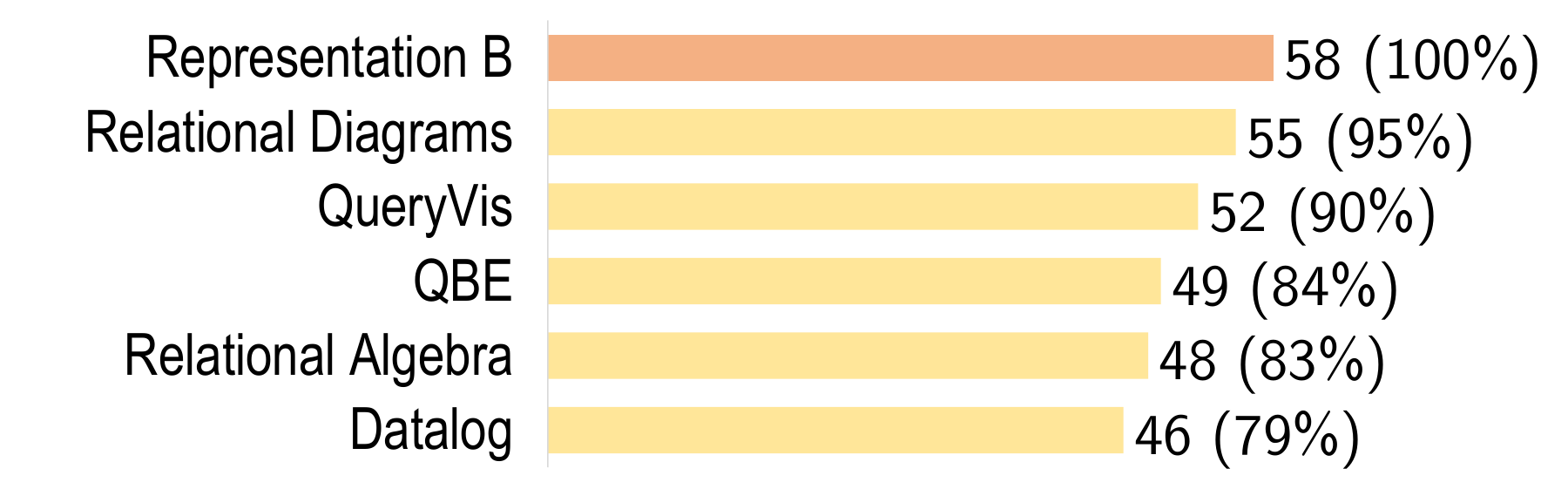}
        \caption{\cref{sec:pattern coverage}: Fraction among 58 queries from 5 textbooks with pattern-isomorphic representations in listed representations. $\systemx$ is the first diagrammatic representation
        to achieve 100\% coverage.}
        \label{Fig_Textbook_Analysis}
\end{wrapfigure}

The authors of \diagrams~\cite{DBLP:journals/pacmmod/GatterbauerD24}
gathered 58 queries from 5 popular database textbooks
\cite{cowbook:2002,DBLP:books/mg/SKS20,Elmasri:dq,date2004introduction,ConnollyBegg:2015}
and made them available on OSF.\footnote{Textbook benchmark: \url{https://osf.io/u7c4z/}. 
We noticed a slight discrepancy in the numbers and fixed the counts.}
We refer to that set simply as ``the textbook benchmark.''
They evaluated the pattern expressiveness of various text-based and diagram-based languages (we replicate their numbers) 
and showed that $\diagrams$ covered 95\% (55/58) of the queries in that benchmark.
Our approach is pattern-complete for $\TRC$ and
thus achieves 100\% pattern coverage (\cref{Fig_Textbook_Analysis}).
\Cref{sec:pattern coverage appendix} lists the 3 queries that $\diagrams$ cannot pattern-represent and shows their pattern-isomorphic representation in $\systemx$.

\section{Conclusion}
We derived $\systemx$,
a diagrammatic representation system that can represent any $\TRC$ query 
without changing the table signature and thus solves the disjunction problem.
We arrived at this solution by
first replacing join and selection predicates with relations 
and then defining the semantics based on placing relations in nested negation scopes and interpreting juxtaposition as conjunction.
We then hid the details behind existing visual formalisms
while redefining their semantics in the more rigorous relation-based interpretation
(recall \cref{Fig_Anquors}).
We then defined a box-based visual shortcut that 
can encode safety condition
and \emph{unifies, generalizes, and overcomes the shortcomings} of the 3 major prior graphical approaches for disjunctions.
It did this by showing that disjunction boxes (originally proposed by Peirce) 
can be pushed from the root into the branches of an AST representation of queries,
while keeping their rigorous semantic interpretation.

Solving the disjunction problem 
is important 
since SQL and all relational query languages 
have a firm basis in mathematical logic.
Any future visual query representation that can handle additional functionalities from SQL
(including aggregates and grouping)
needs to also incorporate a solution to the more general and longer-studied disjunction problem.

\begin{acks}
    This work was supported in part by
    the National Science Foundation (NSF) under award IIS-1762268.
\end{acks}

\clearpage
\clearpage

\bibliographystyle{ACM-Reference-Format}

\bibliography{queryvis-disjunction.bib}

\clearpage
\appendix
\section{Related work: Discussion of original diagrams from prior work (\cref{sec:intro_disjunctions})}
\label{app:Original_Drawings}

This section extends \cref{sec:intro_disjunctions} 
and gives a more detailed account of prior approaches for representing disjunctions.
It notably includes screenshots from the original literature.

\subsection{Text-based disjunctions}
\label{sec:text-based}
QBE~\cite{DBLP:journals/ibmsj/Zloof77} introduced the use of \emph{condition boxes}. 
Those are non-visual representations of Boolean conditions. See \cref{Fig_original_disjunctions_QBE2} for an example:
``Print the names of employees whose salary is between \$10000 and \$15000, provided it is not \$13000... Figure 24 illustrates the formulation of the AND operation using a condition box.''
We consider text-based representations as non-diagrammatic representations.

Dataplay~\cite{DBLP:conf/uist/AbouziedHS12} adapted a text-based style for disjunctions. 
\Cref{Fig_original_disjunctions_Dataplay} shows part of a query ``which ﬁnds students who are in the CS department, and took any of CS11, CS16 or CS18, and got at least one A in any course.''

\begin{figure}[h]
\centering	 
\begin{subfigure}[b]{.35\linewidth}
    \centering
\includegraphics[scale=0.29]{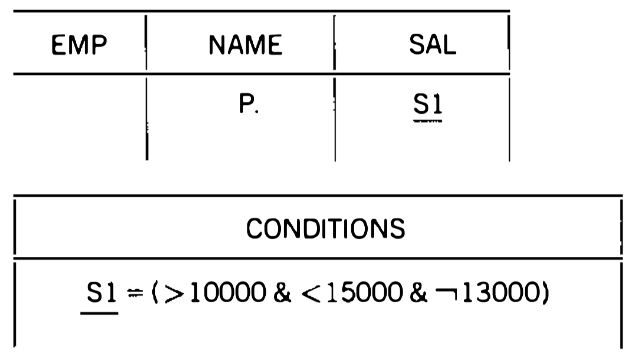}
\caption{\cite[Fig.~24]{DBLP:journals/ibmsj/Zloof77}}
\label{Fig_original_disjunctions_QBE2}
\end{subfigure}	
\hspace{1mm}
\begin{subfigure}[b]{.6\linewidth}
    \centering
\includegraphics[scale=0.36]{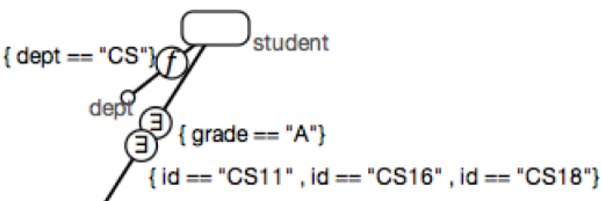}
\caption{\cite[Fig.~2]{DBLP:conf/uist/AbouziedHS12}}
\label{Fig_original_disjunctions_Dataplay}
\end{subfigure}	
\caption{\Cref{sec:text-based} (Option 1): Text-based disjunctions.}
\label{Fig_original_disjunctions_text}
\end{figure}

\subsection{Vertical form-based disjunctions}
\label{sec:form-based}
QBE~\cite{DBLP:journals/ibmsj/Zloof77} also introduces the ability to specify disjunctions by filling out two separate rows with alternative information. \Cref{Fig_original_disjunctions_QBE1} shows a query with a disjunctive filter:
``Print the names of employees whose salary is either \$10000 or \$13000 or \$16000.
This is illustrated in Figure 23. Different example elements are used in each row, so that the three lines express independent queries. The output is the union of the three sets of answers.''

SQLVis~\cite{DBLP:conf/vl/MiedemaF21} similarly provides a form-based interface for specifying conditions with disjunctions specified in separate rows.
\cref{Fig_original_disjunctions_SQLVis} shows a query selecting all attributes from customers whose city is either Amsterdam or Utrecht.

Datalog~\cite{DBLP:journals/tkde/CeriGT89} also expresses disjunctions (or unions) with repeated rules. 
Each rule written in one row, the union can also be interpreted as vertical disjunction:

\begin{figure}[h]
\centering	
\begin{subfigure}[b]{.31\linewidth}
    \centering
\includegraphics[scale=0.29]{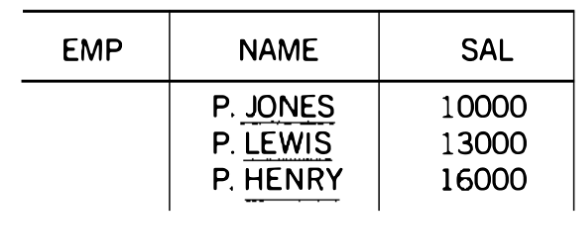}
\caption{\cite[Fig.~23]{DBLP:journals/ibmsj/Zloof77}}
\label{Fig_original_disjunctions_QBE1}
\end{subfigure}	
\hspace{2mm}
\begin{subfigure}[b]{.31\linewidth}
\centering
\includegraphics[scale=0.25]{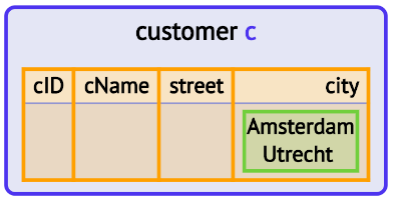}
\caption{\cite{SQLvis}}
\label{Fig_original_disjunctions_SQLVis}
\end{subfigure}	
\begin{subfigure}[b]{.31\linewidth}
\centering
\begin{align*}
    & Q \datarule R(x), x \equal 1 \\
    & Q \datarule R(x), x \equal 2 
\end{align*}   
\caption{Datalog}
\label{Fig_Datalog}
\end{subfigure}	
\caption{\Cref{sec:form-based} (Option 2): (Vertical) form-based disjunctions.}
\label{Fig_original_disjunctions_verticalform}
\end{figure}

\subsection{Edge-based disjunctions}
\label{sec:edge-based}

\begin{figure}[t]
    \centering	
    \begin{subfigure}[b]{.16\linewidth}
    \centering	
    \includegraphics[scale=0.22]{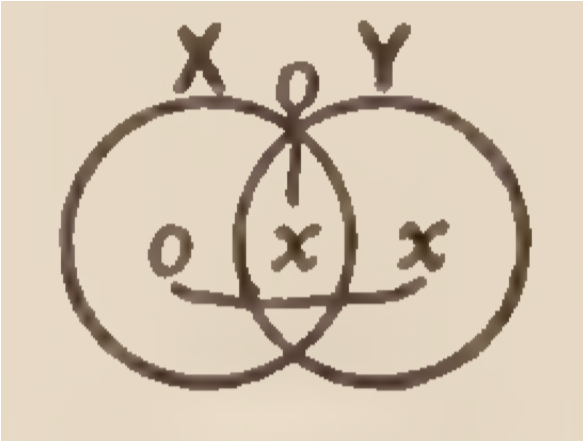}
    \caption{\cite[Fig.~30]{peirce:1933}}
    \label{Fig_original_disjunctions_VennPeirce}
\end{subfigure}	
\hspace{2mm}
\begin{subfigure}[b]{.18\linewidth}
    \centering	
    \includegraphics[scale=0.25]{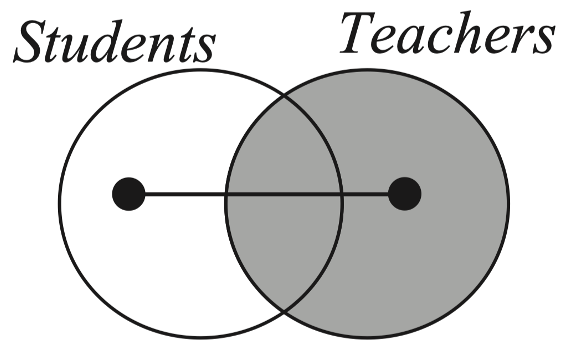}
    \caption{\cite[Fig.~9$d_3$]{DBLP:conf/iccs/Howse08}}
    \label{Fig_original_disjunctions_Spider}
\end{subfigure}	
\hspace{1mm}    
\begin{subfigure}[b]{.32\linewidth}
    \centering	
    \includegraphics[scale=0.44]{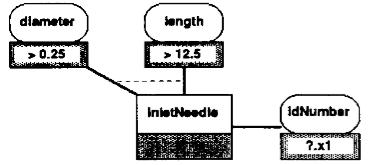}
    \vspace{-1mm}
    \caption{\cite[Fig 6(b)]{DBLP:journals/tkde/MohanK93}}
    \label{Fig_original_disjunctions_Mohan}
\end{subfigure}	
\hspace{1mm}    
\begin{subfigure}[b]{.23\linewidth}
    \centering	
    \includegraphics[scale=0.12]{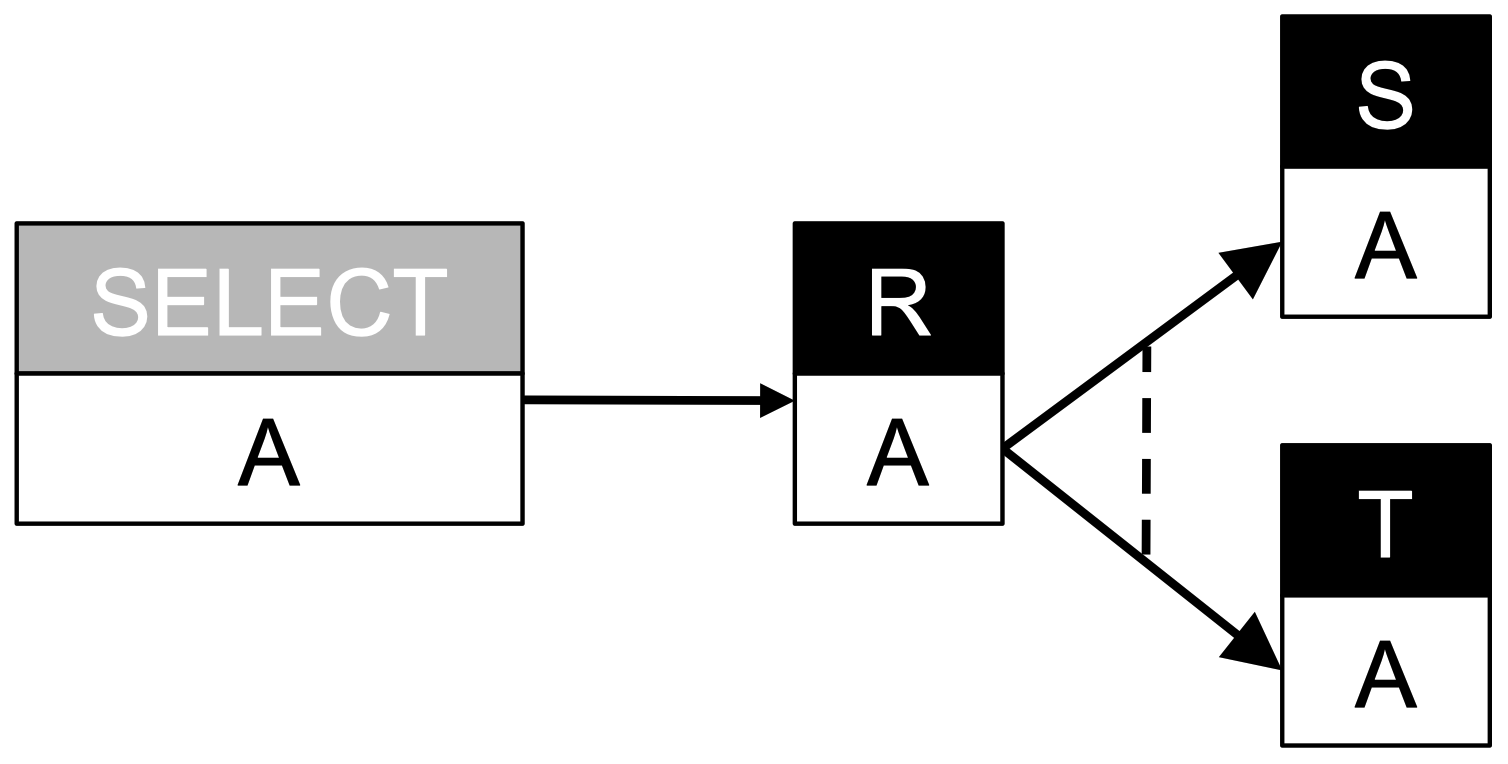}
    \caption{\cite{QV:2011:Slides}}
    \label{Fig_original_disjunctions_QueryVizSlide}
\end{subfigure}	
\hspace{50mm}    
\begin{subfigure}[b]{.47\linewidth}
    \vspace{1mm}
    \centering
    \includegraphics[scale=0.26]{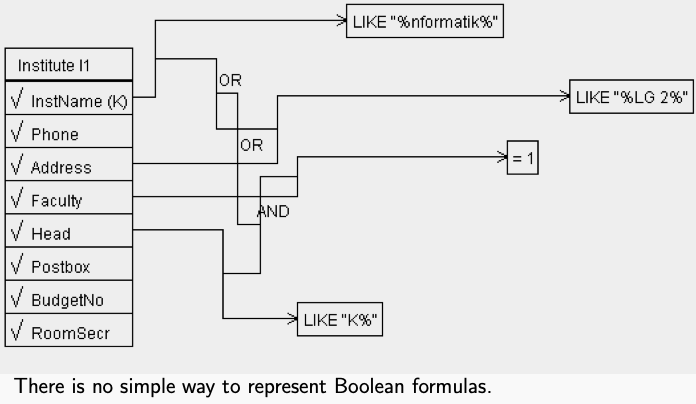}
    \caption{\cite{Thalheim:2013:Slides}}
    \label{Fig_original_disjunctions_Thalheim2}
\end{subfigure}	
\hspace{8mm}
\begin{subfigure}[b]{.41\linewidth}
    \centering
    \includegraphics[scale=0.36]{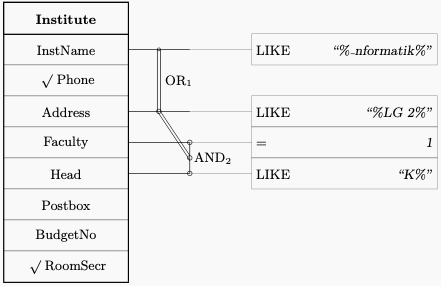}
    \vspace{1mm}
    \caption{\cite{Thalheim:2007:Slides}}
    \label{Fig_original_disjunctions_Thalheim1}
\end{subfigure}	
\begin{subfigure}[b]{.99\linewidth}
    \vspace{1mm}
    \centering
    \includegraphics[scale=0.38]{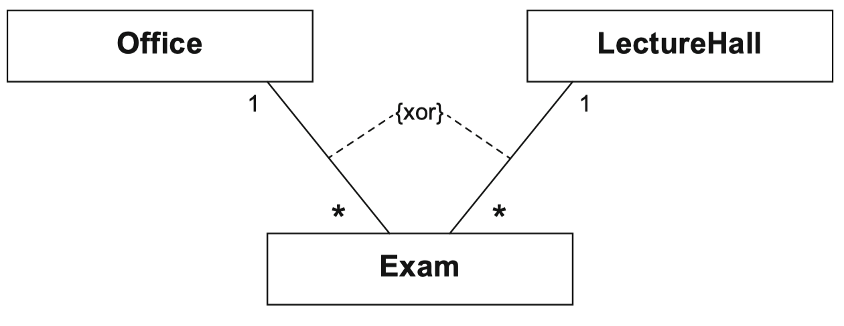}
    \vspace{-1mm}
    \caption{\cite{DBLP:series/utcs/SeidlSHK15}}
    \label{Fig_original_disjunctions_UML_Kappel}
\end{subfigure}	
\caption{\Cref{sec:edge-based} (Option 3): Edge-based disjunctions.}
\label{Fig_original_disjunctions_lines}
\end{figure}

The first use of edges connecting anchors of statements as syntactic devices to express disjunctive information that we found is by Charles Sanders Peirce around 1896~\cite{peirce:1933} 
in his work on extending the at the time recently introduced Venn diagrams~\cite{venn:1880}.
Peirce used "O" markers to show that a partition is empty (is false) and
"X" markers to show that a partition contains at least one member (is true).
To express disjunctions
he placed a line between any sequence of such markers meaning that at least one of these statements must be true (\cref{Fig_original_disjunctions_VennPeirce}).
Peirce writes~\cite[Paragraphs 4.359 and 4.360]{peirce:1933}: 
``Suppose, then, that signs in different compartments, if disconnected, are to be taken conjunctively, and, if connected, disjunctively, or vice versa...
Let this rule then be adopted: Connected assertions are made alternatively, but disconnected ones independently, i.e., copulatively.''
\Cref{Fig_original_disjunctions_VennPeirce} is the first figure in the literature that we found 
(though there may exist earlier ones that Peirce got inspired from and did not acknowledge or did not know of) 
that uses an edge to express disjunctions. The accompanying footnote reads:
``Either all $X$ is $Y$ or some $\neg X$ is $Y$, and some $X$ is $Y$ or all $\neg X$ is $Y$.”

Spider diagrams~\cite{HowseST2005:SpiderDiagrams} extend Euler diagrams~\cite{Euler:1802} with a variant of ``$X$-sequences'' (Peirce's syntactic devices for disjunction limited to existential information)
that are expressively equivalent to first-order monadic logic with equality.
The spider diagram in \cref{Fig_original_disjunctions_Spider} ``asserts that there is an element that is a student or a teacher but not both, and there are no other teachers.''

This idea to connect disjunctive information via lines of various forms was taken up repeatedly in various formalisms for visual query languages in the database community.
Mohan and Kashyap~\cite[Fig 6(b)]{DBLP:journals/tkde/MohanK93} propose VQL and write
``When two or more attributes are OR'ed in this fashion, it shows up on the visual query representation as a dashed line between those attributes as shown in Fig. 6(b).''
\cref{Fig_original_disjunctions_Mohan} shows the representation for ``Q3: Find the idNumbers of all the InletNeedles that have either diameter greater than 0.25 or have length greater than 12.5.''

In a presentation by the authors of QueryVis~\cite{DanaparamitaG2011:QueryViz}, 
we found \cref{Fig_original_disjunctions_QueryVizSlide} that uses similar visual notation. 
The diagram is accompanied by two \SQL statements, which translated into $\TRC$ would read as:
\begin{align*}
\phantom{=} 
&\{ q(A) \mid 
    \exists r \in R, s \in S [q.A \equal r.A \wedge (r.A = s.A \vee \exists t \in T
    [t.A \equal r.A])]\}    \\
&\{ q(A) \mid 
    \exists r \in R, s \in S, t \in T [q.A \equal r.A \wedge (r.A = s.A \vee t.A \equal r.A)]\}
\end{align*}
Based on our formalism, only the second interpretation is valid.

UML class diagrams use a dashed line to express an XOR constraint (exclusive or) that an object of class A is to be associated with an object of class B or an object of class C but not with both. 
\Cref{Fig_original_disjunctions_UML_Kappel} shows that an exam can take place either in an office or in a lecture hall but not in both~\cite[Fig.~4.14]{DBLP:series/utcs/SeidlSHK15}.

All of the prior approaches can only illustrate disjunctions for simple filters within the same table. 
This approach does not generalize to more complicated statements involving conjunctions and disjunctions.
In a presentation of VisualSQL~\cite{DBLP:conf/er/JaakkolaT03}, the main author Thalheim writes ``There is no simple way to represent Boolean formulas''~\cite{Thalheim:2013:Slides} and lists a challenging example shown in \cref{Fig_original_disjunctions_Thalheim2}.
The diagram returns Institutes where
InstName LIKE "\%nformatik\%"
or 
Address LIKE "\%LG 2\%"
or
(Faculty = 1
AND
Head LIKE "K\%").
In a different presentation, the same query is shown as~\cref{Fig_original_disjunctions_Thalheim1}.
Notice here how the precedence between the nested Boolean operators is indicated with numbered subscripts.

\subsection{Box-based disjunctions}
\label{sec:box-based}
In his chapter ``Of Euler's Diagrams,'' Peirce describes the difficulty of displaying disjunctions (especially for DNF) with line-based disjunctions~\cite[Paragraph 4.365]{peirce:1933}:
``It is only disjunctions of conjunctions that cause some inconvenience; such as “Either some A is B while everything is either A or B, or else All A is B while some B is not A.”’ Even here there is no serious difficulty. Fig.\ 59 
(\cref{Fig_original_disjunctions_VennPeirce2})
expresses this proposition. It is merely that there is a greater complexity in the expression than is essential to the meaning. There is, however, a very easy and very useful way of Fig.\ 59 avoiding this. It is to draw an Euler's Diagram of Euler's Diagrams each surrounded by a circle to represent its Universe of Hypothesis. There will be no need of connecting lines in the enclosing diagram, it being understood that its compartments contain the several possible cases. Thus, Fig.\ 60 
(\cref{Fig_original_disjunctions_VennPeirce3})
expresses the same proposition as Fig.\ 59.''
He thus proposed an alternative solution that we refer to as \emph{box-based disjunctions}:
He put Venn diagrams into rectangular boxes and interpreted each Venn diagram in a box as a disjunct.
Intuitively, he puts certain disjunctions into DNF and then describes them a as a union of expressions without disjunctions.
\Cref{Fig_original_disjunctions_VennPeirce3} is the first such figure we found.

\begin{figure}[t]
\centering	
\begin{subfigure}[b]{.20\linewidth}
    \includegraphics[scale=0.76]{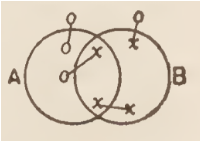}
    \caption{\cite[Fig.~59]{peirce:1933}}
    \label{Fig_original_disjunctions_VennPeirce2}
\end{subfigure}	
\hspace{10mm}
\begin{subfigure}[b]{.32\linewidth}
    \includegraphics[scale=0.65]{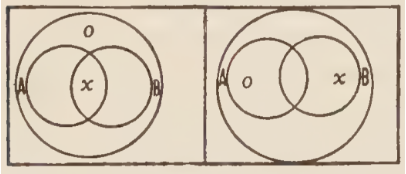}
    \caption{\cite[Fig.~60]{peirce:1933}}
    \label{Fig_original_disjunctions_VennPeirce3}
\end{subfigure}	
\hspace{20mm}
\begin{subfigure}[b]{.38\linewidth}
    \centering
    \includegraphics[scale=0.22]{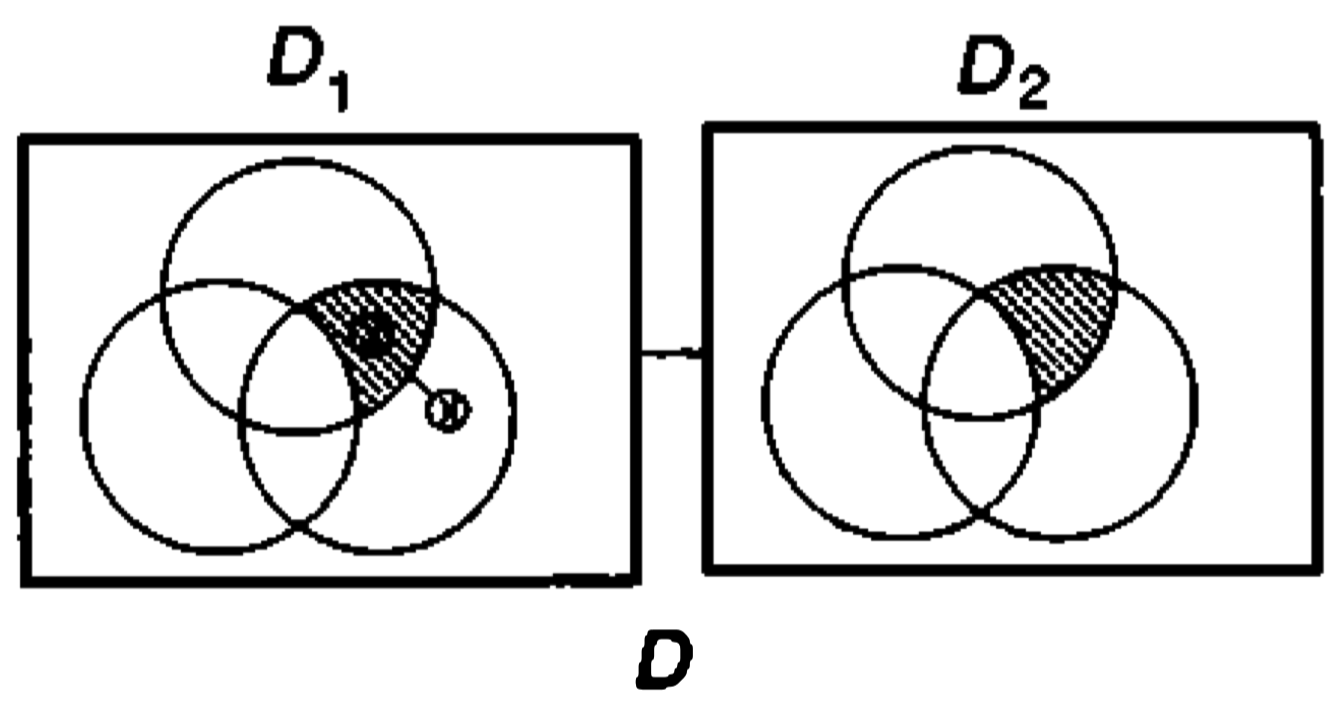}
    \vspace{-1mm}
    \caption{\cite[Example 45]{shin_1995}}
    \label{Fig_original_disjunctions_ShinVenn}
\end{subfigure}	
\hspace{7mm}
\begin{subfigure}[b]{.15\linewidth}
    \centering        
    \includegraphics[scale=0.47]{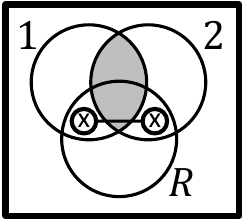}
    \vspace{4.1mm}
    \caption{}
    \label{Fig_Prior_Disjunctions_14}
\end{subfigure}	    
\hspace{7mm}
\begin{subfigure}[b]{.14\linewidth}
    \includegraphics[scale=0.5]{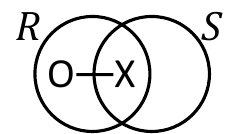}
    \vspace{1mm}
    \caption{}
    \label{Fig_Prior_Disjunctions_Shin_2}
\end{subfigure}	
\hspace{20mm}
\begin{subfigure}[b]{.33\linewidth}
    \centering
    \includegraphics[scale=0.5]{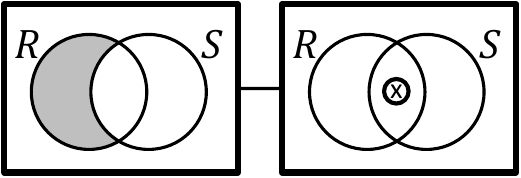}
    \vspace{1mm}
    \caption{}
    \label{Fig_Prior_Disjunctions_Shin_1}
\end{subfigure}	
\hspace{7mm}
\begin{subfigure}[b]{.33\linewidth}
    \centering
    \includegraphics[scale=0.29]{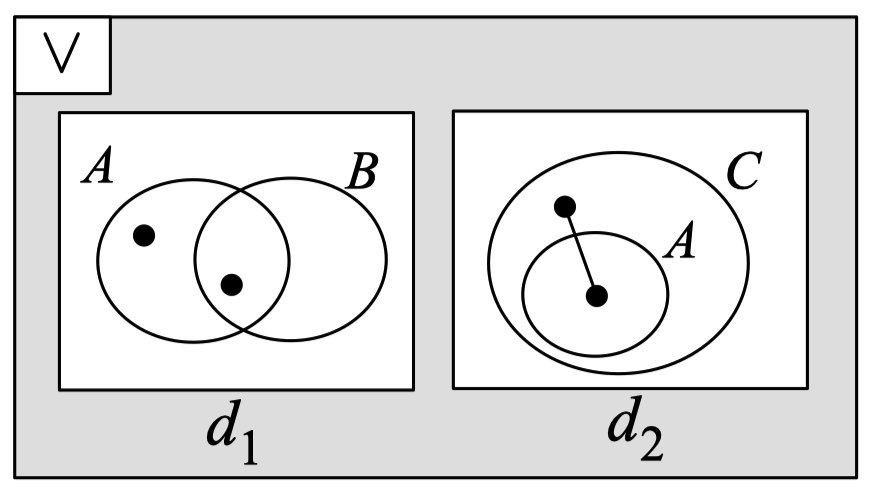}
    \caption{\cite[Fig.~12(i)]{HowseST2005:SpiderDiagrams}}
\label{Fig_spider_diagrams_original}
\end{subfigure}	    
\hspace{7mm}
\begin{subfigure}[b]{.14\linewidth}
    \centering
    \includegraphics[scale=0.255]{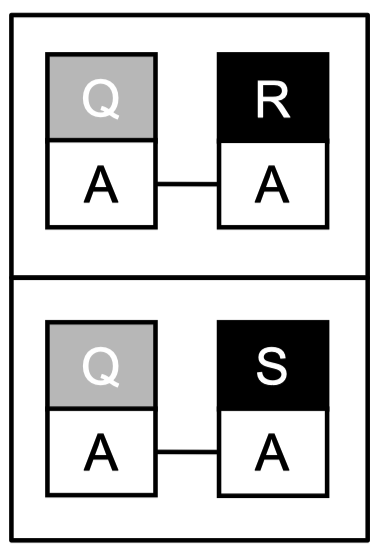}
    \vspace{0.5mm}
    \caption{\cite[Fig.~9e]{DBLP:journals/pacmmod/GatterbauerD24}}
    \label{Fig_original_disjunctions_RelationalDiagrams}
\end{subfigure}	
\caption{\Cref{sec:box-based} (Option 4): Box-based disjunctions.}
\label{Fig_original_disjunctions_boxes}
\end{figure}

Shin~\cite{shin_1995} keeps the "X" markers for existential statements and lines for dijunctions between existential statements
from Peirce-Venn diagrams~\cite{peirce:1933}.
But she replaces the "O" markers (universal statements) again with the original shading from Venn,
which removes their possible anchors for disjunctions.
She then uses straight lines connecting different boxed diagrams to represent disjunctions 
\emph{between statements at least one of which is universal}
(\cref{Fig_original_disjunctions_ShinVenn}).
The example from the introduction (\cref{Fig_Prior_Disjunctions_8})
does actually not need box-based disjunction and would be displayed by Shin as \cref{Fig_Prior_Disjunctions_14}.\footnote{In the introduction, we focused on the key visual metaphors and thus wanted to keep the example consistent which required us to simplify.}
For a more complete example, consider the statement 
``All R are S, or some R is S'':
$\forall x [R(x) \rightarrow S(x)] \vee \exists x [R(x) \wedge S(x)]$.
Peirce's original modification to Venn would show the statement as
\cref{Fig_Prior_Disjunctions_Shin_2},
yet Shin requires the disjunction boxes in this case: \cref{Fig_Prior_Disjunctions_Shin_1}.
Overall, Shin combines an edge-based disjunction with a box-based disjunction.
In Shin's words~\cite[Section 4]{shin_1995} (emphasis added):
``However, I find it easier to introduce a way to handle the information conveyed by a disjunction, rather than that conveyed by a negation. ... Therefore, we will introduce a new syntactic device in order to represent disjunctive information. Let us recall Peirce's suggestions for disjunctive information... First, Peirce introduced a line that connects syntactic objects. However, to adopt this we would have to sacrifice some visual power of the diagram. Peirce's other suggestion is to put Venn diagrams into rectangles and to interpret the Venn diagrams in each rectangle as disjunct. In this way, each of the Venn diagram(s) does not have to have a confusing line between o's or x's. However, in Venn-I \emph{we introduced a rectangle to represent a background set}. Therefore, we will not adopt Peirce's method directly, but \emph{will connect diagrams by a line}. In this way, we do not have to introduce any new syntactic object.''

Spider diagrams by Howse et al.~\cite{HowseST2005:SpiderDiagrams}
combine idea from Euler diagrams~\cite{Euler:1802}, Venn diagrams~\cite{venn:1880}, 
Peirce-Venn diagrams~\cite{peirce:1933}, and Shin-Venn diagrams~\cite{shin_1995}
whose logical expressiveness is equivalent to first-order monadic logic with equality.
They use what they call ``the box template'' to define $k$-ary compound diagrams consisting of smaller unitary (or recursively defined compound) diagrams. These box-based disjunction use explicit $\vee$ labels (\cref{Fig_spider_diagrams_original}).

\diagrams~\cite{DBLP:journals/pacmmod/GatterbauerD24} also cite inspiration from Peirce and represent a union of queries via adjacent ``union cells''. In their words~\cite[Sect.~5]{DBLP:journals/pacmmod/GatterbauerD24}:
``we allow placing several Relational Diagrams on the same canvas, each in a separate union cell. Each cell of the canvas then represents only conjunctive information, yet the relation among the different cells is disjunctive (a union of the outputs).''
\Cref{Fig_original_disjunctions_RelationalDiagrams} represents the query:
$\{q(A) \mid \exists r \in R[q.A = r.A] \vee \exists s\in S[q[A = s.A]    \}$.

All of these prior box-based approaches represent disjunction as a union of expressions and use variants of what we call ``union boxes''
to represent a union of fully formed sentences. While spider diagrams allow these box templates to be nested, the resulting structure still consists of \emph{a set of unitary diagrams}.
We are not aware of any prior approach that pushes the nesting with boxes from the root into the leaves of logical expressions within individual diagrams (e.g.\ notice how \emph{the content of the disjunction boxes are still connected to diagrammatic elements outside it} in \cref{Fig_Disjunction_FutureWork_1}).

\subsection{DeMorgan-based disjunctions}
\label{sec:deMorgen}

We call DeMorgan-based disjunctions, representations that use only symbols for negation and conjunction, 
and apply negation in a nested way in accordance with the logical identity
$A \vee B = \neg(\neg A \wedge \neg B)$,
which is famously named after Augustos De Morgan.

Existential graphs by Charles Sanders Peirce~\cite{peirce:1933}
are a diagrammatic notation for logical statements
that uses closed curves to express negation and juxtaposition for conjunctions. 
Nesting of those closed curves (called `cuts') permits expressing implication.
\Cref{Fig_original_disjunctions_PeirceEG1} is interpreted as:
``The two seps of Fig. 72, taken together, form a curve which I shall call a scroll... 
The only essential feature is that there should be two seps, of which the inner,
however drawn, may be called the inloop.''~\cite[Paragraph 4.436, Fig.~72]{peirce:1933}.
\Cref{Fig_original_disjunctions_PeirceEG2} is interpreted as
``‘Every salamander lives in fire,' or `If it be true that something is a salamander then it will always be true that something lives in fire’?''~\cite[Paragraph 4.449, Fig.~83]{peirce:1933}.
Appropriate nesting of three such cuts then creates disjunctions as in
\cref{Fig_original_disjunctions_PeirceBetaDisjunction1}
(``Everbody always either laughs or cries'')
and
\cref{Fig_original_disjunctions_PeirceBetaDisjunction2}
(``There is somebody who in every case either laughs or cries'').

\newsavebox{\bigimage}
\begin{figure}[t]
\centering	
\begin{subfigure}[b]{.14\linewidth}
    \centering	
\includegraphics[scale=0.35]{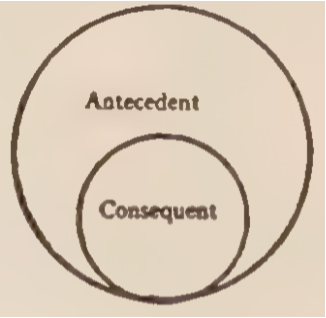}
\caption{\cite[Fig.~72]{peirce:1933}}
\label{Fig_original_disjunctions_PeirceEG1}
\end{subfigure}	
\hspace{2mm}
\begin{subfigure}[b]{.33\linewidth}
    \centering	
\includegraphics[scale=0.35]{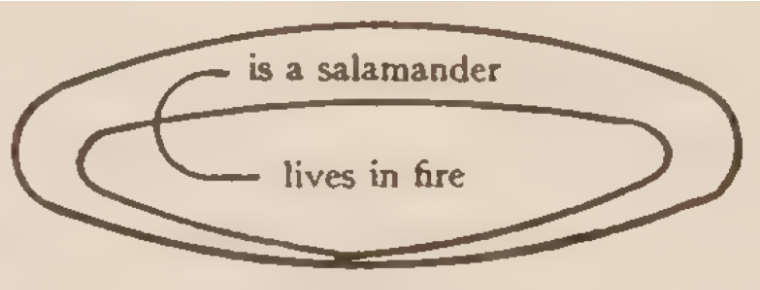}
\caption{\cite[Fig.~83]{peirce:1933}}
\label{Fig_original_disjunctions_PeirceEG2}
\end{subfigure}	
\hspace{2mm}
\begin{subfigure}[b]{.2\linewidth}
    \centering	
\includegraphics[scale=0.35]{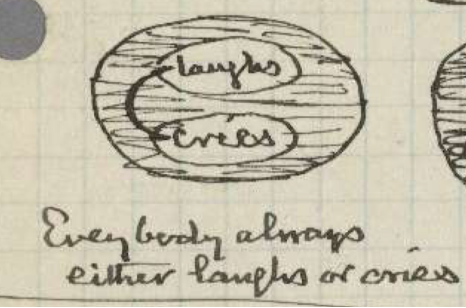}
\caption{\cite{Peirce:1906:Disjunction}}
\label{Fig_original_disjunctions_PeirceBetaDisjunction1}
\end{subfigure}	
\hspace{2mm}
\begin{subfigure}[b]{.19\linewidth}
    \centering	
\includegraphics[scale=0.35]{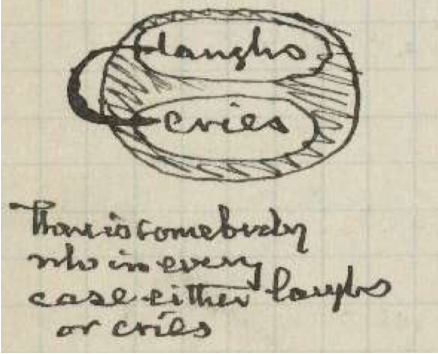}
\caption{\cite{Peirce:1906:Disjunction}}
\label{Fig_original_disjunctions_PeirceBetaDisjunction2}
\end{subfigure}	
\hspace{10mm}
\begin{subfigure}[b]{.20\linewidth}
    \centering	
\includegraphics[scale=0.23]{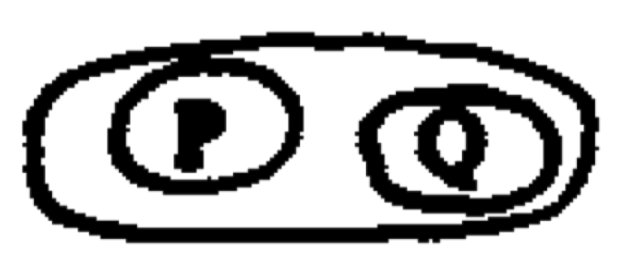}
\caption{\cite[Sect.\ 3, Fig.\ 7]{Roberts:1973}}
\label{Fig_original_disjunctions_Roberts_2}
\end{subfigure}	
\hspace{8mm}
\begin{subfigure}[b]{.35\linewidth}
    \centering	
\includegraphics[scale=0.4]{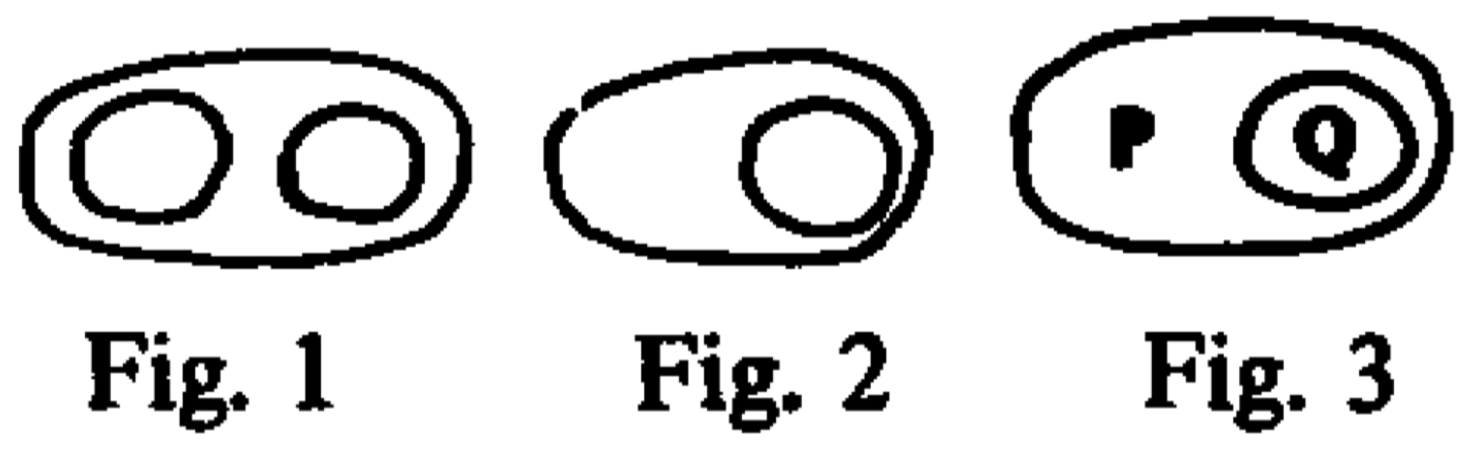}
\vspace{-2mm}
\caption{\cite[Sect.~7.2, Fig.~1-3]{Roberts:1973}}
\label{Fig_original_disjunctions_Roberts_3}
\end{subfigure}	
\hspace{10mm}
\sbox{\bigimage}{%
\begin{subfigure}[b]{.5\linewidth}
\vspace{1mm}	
\includegraphics[scale=0.55]{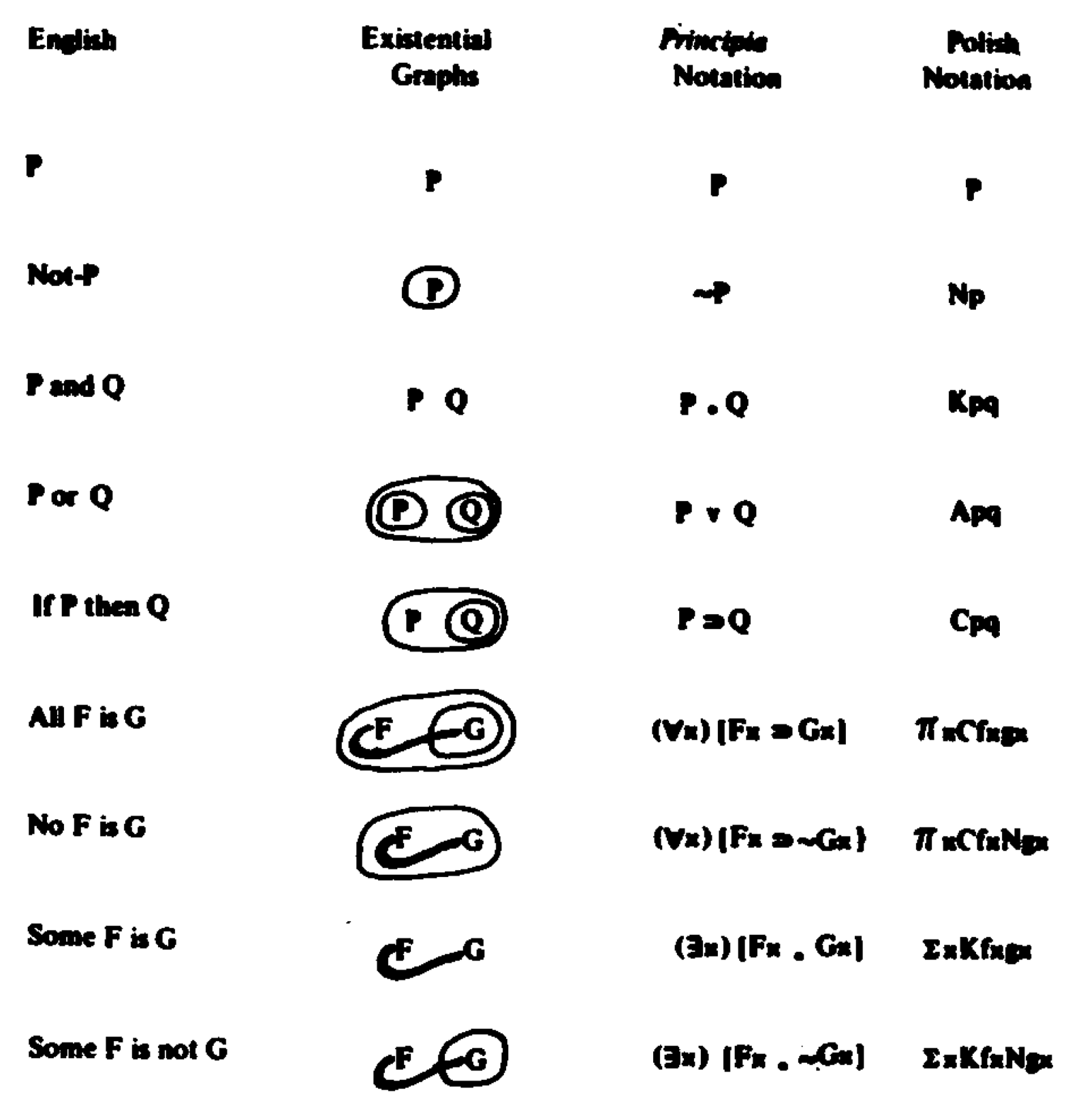}
\caption{\cite[Appendix 2, condensed]{Roberts:1973}}
\label{Fig_original_disjunctions_Roberts_4}
\end{subfigure}	
}
\usebox{\bigimage}
\hspace{3mm}
\begin{minipage}[b][\ht\bigimage][s]{.4\textwidth}
    \begin{subfigure}[b]{.9\linewidth}
        \centering
    \vspace{3mm}
    \includegraphics[scale=0.28]{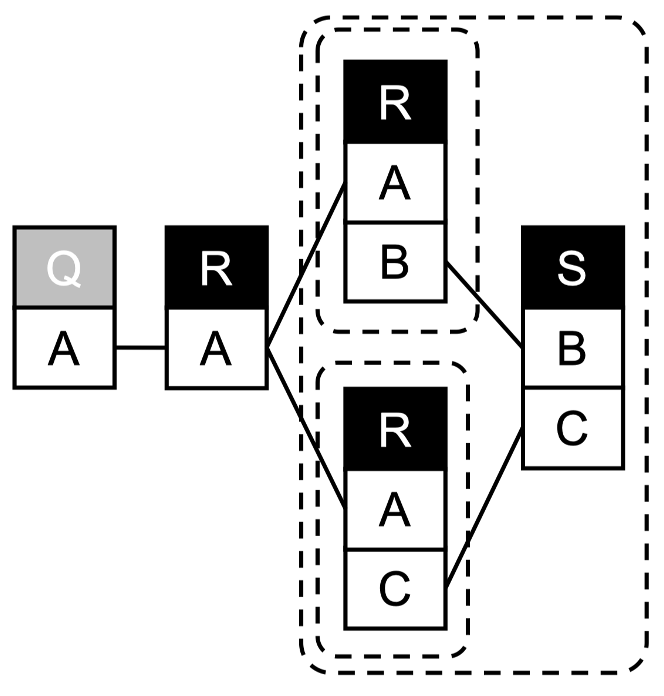}
    \caption{\cite[Fig.~9c]{DBLP:journals/pacmmod/GatterbauerD24}}
    \label{Fig_original_disjunctions_RelationalDiagrams2}
    \end{subfigure}	
    \vfill
    \begin{subfigure}[b]{.9\linewidth}
        \centering
    \includegraphics[scale=0.17]{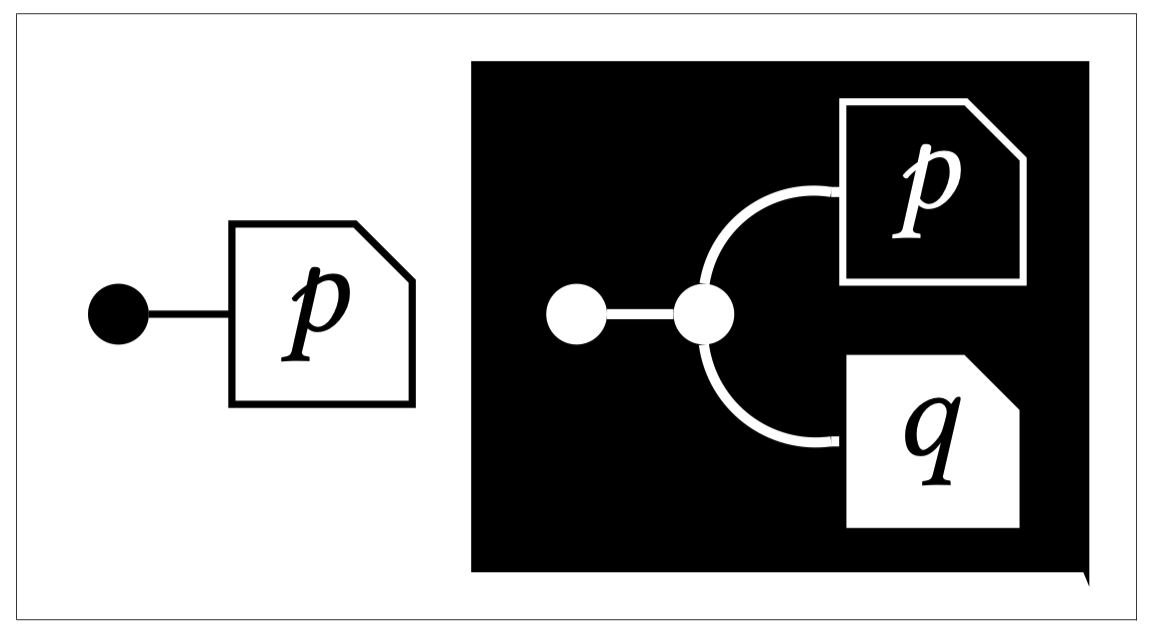}
    \caption{\cite[Fig.~8]{StringDiagrams:2024:arxiv}}
    \label{Fig_original_disjunctions_String}
    \end{subfigure}	
\vspace{0pt}%
\end{minipage}
\caption{\Cref{sec:deMorgen} (Option 5): DeMorgan-based disjunctions.}
\label{Fig_original_disjunctions_DeMorgan}
\end{figure}

Roberts (under the guidance of Max Harold Fisch) in the 1970s worked through the notes by Peirce and made the key ideas available to a larger audience in his influential interpretational work~\cite{Roberts:1973}.
We include figures from his work as 
\cref{Fig_original_disjunctions_Roberts_2}
(``Fig.~7 can be read 'It is false, that P is false and Q is false'.'')
and
\cref{Fig_original_disjunctions_Roberts_3}:
``a graph of even meager complexity can be read in several different English sentences, if only the reader will keep in mind two or three basic patterns. Fig. 1 is the pattern for an alternative proposition, and Fig. 2 is the pattern or a conditional. Just imagine what Fig. 3 would look like with a double cut surrounding P, and it will be obvious that 'Either not-P or Q' means the same as 'If P then Q'. A student of EG will learn automatically many of the linguistic equivalences that require an excess of time and symbols in algebraic notations.''
\Cref{Fig_original_disjunctions_Roberts_4} gives an overview of the possible cut nesting and resulting logical patterns.

\diagrams~\cite{DBLP:journals/pacmmod/GatterbauerD24} were inspired by the nestings of Peirce (and inspired our work).
\cref{Fig_original_disjunctions_RelationalDiagrams2} 
shows the representation of the following query by applying De Morgan:
\begin{align*}
\phantom{=} &\{ q(A) \mid 
    \exists r \in R [q.A \equal r.A \wedge \neg (\exists s \in S\\
    &\hspace{10.5mm}[{\h{\neg (}\exists r_2 \in R [ }(r_2.B \equal s.B \,\h{\vee}\, r_2.C \equal s.C) \wedge r_2.A \equal r.A ] \h{)} ])] \} \\
    = &\{ q(A) \mid 
      \exists r \in R [q.A \equal r.A \wedge \neg (\exists s \in S\\
        &\hspace{10.5mm}[\h{\neg (}\exists r_2 \in R [r_2.B \equal s.B \wedge r_2.A \equal r.A ]\h{)} \;\h{\wedge}\;	\\
        &\hspace{11.7mm}\h{\neg (}\exists r_3 \in R [r_3.C \equal s.C \wedge r_3.A \equal r.A ]\h{)} ]] \}	
\end{align*}

String diagrams~\cite{DBLP:conf/diagrams/Haydon2020:StringDiagrams,StringDiagrams:2024:arxiv}
are a variant of Peirce’s beta graphs that allow free variables in addition to bound variables. 
Both types of variables are represented by lines, yet bound variables are represented by a dot.
\Cref{Fig_original_disjunctions_String} shows the representation for 
$\exists x[p(x)] \wedge \forall y[p(y) \rightarrow q(y)]$

\section{Extended discussion on defining $\TRC$ (\cref{sec:formal setup})}
\label{app:TRC details}
This section extends \cref{sec:TRC,sec:safety}
and provides a more detailed discussion of how our formalism for $\TRC$
and the formal safety conditions
compare to prior work.

\subsection{Different well-formed $\TRC$ formalisms}
The database literature vastly differs in their treatment and formal definition of $\TRC$. 
We give here an abbreviated comparison to our chosen formalism and explain the motivation for our choices.
In essence, we believe that our streamlined formalization of $\TRC$ 
(and the resulting safety conditions based on ASTs) 
are a key reason why the translation from $\TRC$ to $\systemx$ appear so natural.

\subsubsection{Dedicated output variable}
Several textbooks~\cite{Elmasri:dq,cowbook:2002,Ullman1988PrinceplesOfDatabase,DBLP:books/mg/SKS20}
and Codd's original formulation of $\TRC$~\cite{DBLP:persons/Codd72}
allow a tuple variable bound to a particular table to become the output variable.
For example Ramakrishnan and Gehrke~\cite{cowbook:2002} would allow the following notation:
$$
    \{ r \mid r \in R \wedge r.B \!>\! 4 \}
$$
This notation does not make explicit the schema of the output (in this case $r(A, B)$) nor its arity. 
This could be partially fixed by always requiring to show the header of the output tuple.
We, in contrast, always require a new free output variable 
(which is also required by the above notation if the output attributes are chosen from more than one table, 
but just not enforced when not needed) 
which 
simplifies the formation rules of well-formed formulas
and also later simplifies the translation into a diagrammatic representation

We thus write the above query instead as:
$$
    \{ q(A, B) \mid r \in R[q.A\!=\!r.A \wedge q.B\!=\!r.B \wedge r.B \!>\! 4] \}
$$

\subsubsection{Combined variable bindings and quantification}
In our definition of well-formed formulas, 
we restrict any bound variable (either existentially or universally quantified) 
to range over a specific relation before it is used.
This choice is also adopted by most textbooks such as 
\cite{cowbook:2002,DBLP:books/mg/SKS20}.
This notation permits our concise definition of the Abstract Syntax Tree (AST) of a $\TRC$ query
because bound attributes can only be referenced 
within a subtree of the quantifier node in the syntax tree that binds them.

Codd's original formulation~\cite{DBLP:persons/Codd72},
and also the books by Ullman~\cite{Ullman1988PrinceplesOfDatabase} 
and Elmasri and Navathe~\cite{Elmasri:dq},
separate our combined quantification and binding (``$\exists r \in R$'')
into a quantification (``$\exists r$'') and a \emph{range term} (``$r \in R$'').
For example, Ullman's definition~\cite{Ullman1988PrinceplesOfDatabase} would allow writing above query as
$$
    \{ q(A, B) \mid \h{\exists r}[q.A\!=\!r.A \wedge q.B\!=\!r.B \wedge r.B \!>\! 4 \wedge \h{r \in R}] \}
$$
Notice how the variable $r$ is used in an atom (``$q.A=r.A$'') whose binding is defined 
in a sibling node (``$r \in R$'') of a shared parent conjunction node $\wedge$
instead of a parent. 
This more permissive notation does not add logical expressiveness 
but leads to Ullman's formal safety condition \cite[Sect.~3.9]{Ullman1988PrinceplesOfDatabase} requiring 
an inductive argument about ``limited variables'' that are bound via possibly multiple equijoins.
To see this, notice that the above query could also be written 
by defining an additional (and arguably unnecessary) tuple variable $t$ 
that acts only as intermediary between output variable $q$ and the bound variable $r$:
$$
    \{ q(A, B) \mid \exists r, \h{\exists t}
        [q.A\!=\!\h{t.A} \wedge \h{t.A} \!=\! r.A \wedge q.B\!=\!r.B \wedge r.B \!>\! 4 \wedge r \in R] \}
$$

\subsubsection{Multiple bindings of the same variable}
Separation of bindings and quantifications permits 
using the same tuple variable in multiple range terms as in
$$
    \{ q \mid q \in R \wedge \neg (q \in S) \}
$$
Our notation captures the same fragment but requires each ``range term'' to use a separate tuple variable, 
as in 
$$
    \{ q(A) \mid \exists r \in R[q.A \!=\! r.A \wedge \neg(\exists s \in S[r.A \!=\! s.A]] \}
$$

\subsubsection{Named vs.\ unnamed perspective}
We follow most literature in using the named perspective of $\TRC$. 
We thus assume all attribute names of relations are known and their relative order does not matter.
Notable exceptions are 
Codd's original formulation of $\TRC$~\cite{DBLP:persons/Codd72} and
Ullman's textbook~\cite{Ullman1988PrinceplesOfDatabase} 
where they refer to the 2nd attribute (component) of a tuple variable $r$ by ``$r.[2]$'' as in:
$$
    \{ r^{(2)} \mid r \in R \wedge r.[2] \!>\! 4 \}
$$

\subsubsection{Minor notational differences}
Our set notation for the binding predicate ``$r\in R_i$'' is also used by Ramakrishnan and Gehrke~\cite{cowbook:2002} and Silberschatz et al.~\cite{DBLP:books/mg/SKS20}.
Codd~\cite{DBLP:persons/Codd72} originally used the notation ``$P_i r$'',
with $P_i$ being a monadic predicate corresponding to table $R_i$, and called it a \emph{range term}.
Ullman~\cite{Ullman1988PrinceplesOfDatabase} and Elmasri and Navathe~\cite{Elmasri:dq} write the binding predicate
instead as ``$R_i(r)$''.

Our notation ``$r.A$'' for a tuple attribute (or component) is also used by 
Ramakrishnan and Gehrke~\cite{cowbook:2002} and Elmasri and Navathe~\cite{Elmasri:dq}.
Codd~\cite{DBLP:persons/Codd72}, Ullman~\cite{Ullman1988PrinceplesOfDatabase}, and Silberschatz et al.~\cite{DBLP:books/mg/SKS20} instead 
use the notation ``$r[A]$'' or ``$r(A)$''.

Our notation ``$\exists r\in R, \exists s\in S [r>4]$'' for the scope of tuple variables
is similar to the notation of Ramakrishnan and Gehrke~\cite{cowbook:2002}
and Silberschatz et al.~\cite{DBLP:books/mg/SKS20}.
However, we use brackets ``$[\ldots]$'' instead of parentheses ``$(\ldots)$'' for variable scopes, 
and use parentheses instead for the scope of negation $\neg(\ldots)$. 
Other books~\cite{Elmasri:dq,Ullman1988PrinceplesOfDatabase} use parentheses around the quantifiers as in 
``$(\exists r)(\exists s)(r\in R \wedge s \in S \wedge r>4)$''.

Date~\cite{date2004introduction} uses a notation for ``tuple calculus'' that is entirely different (see e.g.~\cref{Fig_Date_Query}).
The books by Abiteboul et al.~\cite{DBLP:books/aw/AbiteboulHV95}
and Arenas et al.~\cite{ArenasBLMP:DatabaseTheory}
do not cover $\TRC$ at all.

\subsection{Different safey criteria for $\TRC$}

The safety conditions for $\TRC$ queries 
are rarely discussed in database textbooks.
Abiteboul et al.'s concept of range-restricted $\RC$ queries~\cite{DBLP:books/aw/AbiteboulHV95} is only defined for $\DRC$ but not $\TRC$.
Ullman's definition of safety~\cite{Ullman1988PrinceplesOfDatabase} extends to $\TRC$ (we discuss it in more detail below) but it excludes correlated nested queries (e.g.\ find sailors who reserved all red boats), 
which we explicitly like to include.
The books
by Ramakrishan and Gehrke~\cite{cowbook:2002},
Elmasri and Navathe~\cite[Sect.~8.6.8]{Elmasri:dq}, and
Silberschatz
\cite[Sect.~27.2]{DBLP:books/mg/SKS20}
mention the concept of safety and its connection to domain-independence
but do not include any syntactic definition for safety.
The very recent book in progress by Areans et al.~\cite{ArenasBLMP:DatabaseTheory}
do mention safety in its current draft form.
The safety definition for $\TRC$ on Wikipedia~\cite{wiki:TRC} is clearly wrong
(it does not make a difference between existential and universal quantifiers),
and has been wrong for over 10 years, according to the revision history.

Since our formal safety conditions differ slightly from prior definitions 
(they are at times more permissive and at other times more restrictive),
we give here a more extensive justification for our choices, 
and at the end also a proof of them being a valid safety restriction.

\subsubsection{One single binding predicates per output attribute}
Our definition of safety does not allow an output attribute to be bound more than once, 
except if the multiple bindings correspond to disjunction in which case they are required.
Our requirement simplifies the reading since every output attribute is connected via one equality predicate to exactly one table column (for each disjunct).
Our requirement does not remove any expressiveness
since it is always possible to remove multiple bindings except for one and replace them with join predicates, and
one single binding via an equijoin is always required by safety conditions
(e.g.\ Ullman~\cite{Ullman1988PrinceplesOfDatabase} defines this condition with the notion of a ``limited variable'' which is bound via a sequence of equijoins to a domain).

We give two illustrating examples.

\begin{figure}[t]
    \centering
    \begin{subfigure}[b]{0.16\linewidth}
        \centering
        \includegraphics[scale=0.42]{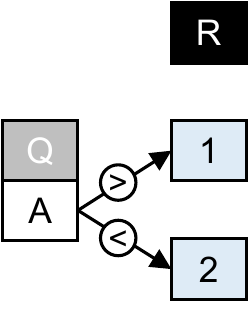}
        \vspace{0mm}
        \caption{}
        \label{TRC_unsafe_example_1}
    \end{subfigure}	
    \hspace{6mm}
    \begin{subfigure}[b]{0.2\linewidth}
        \centering
        \includegraphics[scale=0.42]{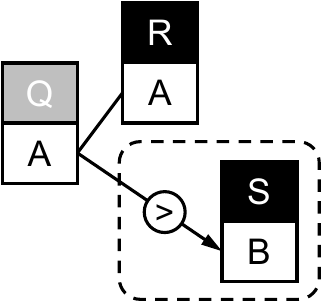}
        \vspace{0mm}
        \caption{}
        \label{TRC_unsafe_example_2}
    \end{subfigure}	
    \hspace{3mm}    
    \begin{subfigure}[b]{0.3\linewidth}
        \centering
        \includegraphics[scale=0.42]{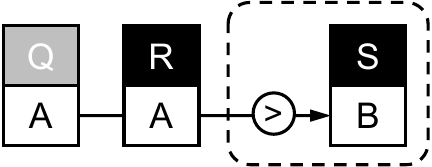}
        \vspace{3mm} 
        \caption{}
        \label{TRC_unsafe_example_3}
    \end{subfigure}	
    \caption{\Cref{ex:TRC_unsafe_example}: Three well-formed $\TRC$ queries. Only (c) is safe according to our safety conditions.}
    \label{TRC_unsafe_example}
\end{figure}

\begin{example}[]
    \label{ex:TRC_unsafe_example}
    First, consider the following well-formed but unsafe $\TRC$ query:
    $$
        \{ q(A) \mid \exists r \in R[q.A\!>\!1 \wedge q.A\!<\!2]\}
    $$
    It is \emph{not domain-independent} and returns all (possibly infinite) domain values between 1 and 2, as long as the table $R$ is not empty.
    
    Next, consider the following unsafe but domain-independent query:
    $$
        \{ q(A) \mid \exists r \in R[q.A\!=\!r.A \wedge \neg (\exists s \in S[q.A\!>\!s.A])] \}
    $$
    It is unsafe according to our safety definition as the output attribute $q.A$ is referenced twice.
    To make it safe, we can remove one occurrence and replace it with $r.A$ to which $q.A$ is equated in the binding predicate:
    $$
        \{ q(A) \mid \exists r \in R[q.A\!=\!r.A \wedge \neg (\exists s \in S[r.A\!>\!s.A])] \}
    $$
    See \Cref{TRC_unsafe_example} for their respective $\systemx$ representations.
\end{example}

\begin{example}[]
    \label{ex:TRC_AST_output_binding}
    Consider the following well-formed $\TRC$ query:
    \begin{align*}
        &
        \{q(A) \mid \exists r \in R [\h{q.A} \!=\! r.A] \wedge \h{q.A}\!=\!4 \}
    \end{align*}    
    It is well-formed, but not safe according to our definition because $q.A$ is bound more than twice, and those bindings are connected in the AST. 
    The query, however, is equivalent to the following safe query:
    \begin{align*}
        &
        \{q(A) \mid \exists r \in R [q.A \!=\! r.A] \wedge r.A\!=\!4 \}
    \end{align*}    
    Notice how the single output edge from $q.A$ makes the resulting $\systemx$ representation easier to read.
    The following slight variation also binds $q.A$ twice, but those bindings are connected via a disjunct and are thus needed:
    \begin{align*}
        &
        \{q(A) \mid \exists r \in R [q.A \!=\! r.A] \,\h{\vee}\, q.A\!=\!4 \}
    \end{align*}    
    \Cref{TRC_AST_output_binding} shows the corresponding $\systemx$ representations together with their AST.
\end{example}

\begin{figure}[t]
\centering
\begin{subfigure}[b]{0.25\linewidth}
    \centering
    \includegraphics[scale=0.42]{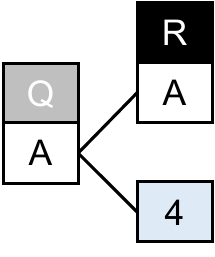}
    \vspace{1.5mm}
    \caption{}
    \label{TRC_AST_output_binding_1}
\end{subfigure}	
\hspace{1mm}
\begin{subfigure}[b]{0.25\linewidth}
    \centering
    \includegraphics[scale=0.42]{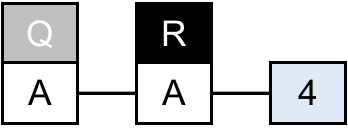}
    \vspace{1.5mm}
    \caption{}
    \label{TRC_AST_output_binding_2}
\end{subfigure}	
\hspace{1mm}    
\begin{subfigure}[b]{0.25\linewidth}
    \centering
    \includegraphics[scale=0.42]{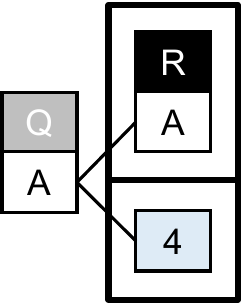}
    \caption{}
    \label{TRC_AST_output_binding_3}
\end{subfigure}	
\begin{subfigure}[b]{0.25\linewidth}
    \centering
    \includegraphics[scale=0.42]{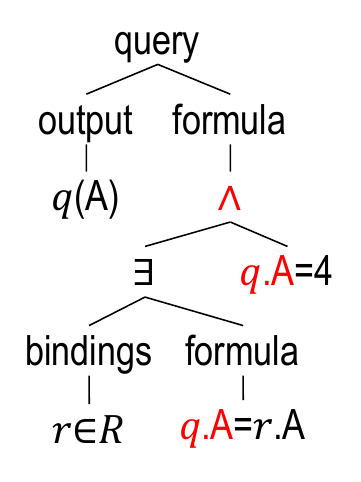}
    \caption{}
    \label{TRC_AST_output_binding_4}
\end{subfigure}	
\hspace{1mm}
\begin{subfigure}[b]{0.25\linewidth}
    \centering
    \includegraphics[scale=0.42]{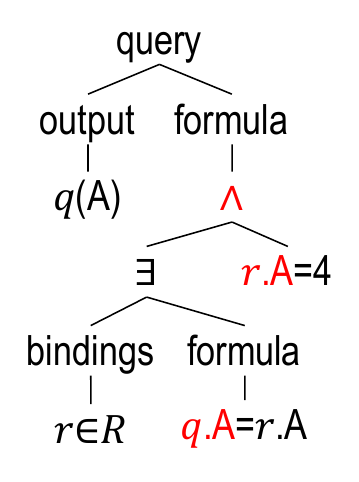}
    \caption{}
    \label{TRC_AST_output_binding_5}
\end{subfigure}	
\hspace{1mm}    
\begin{subfigure}[b]{0.25\linewidth}
    \centering
    \includegraphics[scale=0.42]{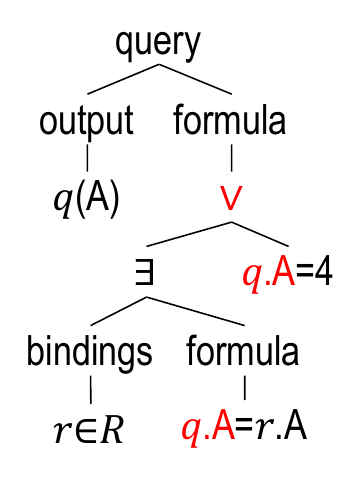}
    \caption{}
    \label{TRC_AST_output_binding_6}
\end{subfigure}	
\caption{\Cref{ex:TRC_AST_output_binding}: Three  well-formed $\TRC$ queries
together with their corresponding ASTs.
Only (b) and (c) are safe according to our safety condition.}
\label{TRC_AST_output_binding}
\end{figure}

\subsubsection{Comparison with Ullman's safety criteria}
Ullman's safety criteria~\cite[Section 3.9]{Ullman1988PrinceplesOfDatabase} 
are notably more restrictive than ours in an important way.
They not only disallow universal quantifiers,
they also include a requirement that 
all free variables need to be bound to a domain
\emph{for all existentially quantified subformulas},
such as ``$\exists r_1 \in R_1, \ldots, r_k \in R_k[\varphi]$''.
This requirement is motivated by the fact that any such safe formula has a direct and natural translation into $\Datalog$~\cite[Theorem 3.11]{Ullman1988PrinceplesOfDatabase} and thus also $\RA$.
However, this restriction forbids nested correlated queries
that are allowed by our safety condition and also by $\SQL$.

For example, consider the following subformula of query 
\cref{query:safety_continued}
in
\cref{ex:safety_continued}:
$$
    \neg(
    \exists s\in S[\neg (\exists r_2 \in R[r.A\!=\!r_2.A \wedge r_2.B\!=\!s.B ])]
    )
$$
In Ullman's formalism and our notation, it would be written as:
$$
    \neg(
    \exists s[s \in S \wedge \neg (\exists r_2[r_2 \in R \wedge r.A\!=\!r_2.A \wedge r_2.B\!=\!s.B ])]
    )
$$
The free variable $r$ violates Ullman's safety condition
in the maximal conjunction  
``$s \in S \wedge \neg(\ldots)$''
because $r$ is free but not ``limited'', 
as it only appears in the 2nd conjunct
``$\neg (\exists r_2[\ldots])$''
which is negated.

As a consequence, Ullman's safe queries \emph{cannot express certain relational query patterns that our safety condition allows}.
On the other hand, Ullman's conditions allow using the free variables multiple times in arbitrary ways in a formula, 
which makes the inductive translation from $\RA$ shorter, yet does not add any pattern expressiveness.

\subsubsection{Safety and DeMorgen}
We give two examples of the fact that safety for disjunctions is treated differently than 
their DeMorgan equivalent of negation combined with conjunction, across various definitions of safety. 
Interestingly both directions are possible: 
queries with disjunction can be safe (whose DeMorgan equivalent with negation and conjunction is not),
but queries with disjunctions can also be unsafe (yet they become safe after applying DeMorgan).

(1) For the syntactic criterion of ``range-restricted queries'' by Abiteboul et al.~\cite[Sect.~5.4]{DBLP:books/aw/AbiteboulHV95}, 
part of the inductive definition for the set of range-restricted variables ($\mathit{rr}$) is given by
\begin{align*}
    \varphi_1 \wedge \varphi_2  
        \; : \;\;
        & \mathit{rr}(\varphi) = \mathit{rr}(\varphi_1) \cup \mathit{rr}(\varphi_2)  \\
    \varphi_1 \vee \varphi_2 
        \; : \;\;
        & \mathit{rr}(\varphi) = \mathit{rr}(\varphi_1) \cap \mathit{rr}(\varphi_2)  \\        
    \neg \varphi_1  
        \; : \;\;
        & \mathit{rr}(\varphi) = \emptyset 
\end{align*}    
It follows that any safe query defined over a formula 
$\varphi_1 \vee \varphi_2$ 
immediately becomes unsafe after applying DeMorgan and writing it as the pattern-equivalent formula
$\neg(\neg \varphi_1 \wedge \neg \varphi_2)$.

(2) Example~3.29 by Ullman~\cite{Ullman1988PrinceplesOfDatabase} shows the $\DRC$ formula 
$$
    r(x, y, z) \wedge \neg(p(x, y) \vee s(y, z)) \,
$$
that is not ``safe'' according to Ullman's definitions of safety, 
because the subformula
$p(x, y) \vee s(y, z)$
violates the syntactic safety condition that
whenever an $\vee$ operator is used, the two formulas connected must have the same set of free variables.
However by applying DeMorgan on the negated subformula and then ``flattening'' 
(i.e.\ no $\wedge$ node appears as a child of another $\wedge$ in the AST) 
he derives the pattern-equivalent formula
$$
    r(x, y, z) \wedge \neg p(x, y) \wedge \neg s(y, z)
$$
which is safe because all three variables in the conjunction are limited by the positive conjunct $r(x, y, z)$.

\subsubsection{Proof of correctness of our safety definition}
For completeness, we give a self-contained proof that our safety condition is a valid syntactic restriction that preserves relational completeness.

\begin{lemma}
\label{lemma:safety}
The four conditions given in \cref{sec:safety} are a valid safety restriction, 
i.e.\ the resulting safe $\TRC$ fragment is relationally complete and every safe $\TRC$ query is domain-independent.
\end{lemma}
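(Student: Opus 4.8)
The plan is to prove the two halves of the statement separately: \textbf{soundness} (every query meeting the four conditions is domain-independent, and hence has a finite answer) and \textbf{completeness} (every domain-independent query is logically equivalent to one meeting the four conditions). Throughout, the structural fact I would exploit is that in our formalism \emph{every} quantifier node is range-restricted to a base relation ($\exists r \in R$ or $\forall r \in R$ with $R \in \vec R$); there are no unrestricted quantifiers. Consequently the only way a value outside the active domain $\mathrm{adom}$ --- the finitely many constants occurring in $\varphi$ together with the values occurring in the input instance $I$ --- can enter the computation is through the single free output variable $q$.

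For soundness, Boolean queries are domain-independent by definition, so fix a non-Boolean query $\{q(\vec H) \mid \varphi\}$, an instance $I$, and two domains $D_1 \subseteq D_2$ both containing $\mathrm{adom}$. First I would show, by structural induction on the AST, that for every subformula $\psi$ and every assignment of the free tuple variables of $\psi$ to tuples over $\mathrm{adom}$, $\psi$ has the same truth value interpreted over $D_1$ as over $D_2$: quantifier nodes range over $I(R)$ --- whose tuples lie over $\mathrm{adom}$ regardless of the ambient domain --- and the leaves are only arithmetic comparisons of values, so the ambient domain never enters. It then remains to check that every answer tuple $t$ lies in $\mathrm{adom}^{|\vec H|}$, which is precisely what conditions 1--3 deliver: by condition 2 the binding predicate of each header attribute $q.A$ lies in the base partition, so the path from it to the root passes only through $\wedge$, $\vee$, and $\exists$ nodes; if no $\vee$ occurs on that path then the binding predicate $q.A \equal r.B$ (or $q.A \equal c$) holds unconditionally in every satisfying assignment, forcing $t.A$ to equal $\nu(r).B \in \mathrm{adom}$ (or $c \in \mathrm{adom}$); and each $\vee$ on the path is handled by condition 3, which makes each disjunct bind $q.A$ to an $\mathrm{adom}$-value on its own, so that whichever disjunct is satisfied still forces $t.A \in \mathrm{adom}$ (a small sub-induction down through nested base disjunctions). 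Hence the answers over $D_1$ and over $D_2$ coincide, and the answer is finite since $\mathrm{adom}$ is; condition 4 is not used here but plainly does no harm.

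For completeness it suffices to translate every relational-algebra expression into a safe $\TRC$ query satisfying the four conditions, since $\RA$ expresses exactly the domain-independent queries (Codd's theorem). I would give the routine inductive translation --- a base relation becomes $\{q(\vec H) \mid \exists r \in R[\bigwedge_i q.A_i \equal r.A_i]\}$, and $\sigma$, $\pi$, $\times$ (join derived), $\cup$, and $-$ are handled in the usual way --- while maintaining the invariant that the result is maximally scoped, has flattened $\wedge$ and $\vee$ and no $\neg$ nested directly inside a $\neg$, and satisfies conditions 1--4. The two points needing care are: for $\cup$, the two sub-queries are first brought to a common header and then combined by a $\vee$ whose disjuncts each have only $q$ free with the same attributes defined, so that conditions 1 and 3 hold; and to enforce conditions 1 and 4 I would add a final normalization that, whenever a header attribute $q.A$ has a single binding predicate $q.A \equal r.B$ with no $\vee$ above it, substitutes $r.B$ for every further occurrence of $q.A$ --- exactly the rewriting illustrated in \cref{ex:TRC_unsafe_example} and \cref{ex:TRC_AST_output_binding} --- which preserves logical equivalence. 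It is worth noting, in contrast to Ullman's criterion, that we do \emph{not} need to require every existentially quantified subformula to have all its free variables limited, so correlated nested queries such as the one in \cref{ex:safety} remain within the safe fragment and are domain-independent precisely because all quantifiers are restricted to base relations.

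The main obstacle will be the soundness induction, specifically carrying the ``truth value is domain-invariant'' and the ``answer values stay in $\mathrm{adom}$'' claims together through negation and universal scopes and, above all, through nested base disjunctions, where condition 3 does the real work; getting the interaction of conditions 2 and 3 exactly right --- so that no answer attribute can escape $\mathrm{adom}$ along \emph{any} branch of a disjunction --- is the delicate part. A secondary obstacle is arguing, in the completeness direction, that imposing the extra restrictiveness of our conditions (one binding predicate per header attribute, plus condition 4) costs no expressiveness, which is exactly where the substitution normalization above is needed.
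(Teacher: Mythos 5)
Your proposal is correct and follows essentially the same route as the paper: relational completeness is shown by the standard inductive translation from relational algebra (with the same union-via-$\vee$ and substitution-of-binding-predicates devices the paper uses in its union and difference cases), and domain-independence is argued from the fact that all quantifiers are range-restricted and every header attribute is anchored by a binding predicate. Your soundness half is in fact considerably more detailed than the paper's one-sentence justification, but the underlying argument is the same.
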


\begin{proof}[Proof \cref{lemma:safety}]
It suffices to prove two steps:
(1) The safe fragment is relationally complete. We show that by giving a constructive translation of any relational algebra query into that fragment.\footnote{We decided to prove (1) from first principles by translating from $\RA$ 
in order to make our results as self-contained as possible. We also
we did not find a self-contained and easy-to-use safety condition for $\TRC$ in standard references:
Existing definitions of safety for $\TRC$ are either too restrictive 
(e.g.\ Ullman's definition of safety of $\TRC$~\cite{Ullman1988PrinceplesOfDatabase} is the most carefully developed we 
found, but prohibits the use of universal quantification, and it does not even allow nested correlated queries),
or only explain but do not formally define safety 
(e.g.\ \cite{DBLP:books/mg/SKS20,
cowbook:2002,
Elmasri:dq}),
or do not treat $\TRC$ and its formaly safety conditions at all (e.g.\ \cite{DBLP:books/aw/AbiteboulHV95,ArenasBLMP:DatabaseTheory,DBLP:journals/tods/GelderT91}).}
(2)~Every query expressed in that fragment is domain-independent.

\ul{(1) $\RA \rightarrow$ safe $\TRC$}:
The proof for this direction 
is an adaptation of the proof of Lemma 3.6 in~\cite{Ullman1988PrinceplesOfDatabase}.
We show by induction on the number of operators in the relational algebra expression $E$ 
that there is a $\TRC$ formula, 
with a single free tuple variable $q$, that defines the same relation as $E$. 

The basis, zero operators, requires that we consider two cases: 
$E$ is either a database relation $R$ or (a relation of) constants. 
If $E$ is a relation $R$ of arity $k$ and schema $(A_1, \ldots, A_k)$ 
then formula 
$$
\exists r \in R[q.A_1=r.A_1 \wedge \ldots \wedge q.A_k=r.A_k]
$$ 
suffices and is safe by our definition.
If $E$ is a
constant relation, say $\{r_1, \ldots, r_n\}$
with header $\vec H = (A_1, \ldots, A_k)$,
consisting of $n$ tuples of arity $k$, 
then we write the $\TRC$ formula
\begin{align*}
    &(q.A_1 = r_1.A_1 \wedge q.A_2 = r_1.A_2 \wedge \ldots \wedge q.A_k = r_1.A_k) \,\vee \\
    &(q.A_1 = r_2.A_1 \wedge q.A_2 = r_2.A_2 \wedge \ldots \wedge q.A_k = r_2.A_k) \,\vee \\
    &\ldots  \\
    &(q.A_1 = r_n.A_1 \wedge q.A_2 = r_n.A_2 \wedge \ldots \wedge q.A_k = r_n.A_k)     
\end{align*}    
Notice that $r_i.A_j$ is a constant
for all $i \in [n]$ and $j\in [k]$. 
Thus, the above formula is safe by our safety definition.

For the induction, consider an expression whose outermost operator is one of 5 operators: 
Union $\cup$,
difference $-$,
projection $\pi$,
Cartesian product $\times$, 
or selection $\sigma$.\footnote{Notice that we do not need to treat the renaming operator separately:
WLOG we assume that every tuple variable is defined only once in any safe $\TRC$ query, 
and we thus never run into renaming issues in the named perspective of  $\TRC$ as we do in the named perspective of $\RA$.}

Case 1: $E = E_1 \cup E_2$: 
From the induction, we assume that there are safe $\TRC$ formulae $\varphi_1$ and $\varphi_2$ for $E_1$ and $E_2$, respectively. 
We know from the definition of the union operator that $E_1$ and $E_2$ must have the same arities and schemas.
By renaming if necessary, we may assume that the lone free tuple variable in both 
$\varphi_1$ and $\varphi_2$ is $q$. 
Because $E_1$ and $E_2$ have the same arity, 
the free tuple variables of $\varphi_1$ and $\varphi_2$ must also have the same arity, thus renaming is permitted. 
Then $\varphi_1 \vee \varphi_2$ is a safe $\TRC$ formula for E. 

Case 2: $E = E_1 - E_2$: 
As in case 1, assume there are formulas $\varphi_1$ and $\varphi_2$ 
for $E_1$ and $E_2$ with free variable $q$, respectively. 
We know by definition of the set difference that $E_1$ and $E_2$ must have the same arities and schemas.
From the induction, we know that for each attribute $q.A_i$, 
both $\varphi_1$ and $\varphi_2$
contain binding predicates of the form 
$q.A_i = x_{1i}$ and 
$q.A_i = x_{2i}$, respectively
(here each $x_{ji}$ is either a constant or 
an attribute in an existentially quantified relation).
Let $\varphi_2'$ be the formula we get from $\varphi_2$
by replacing every binding predicate 
$q.A_i = x_{2i}$
by 
$x_{1i} = x_{2i}$.
Then $\varphi_1 \wedge \neg \varphi_2'$ 
is a safe $\TRC$ formula for $E$ and its free variables only appear in the non-negated subformula $\varphi_1$.

Case 3: $E = \pi_{A_{i1}, \ldots, A_{id}}(E_1)$:
Let $\varphi_1$ be a safe $\TRC$ formula for $E_1$ with a single free variable $q_1$ of arity $k, k>d$.
From the induction, we know that $\varphi_1$ 
contains binding predicates of the form $q_1.A_i = x_{i}$ for $x \in [k]$.
Then a formula for $E$, with free variable $q$ of arity $d$,
is identical to $\varphi_1$
where every binding predicate 
$q_1.A_i = x_{i}$ for $i\in[d]$ is replaced by 
$q.A_i = x_{i}$, 
and every binding predicate for $i \in \{d+1, \ldots, k\}$ is removed.

Case 4: $E = E_1 \times E_2$: 
Let $\varphi_1$ and $\varphi_2$ be two $\TRC$ formulas for $E_1$ and $E_2$ 
with 
free variables $q_1$ and $q_2$ of arities $k_1$ and $k_2$, respectively.
As before, we assume $q_1$ and $q_2$ are bound 
with binding predicates $q_1.A_i = x_{1i}, i \in [k_1]$ and 
$q_2.A_i = x_{2i}, i \in [k_2]$, respectively.
If necessary, we first rename the attributes of $q_1$ and $q_2$ in their binding predicates s.t.\ they are disjoint.
WLOG
$q_1$ has then attributes $A_1, \ldots, A_{k_1}$
and $q_2$ has attributes $A_{k_1+1}, \ldots, A_{k_1+k_2}$.
We then rename the tuple variables themselves in the binding predicates of $\varphi_1$ and $\varphi_2$ s.t.\ they use the same variable $q$ 
(i.e.\ $q.A_i = x_{1i}, i \in [k_1]$ in $\varphi_1$ and 
$q.A_{i+k_1} = x_{2i}, i \in [k_2]$ in $\varphi_2$).
The $\TRC$ formula for $E$, with lone free variable $q$ of arity $k_1 + k_2$, is
$\varphi_1 \wedge \varphi_2$.
Finally, we assume that whenever possible, we reformulate the resulting formula $\varphi$ to be maximally scoped 
(i.e.\ we push the $\wedge$ behind existential quantifiers).

Case 5: $E = \sigma_c (E_1)$:
Here, $c$ is a logical condition that combines join and selection predicates defined over attributes from $E_1$ 
with an arbitrary nesting of logical operators $\wedge, \vee, \neg$.
Let $\varphi_1$ be a safe $\TRC$ formula for $E_1$ with single free variable $q$,
and let $c'$ be $c$ in which references to tables in $E_1$ 
are replaced with appropriate tuple variables from $\varphi_1$.
Then a formula for $E$, with free variable $q$
is $\varphi_1 \wedge c'$.

\ul{(2) Every safe $\TRC$ query is domain-independent}:
This statement follows immediately from our definition of safety and observing that every occurrence of the single free variable $q$ is possible only via a binding predicate,
i.e.\ a predicate that binds an attributes of $q$ to either column of a table or a constant, 
and thus to a finite set of domain values.
\end{proof}

\section{Details for \cref{sec:firstsolution}}

\subsection{Proof \cref{lemma:ENC-TRC} (\cref{sec:ENC-TRC})}
\label{sec:proof-lemma-atom-preserving}

\begin{proof}[Proof \cref{lemma:ENC-TRC}]
    The proof follows from a recursive application of 
    standard logical transformations~\cite[Sections 3.6 \& 10.3]{Barker-Plummer:2011} 
    and observing that they do not change the atoms of the AST, and thus neither the predicates nor the table signature.
    
    Let $\varphi_1, \ldots, \varphi_k$ represent any well-formed formulas. Then the following transformations preserve the atoms:
\begin{enumerate}

    \item 
    DeMorgan's law for universal quantifiers:\footnote{DeMorgan's laws are named after Augustus de Morgan who formalized them (though they have been known earlier).
    We are using the notation of Barker-Plummer et al~\cite{Barker-Plummer:2011} and write ``DeMorgen'' without a space between `De' and `Morgan'.}  
    $(\forall r_1 \in R_1, \ldots,$ $\forall r_k \in R_k [\varphi]) 
    = \neg(\exists r_1 \in R_1, \ldots, \exists r_k \in R_k [\neg(\varphi)])$.
    Notice here that 
    the binding atoms ($r_i \in R_i$) are staying the same
    despite the quantifiers are changing.

    \item 
    Removing implications: 
    $((\varphi_1) \rightarrow (\varphi_2))
    = (\neg(\varphi_1) \vee (\varphi_2))$

    \item 
    DeMorgan's law for disjunctions: 
    $(\varphi_1 \vee \varphi_2 \vee \dots \vee \varphi_k)
    = \neg( \neg(\varphi_1) \wedge \neg(\varphi_2) \wedge \dots \wedge \neg(\varphi_k))$

    \item 
    DeMorgan's law for conjunctions:
    $(\varphi_1 \wedge \varphi_2 \wedge \dots \wedge \varphi_k)
    = \neg( \neg(\varphi_1) \vee \neg(\varphi_2) \vee \dots \vee \neg(\varphi_k))$

    \item 
    Removing double negations: 
    $\neg(\neg(\varphi_1)) 
    = (\varphi_1)$

\end{enumerate}
The transformations can be applied in order $(1) \rightarrow (2) \rightarrow (3) \rightarrow (5)$ 
from the outside inwards
and thus finish in polynomial time.
\end{proof}

\subsection{Comparing the ASTs for $\TRC^{\neg \exists \wedge \vee}$ and $\TRC^{\neg \exists \wedge}$ (\cref{sec:non-dijunctive-fragment})}
\label{sec:non-dijunctive-fragment-illustration}

\begin{example}
\label{ex:Fig_disjunction_bigger_AST}
Consider the following $\TRC$ query
that is a variation on relational division. It returns values from $R.A$
that co-occur in $R$ with either $S.B$ or $S.C$, for all tuples from $S$ with $S.A\!>\!0$:
\begin{align}
&\{ q(A) \mid \exists r \in R [
    q.A \equal r.A \wedge \h{(\forall} s \in S[s.A\!>\!0 \;\h{\rightarrow}\;
    \label{eq:Fig_disjunction_bigger_AST:1}\\
&\hspace{4mm}\h{(}\exists r_2 \in R [ \h{(}r_2.B \equal s.B \;\h{\vee}\; r_2.C \equal s.C \h{)} 
    \wedge r_2.A \equal r.A ] \h{ )} ] \h{)}] \}
    \notag
\intertext{Replacing $\forall$ and $\rightarrow$ leads to an expression in
$\TRC^{\neg \exists \wedge \vee}$:}
&\{ q(A) \mid \exists r \in R [
    q.A \equal r.A \wedge \h{\neg(\exists} s \in S[s.A\!>\!0 \;\h{\wedge}\;     \label{eq:Fig_disjunction_bigger_AST:2}\\
&\hspace{4mm}\h{\neg (}\exists r_2 \in R [ \h{(}r_2.B \equal s.B \;\h{\vee}\; r_2.C \equal s.C \h{)} 
    \wedge r_2.A \equal r.A ] \h{ )} ] \h{)}] \}
    \notag
\intertext{Further replacing $\vee$ leads to a query in $\TRC^{\neg \exists \wedge}$:}
&\{ q(A) \mid \exists r \in R [
    q.A \equal r.A \wedge \h{\neg(\exists} s \in S[s.A\!>\!0 \;\h{\wedge}\;     
    \label{eq:Fig_disjunction_bigger_AST:3}\\
&\hspace{4mm}\h{\neg(}\exists r_2 \in R [ 
    \h{\neg(\neg(}r_2.B \equal s.B\h{)} \;\h{\wedge}\; \h{\neg(} r_2.C \equal s.C \h{))} 
    \wedge r_2.A \equal r.A ] \h{ )} ] \h{)}] \}
    \notag
\end{align}
Notice that the above three expressions are not only pattern-equivalent
(we do not change the signature),
but also all the selection predicates (e.g.\ ``\,$S.A\!>\!0$'')
and the join predicates (e.g.\ ``\,$r_2.C \equal s.C$'') are identical.
Only the logical connectives ($\neg, \vee, \wedge, \rightarrow$), quantifiers and parentheses may have changed.
In other words, \emph{all atoms} are identical, and hence all leaves in their ASTs
(\cref{Fig_disjunction_bigger_AST}).

This is in contrast to the non-disjunctive fragment required by $\diagrams$. 
Concretely,
\cite{DBLP:journals/pacmmod/GatterbauerD24} suggests to use DeMorgan with quantifiers 
as to distribute the quantifier over the disjuncts, before transforming to a conjunction:
$\neg (\exists r \in R[A \vee B]) = 
\neg (\exists r \in R[A] \vee \exists r \in R[B]) =
\neg \exists r \in R[A] \wedge \neg \exists r \in R[B]$.
This leads to a disjunction-free query, 
yet leads to 
a different signature 
(3 occurences of table ``\,$R$\!'')
and thus a different relational pattern:
\begin{align}
&\{ q(A) \mid \exists r \in R [
    q.A \equal r.A \wedge \h{\neg (\exists} s \in S[s.A \!>\! 0 \;\h{\wedge} \\
&\hspace{4mm}\h{\neg (\exists} r_2 \in R [r_2.B \equal s.B \wedge r_2.A \equal r.A ]\h{)} \;\h{\wedge}\; 
    \notag\\
&\hspace{4mm}\h{\neg (\exists} r_3 \in R [r_3.C \equal s.C \wedge r_3.A \equal r.A ]\h{)} ])] \} 
    \hspace{29mm}\notag	
\end{align}
\end{example}

\begin{figure}[t]
    \centering
    \begin{subfigure}[b]{0.3\linewidth}
        \centering
        \includegraphics[scale=0.37]{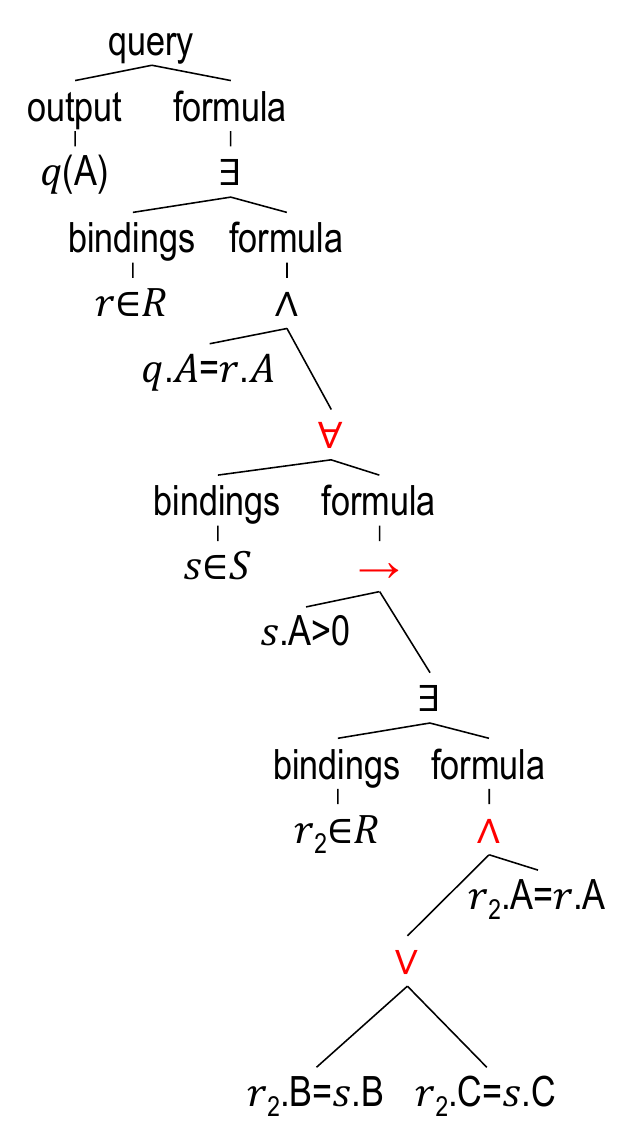}
        \caption{$\TRC$ query \cref{eq:Fig_disjunction_bigger_AST:1}}
        \label{Fig_disjunction_bigger_AST_1}
    \end{subfigure}	
    \hspace{4mm}
    \begin{subfigure}[b]{0.3\linewidth}
        \centering
        \includegraphics[scale=0.37]{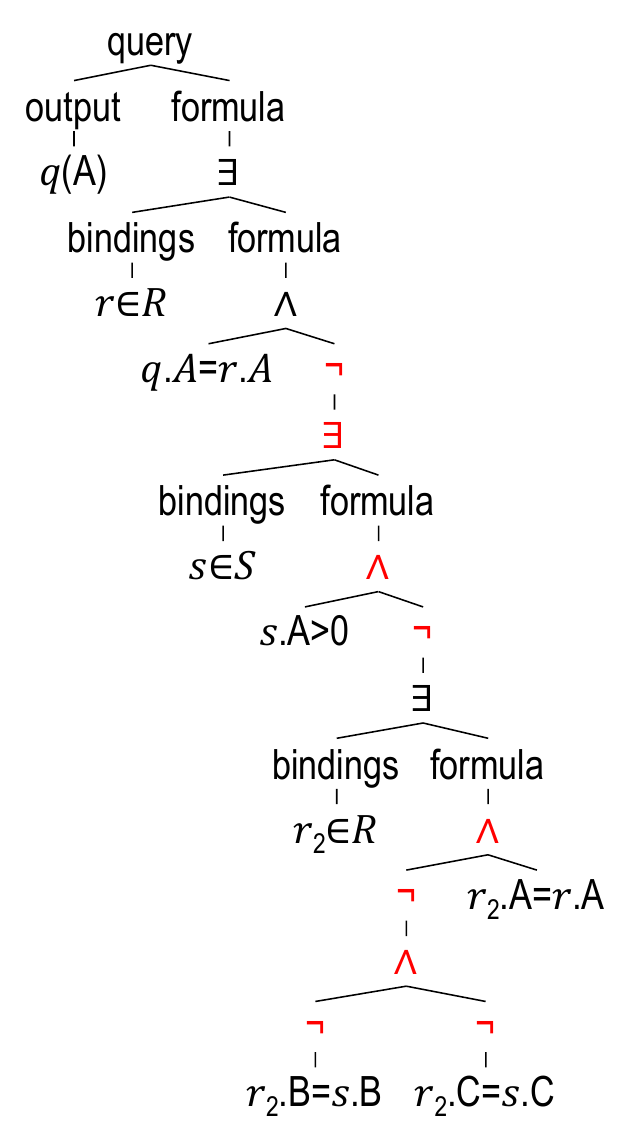}
        \caption{$\TRC$ query \cref{eq:Fig_disjunction_bigger_AST:3}}
        \label{Fig_disjunction_bigger_AST_2}
    \end{subfigure}	
    \caption{\cref{ex:Fig_disjunction_bigger_AST}:
    ASTs of logically identical queries.
    Notice how all leaves of the trees are identical despite (b) using a smaller vocabulary for internal nodes than (a).}
    \label{Fig_disjunction_bigger_AST}
\end{figure}

\subsection{$\diagrams$ with built-in relations are pattern-complete for $\TRC$ (\cref{sec:Theorem-built-in-relations},\cref{th:pattern expressiveness})}

We next prove \cref{th:pattern expressiveness} by giving a constructive translation from 
$\TRC^{\neg \exists \wedge}$ and back.

\subsubsection{Translation from $\TRC^{\neg \exists \wedge}$ to $\diagrams$ with built-in relations}
\label{sec:translation from TRC}
We next give the straight-forward 4-step translation from well-formed $\TRC^{\neg \exists \wedge}$ queries into a variant of $\diagrams$ enhanced with built-in predicates. 
This translation is heavily inspired by the 5-step translation given in~\cite{DBLP:journals/pacmmod/GatterbauerD24}
for their version of $\diagrams$, however it differs in crucial steps.
Notably, by showing that our translation preserves the atoms (which includes the binding predicates)
for \emph{all} well-formed $\TRC$ queries,
we also show that our variant can express all relational patterns of $\TRC$.
We illustrate the translation with query \cref{eq:Fig_disjunction_bigger_AST:3} 
from \cref{ex:Fig_disjunction_bigger_AST} 
displayed in \cref{Fig_disjunction_bigger_builtin}.

\begin{figure}[t]
    \centering
    \includegraphics[scale=0.4]{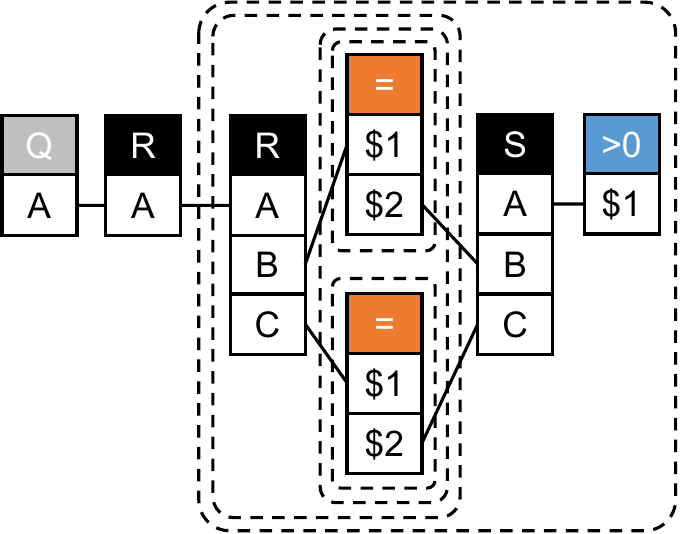}
    \caption{\Cref{sec:translation from TRC}: Pattern-preserving diagrammatic representation with built-in relations for query \cref{eq:Fig_disjunction_bigger_AST:3} 
    from \cref{ex:Fig_disjunction_bigger_AST}.}
    \label{Fig_disjunction_bigger_builtin}
\end{figure}

\emph{(1) Preprocessing}:
First, we translate any well-formed $\TRC$ into $\TRC^{\neg \exists \wedge}$ by preserving its atoms
as described in \cref{sec:ENC-TRC}.
Then, we replace every join and selection predicate with the corresponding built-in relation as described in \cref{sec:anchors}.
Equijoin predicates ($R.A = S.B$) that occur in the same negation scope as one of their relations $R$ or $S$
do not need to be replaced 
(e.g.\ this is the case in \cref{Fig_new_anquors_3} after replacing the ``$<$'' operator with ``$=$'',
but not in 
\cref{Fig_Disjunction_solutions_2} and
\cref{ex:Fig_disjunction_bigger_AST_continued}).
As example, query \cref{eq:Fig_disjunction_bigger_AST:3} is written as:
\begin{align*}
    &\{ q(A) \mid \exists r \in R [
        q.A \equal r.A \wedge \neg(\exists s \in S \h{,c \in \anchor{\textrm{``$>\!0$''}} }
        [s.A \equal c.\$1  \,\wedge     \\
    &\hspace{4mm}\neg(\exists r_2 \in R [ 
        \neg(\neg( \h{\exists j_1 \in \anchor{\textrm{``$=$''}}} 
        [r_2.B \equal j_1.\$1 \wedge j_1.\$2 \equal s.B] ) \,\wedge  \\
    &\hspace{8mm} \neg( \h{\exists j_2 \in \anchor{\textrm{``$=$''}}} 
        [r_2.C \equal j_2.\$1 \wedge j_2.\$2 \equal s.C )) 
        \wedge r_2.A \equal r.A ]  ) ] )] \}
\end{align*}
Notice that the equijoins ``$q.A \equal r.A$'' and ``$r_2.A \equal r.A$'' are not replaced with built-in relations

\emph{(2) Creating canvas partitions}:
Similar to $\diagrams$,
The scopes of the negations in a $\TRC$ are nested by definition.
We translate the nested hierarchy of negation scopes (the \emph{negation hierarchy}) 
into a nested partition of the canvas by dashed boxes with rounded corners.

\emph{(3) Placing input, built-in, and output relations}:
Relation names from each binding predicate are placed into the canvas partition 
that corresponds to the respective negation scope.
Notably, built-in relations are treated identically as other relations. 
In particular, multiple occurrences of the same built-in relation (e.g.\ $<$) lead to separate relations being placed in appropriate partitions.
For our example, the binding 
$\exists j_1 \in \anchor{\textrm{``$=$''}}$
is represented by a built-in relation
\fcolorbox{black}{binaryBuiltInPredicate}{\textcolor{white}{$=$}}
in nesting level 4 of the negation hierarchy.
Clearly inspired by $\diagrams$, the output relation $Q$ for non-Boolean queries and all attributes from its headers are represented with a relation
\selectBox{} 
outside all nesting scopes, which we refer to as the \emph{base partition} in reference to the AST.
Recall that for well-formed $\TRC$ queries $\{q(\vec H) \mid \varphi\}$, $\varphi$ can contain only one single free variable $q$.

\emph{(4) Placing equijoin predicates}:
In contrast to \diagrams, both join and selection predicates 
are now treated identically as equijoins with appropriate built-in relations.
For each equijoin $R.A \equal S.B$ in the query, we simply add two attributes, one for each relation $R$ and $S$ and connect them via an unlabeled edge.
For table attributes that occur in multiple equijoins, we only draw one attribute and connect it to multiple edges.
Notice that by this construction, the attributes of each occurrence of a built-in relation are connected to exactly one edge.
For example, 
``$r_2.B \equal j_1.\$1$''
is represented with an edge between attribute \attributeBox{B} of 
the 2nd occurence of relation \tableBox{R} 
with attribute \attributeBox{\$1} of the first occurence of built-in relation 
\fcolorbox{black}{binaryBuiltInPredicate}{\textcolor{white}{$=$}}.

\introparagraph{Completeness}
This 4-step translation guarantees the uniqueness of the following aspects: (1) nesting hierarchy (corresponding to the negation hierarchy), (2) where input and built-in relations are placed (canvas partitions corresponding to the negation scope), (3) which relation attributes participate in equijoins with what other relational attributes.
Notice that due to our unified treatment of selection and join predicates as equi-joins with appropriate built-in relations, our translation (after preprocessing) is slightly simpler than the one originally proposed by the authors of $\diagrams$~\cite{DBLP:journals/pacmmod/GatterbauerD24} and, 
more importantly, also more general: 
\emph{Any well-formed $\TRC$ query} can be represented in a way that preserves all atoms and thus also relational patterns.

\subsubsection{Translation from $\diagrams$ with built-in relations to $\TRC^{\neg \exists \wedge}$}
\label{sec:translation back to TRC}
We next describe the reverse 5-step translation from any valid $\diagrams$ with built-in predicates 
to a valid and unique $\TRC^{\neg \exists \wedge}$ expression.

\emph{(1) Determine the negation hierarchy}:
From the nested canvas partitions, 
create the nested scopes of the negation operators of the later $\TRC^{\neg \exists \wedge}$ expression, connected via conjunction whenever a partition has multiple children in the negation hierarchy.
For \cref{Fig_disjunction_bigger_builtin}, this interpretation leads to 
$\varphi = \neg(\neg(\neg (\neg(\_) \wedge \neg(\_) ) ) )$.
Here and in the following ``$\_$'' stands for a placeholder.

\emph{(2) Optional output table}:
If the diagram contains an output relation $Q$ with non-empty header $\vec H$ in the base partition
(which represents the only free variable), 
then use the set builder notation
$\{q(A) \mid \varphi )\}$.

\emph{(3) Binding predicates for table variables}:
Each input relation in a partition corresponds to an existentially-quantified table variable. 
WLOG, we use a small letter indexed by the number of occurrences for repeated tables.
We add those quantified table variables in the respective scope of the negation hierarchy.
Similar to $\diagrams$, we require that the leaves of the partition are not empty and contain at least one relation (whether input or built-in).
For our running example, we now have:
$\{q(A) \mid \exists r_1 \in R
    [\neg(\exists s_1 \in S
    [\neg(\exists r_2 \in R[ \neg (\neg(\_) \wedge \neg(\_) ) ] ) ] ) ] \}$

\emph{(4) Join and selection predicates for built-in relations}:
For each unary built in relation $\theta c$, we place a selection predicate template $\_ \theta c$
into the appropriate negation scope (here ``$\_$'' is a placeholder for table attribute to be added later).
For each binary built in relation $\theta$, we place a join predicate template 
``$\_ \theta \_$''
into the appropriate negation scope.
Whenever two table attributes $R.A$ and $S.B$ are directly connected via an edge without any built-in predicate in between, 
then this because an equijoin predicate 
``$r_i.A=s_j.B$'' occurs in the same negation scope as one of their relations $R$ or $S$. 
In this case, we place an equijoin predicate template ``$\_ = \_$''
into the negation scope of either $R$ or $S$, whichever one appears most deeply nested (they may also be equally nested).
In our running example, this is the case for the equijoin between the two $R$ tables
(``$r_1.A \equal r_2.A$'') being in the same partition as $r_2$.
Multiple such join and selection predicates are connected via conjunctions within the same negation scope.
At this point, our formula template is:
\begin{align*}
    &\{q(A) \mid  \exists r_1 \in R [
        \_ \equal \_ \wedge \neg(\exists s_1 \in S[\_\!>\!0 \;\wedge     \\
    &\hspace{4mm} \neg(\exists r_2 \in R [ 
        \neg(\neg(\_ \equal \_ ) \wedge \neg( \_ \equal \_ )) 
        \wedge \_ \equal \_ ] ) ] )] \}
\end{align*}

\emph{(5) Replace edges with table attributes}:
For each edge between a table attribute and a built-in predicate, 
we add the table attribute reference to the appropriate predicate template.
For example, 
``$\_\!>\!0$''
gets replaced by 
``$s_2\!>\!0$''
and the second occurrence of
``$\_ \equal \_$''
gets replaced by
``$r_2.B \equal s_1.B$''.
Our final $\TRC$ query is then:
\begin{align*}
    &\{q(A) \mid \exists r_1 \in R [
        q.A \equal r_1.A \wedge \neg(\exists s_1 \in S[s_1.A\!>\!0 \;\wedge     \\
    &\hspace{4mm} \neg(\exists r_2 \in R [ 
        \neg(\neg(r_2.B \equal s_1.B ) \wedge \neg( r_2.C \equal s_1.C )) 
        \wedge r_2.A \equal r_1.A ] ) ] )] \}
\end{align*}

\introparagraph{Soundness} 
Notice that this 5-step translation guarantees that the resulting $\TRC$ query is well-formed and uniquely determined up to 
(1) renaming of the tuple variables; 
(2) reordering the predicates in conjunctions (commutativity of conjunctions), and 
(3) flipping the left/right positions of attributes in each predicate. 
Neither of these possible transformations changes the semantics of the query.
It follows that \diagrams~with built-in predicates are sound, and their logical interpretation is unambiguous.

\subsubsection{Valid \diagrams~with built-in relations}
\label{sec:RD_validity}
The validity conditions for $\diagrams$ with built-in relations
are similar to $\diagrams$ and follow from the conditions for each of the 5
steps of the translation to \TRC. The differences are:
(1)~Edges contain no labels and represent equijoins.
(2) Built-in predicates can be of two types (unary and binary) and have one of 6 comparison operators.
(3) Edges can only happen between attributes of tables that are descendants of each other, and only between input relations, or between one input relation and one built-in predicate.
(4) The optional single output table needs to be outside any negation scope and disjunction box, 
but each attribute can be connected to an arbitrary number of tables in any partition.
The constructive translations from 
\cref{sec:translation from TRC,sec:translation back to TRC},
which preserves \emph{all atoms} in both directions, constitute the proof.

\section{Details for \cref{sec:real solution}}

\subsection{Example for \cref{sec:simpler:anchors}}
\label{app:simpler:anchors}

\begin{example}[Canonical selections]
    \label{ex:selection_conditions}
    Consider the Boolean query
    $$
        \exists r\in R, \exists s\in S[r.A \equal s.B \wedge r.A \!>\!0 \wedge r.A \!<\!5]
    $$
    shown in 4 variants in \cref{Fig_selection_conditions}:
    \Cref{Fig_selection_conditions_1} is our canonical representation and is topologically uniquely determined.
    It defines our formal semantics and is necessary to represent pattern-equivalent nested disjunctions.
    \Cref{Fig_selection_conditions_2} is the canonical representation for \diagrams. It repeats the attribute of a table for every selection predicate and fuses the selections into the attributes. We apply this representation as a visual shortcut whenever possible.
    \Cref{Fig_selection_conditions_3} is a possible further shortcut.
    The choice of the selection attribute that is fused into the join is non-deterministic, which is why we do not recommend it.
    \Cref{Fig_selection_conditions_4} uses a symbolic representation for conjunction $\wedge$ and is thus not diagrammatic (see also \cref{Fig_Prior_Disjunctions_1}).
\end{example}

\begin{figure}[t]
    \centering
    \begin{subfigure}[b]{0.15\linewidth}
        \centering
        \includegraphics[scale=0.4]{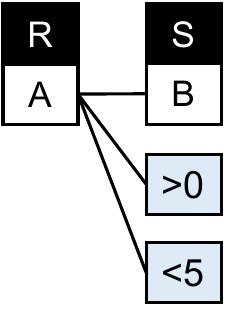}
        \caption{}
        \label{Fig_selection_conditions_1}
    \end{subfigure}	
    \hspace{6mm}
    \begin{subfigure}[b]{0.14\linewidth}
        \centering
        \includegraphics[scale=0.4]{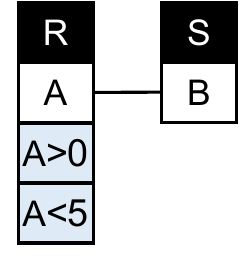}
        \vspace{4mm}
        \caption{}
        \label{Fig_selection_conditions_2}
    \end{subfigure}	
    \hspace{6mm}    
    \begin{subfigure}[b]{0.14\linewidth}
        \centering
        \includegraphics[scale=0.4]{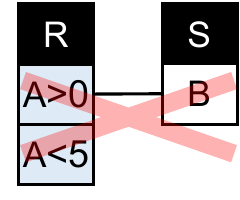}
        \vspace{8mm}
        \caption{}
        \label{Fig_selection_conditions_3}
    \end{subfigure}	
    \hspace{6mm}    
    \begin{subfigure}[b]{0.18\linewidth}
        \centering
        \includegraphics[scale=0.4]{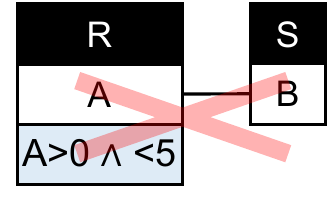}
        \vspace{8mm}
        \caption{}
        \label{Fig_selection_conditions_4}
    \end{subfigure}	    
    \caption{\cref{ex:selection_conditions} in \cref{app:simpler:anchors}: (a) and (b) are two alternative representations for $\systemx$. 
    (a) is ``canonical'', defines our formal semantics, and is necessary to represent pattern-equivalent nested disjunctions. 
    (b) is the representation of $\diagrams$ that we use as short-form whenever possible.
    We do not permit (c) which is non-deterministic (and thus somewhat arbitrary),
    nor (d) which requires a symbolic interpretation of $\wedge$ and is thus not diagrammatic.
    }
    \label{Fig_selection_conditions}
\end{figure}

\subsection{Modified translation for \systemx\ (\cref{sec:representationB}, \cref{th:safety preservation})}

We prove \cref{th:safety preservation} by giving a translation from 
$\TRC^{\neg \exists \wedge \vee}$ to $\systemx$.
It is a simple adaptation of the translation from \cref{sec:translation from TRC} with only two modifications:

(1) Instead of built-in relations, we use the visual shortcuts from \cref{sec:anchor simplifications}.
We thus only need to transform $\TRC$ into $\TRC^{\neg \exists \wedge \vee}$, which preserves the safety conditions.
We can also represent selection and join predicates directly via their shortcuts and do not need to go the full route via built-in relations.

(2) The second step of ``Creating canvas partitions'' now also includes disjunction boxes in addition to negation scopes: 
each subtree rooted below a negation $\neg$ gets a negation scope, 
and each subtree rooted below a disjunction node $\vee$ gets a disjunction box.
Sibling boxes are then either connected with a dotted line, or adjacent to each other.

In the reverse direction, we need to modify the first step of ``Determine the negation hierarchy'':
From the nested canvas negation and disjunction partitions, 
we now create the nested scopes of the negation and disjunction operators of the later $\TRC^{\neg \exists \wedge \vee}$ expression
For \cref{Fig_disjunction_bigger_RDplus_3}, this interpretation leads to 
$\varphi = \neg(\neg( \_ \vee \_ ) ) $.

\subsection{$\systemx$ is exponentially more succinct than $\diagrams$ (\cref{sec:size})}
\label{sec:exponential succinctness}

We give here the proof for \cref{prop:exponentialsize} in \cref{sec:size} that 
$\systemx$ is an exponentially more succinct representation than \diagrams.
We proceed in two steps:
(1) We first show that every \diagram\xspace can be immediately interpreted as $\systemx$ 
(with the exception of ``union cells'' (or union boxes), where $\systemx$ replaces multiple outputs relation with only one).
(2) We then give a parameterized query that has an exponentially bigger size than \diagram.

\subsubsection{$\systemx$ has equal or smaller size}
\label{sec:systemx equal or smaller}
From \cref{sec:real solution} and our translation from \cref{sec:translation back to TRC}, 
it follows that every $\diagram$ without union boxes is immediately also a $\systemx$.
For every $\diagram$ with union boxes, all the output relations in the union boxes can be replaced with one single output relation outside the disjunction boxes to give a valid $\systemx$. It follows that for all $\diagrams$ there is a $\systemx$ of equal or smaller size.
We give an example.

\begin{example}[\cref{ex:disjunction_union} continued]
    \label{ex:disjunction_union_cont}
    $\diagrams$ need to transform query \cref{eq:union query} into a union of disjunction-free queries
    \begin{align*}
        \h{\{}q(A) \mid \exists r\in R[q.A = r.A]\h{\}} 
        \;\h{\cup}\; 
        \h{\{}q(A) \mid \exists s\in S[q.A = s.A]\h{\}} 
    \end{align*}
    As a consequence, the output table will be repeated twice, 
    each in a separate union cell and with an identical attribute signature,
    just like Datalog (see \cref{Fig_AST_Union_2}).
    
    In contrast, $\systemx$ has maximally one output relation, just like $\TRC$. 
    Therefore, \cref{Fig_AST_Union_2} is not a valid $\systemx$ representation.
    Instead, $\systemx$ represents
    query \cref{eq:union query} with one single output table 
    (see \cref{Fig_AST_Union_1}).

    \Cref{Fig_AST_Union_3,Fig_AST_Union_4} show their ASTs. 
    Notice how the \diagram\xspace has the same atoms, except for repeated output tables.
\end{example}

\begin{figure}[t]
    \centering	
    \begin{subfigure}[b]{.41\linewidth}
        \centering
        \includegraphics[scale=0.42]{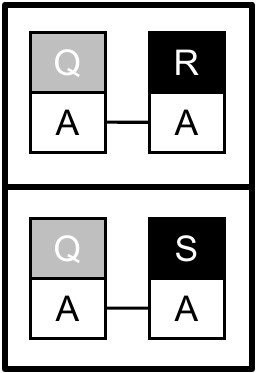}
    \caption{}
    \label{Fig_AST_Union_2}
    \end{subfigure}	
    \hspace{1mm}
    \begin{subfigure}[b]{.36\linewidth}
        \centering
        \includegraphics[scale=0.42]{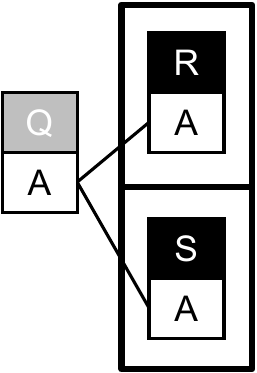}
    \caption{}
    \label{Fig_AST_Union_1}
    \end{subfigure}	
    \hspace{1mm}
    \begin{subfigure}[b]{.41\linewidth}
        \centering
        \includegraphics[scale=0.35]{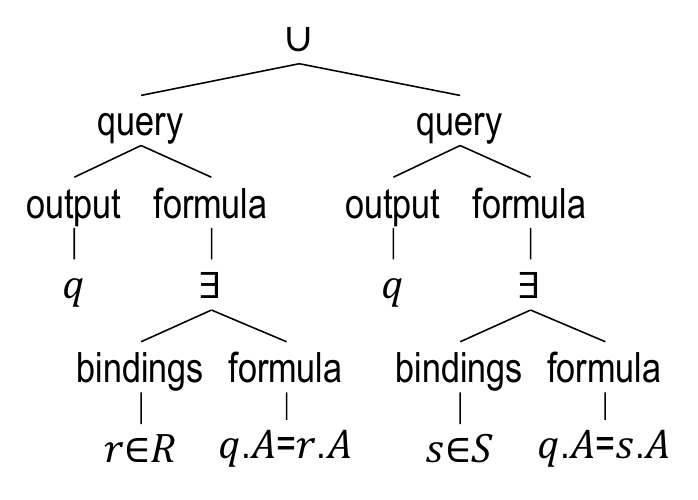}
    \caption{}
    \label{Fig_AST_Union_4}
    \end{subfigure}	
    \hspace{1mm}
    \begin{subfigure}[b]{.36\linewidth}
        \centering
        \includegraphics[scale=0.35]{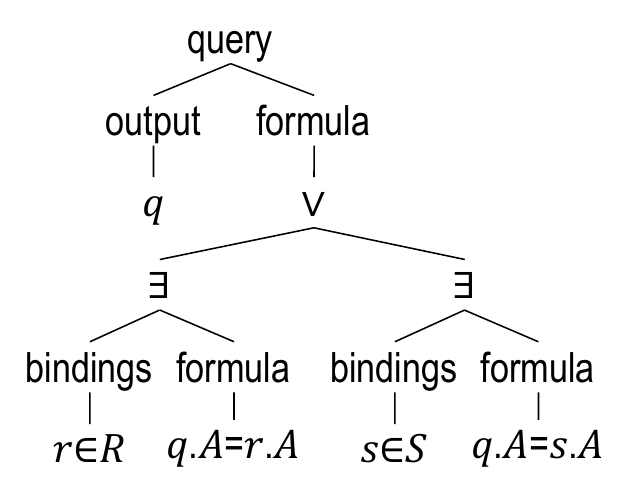}
    \caption{}
    \label{Fig_AST_Union_3}
    \end{subfigure}	
    \hspace{1mm}
    \caption{\Cref{ex:disjunction_union_cont}: Every $\diagram$ can be represented as $\systemx$ in equal or smaller size.}
    \label{Fig_AST_Union}
\end{figure}

\subsubsection{$\systemx$ can be exponentially smaller than $\diagrams$}
\label{sec:proof exponentially smaller}

We now give the proof for \cref{prop:exponentialsize} from \cref{sec:disjunctions}.

\begin{proof}[Proof \cref{prop:exponentialsize}]
Consider the following parameterized Boolean $\TRC$ query:
\begin{align*}
    \exists r \in R [(r.A_1 \equal c_{11} \vee r.A_1 \equal c_{12}) \wedge \ldots \wedge (r.A_k \equal c_{k1} \vee r.A_k \equal c_{k2})]
\end{align*}
It can be represented as $\systemx$ with size $\O(k)$ in the number of symbols.
Concretely, $\systemx$ needs $5k+1$ boxes and $2k$ edges (see \cref{Fig_exponential_example_1}).

$\diagrams$ can represent the disjunction as DNF and thus first need to transform the query into DNF:
\begin{align*}
    &\exists r \in R [r.A_1 \equal c_{11} \wedge \ldots \wedge r.A_k \equal c_{k1}] \,\vee \\
    &\exists r \in R [r.A_1 \equal c_{11} \wedge \ldots \wedge r.A_k \equal c_{k2}] \,\vee \\    
    &\ldots \\
    &\exists r \in R [r.A_1 \equal c_{12} \wedge \ldots \wedge r.A_k \equal c_{k2}] 
\end{align*}
The DNF has $2^k$ clauses and its size and also the size of the $diagrams$ representation is thus $\O(2^k)$.
Concretely, $\diagrams$ requires $(k+1)\cdot 2^{k}$ boxes (see \cref{Fig_exponential_example_2} for the use of union cells).

Since every $\diagram$ is also a $\systemx$ representation (except the union cells from \cref{sec:systemx equal or smaller}), 
it follows that $\systemx$ is an exponentially smaller representation than $\diagrams$.

To see that $\TRC$ and $\systemx$ have the same asymptotic size, notice that the translation from $\TRC$ 
succeeds by first translating $\TRC$ into $\TRC^{\neg \exists \wedge \vee}$ 
(which does not change the atoms, and hence the leaves)
and then directly mapping the atoms to visual symbols:
(1) binding atoms get mapped to relational tables;
(2) join predicates get mapped to two attributes and a connecting line with label appropriate placed;
(3) selection predicates get mapped to either one or two attribute boxes.
The negation scopes and disjunction boxes are in 1-to-1 correspondences with the levels of negation and the appearance of disjunctions in the AST.
\end{proof}

\begin{figure}[t]
    \centering	
    \begin{subfigure}[b]{.3\linewidth}
        \centering
        \includegraphics[scale=0.42]{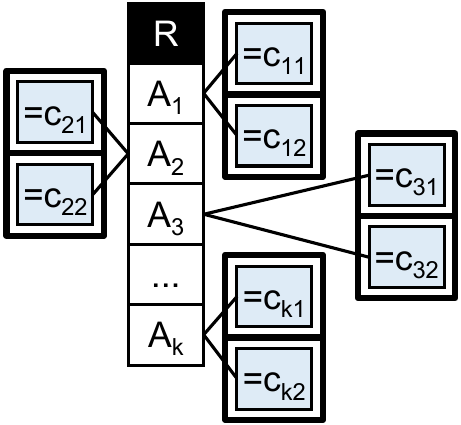}
    \caption{$\O(k)$ size.}
    \label{Fig_exponential_example_1}
    \end{subfigure}	
    \hspace{2mm}
    \begin{subfigure}[b]{.4\linewidth}
        \centering
        \includegraphics[scale=0.42]{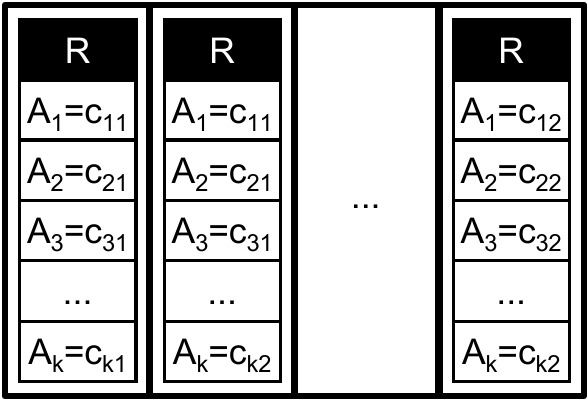}
        \vspace{1mm}
    \caption{$\O(2^k)$ size.}
    \label{Fig_exponential_example_2}
    \end{subfigure}	
    \caption{\Cref{sec:proof exponentially smaller}: Illustration for the proof of \cref{prop:exponentialsize}.}
\end{figure}

\section{Details for (\cref{sec:generality})}

\subsection{An example for recovering edge-based disjunctions (\cref{sec:generalizing prior solutions})}
\label{app:recover edge-based disjunctions}

\begin{example}[\cref{ex:disjunction} continued]
\label{ex:disjunctions as edges}
We next show how the disjunction
$\exists r \in R [r.A \equal 1 \;\h{\vee}\; r.A \equal 2]$
can be visualized in the spirit of edge-based disjunctions (\cref{Fig_Prior_Disjunctions_5}).
Notice how constants are semantically not bound to particular negation scopes and the query can 
be expressed with built-in predicates as
\begin{align*}
    & \exists r \in R, 
    \exists c_1 \in \textup{``$=\!1$''}, 
    \exists c_2 \in \textup{``$=\!2$''} 
    [\\
    &\hspace{2.0mm} \neg (\neg( \exists e_1 \in \textup{``$=$''} [r.A \equal e_1.\$1 \wedge e_1.\$2 \equal c_1.\$1]) \,\wedge  \\
    &\hspace{5.5mm} \neg( \exists e_2 \in \textrm{``$=$''}  [r.A \equal e_2.\$1 \wedge e_2.\$2 \equal c_2.\$1])
     )] 
\end{align*}    
\Cref{Fig_Disjunctions_5} shows the atom-preserving representation as \diagrams\xspace with built-in relations,
and \Cref{Fig_Disjunctions_6} the corresponding $\systemx$.
\Cref{Fig_Disjunctions_7,Fig_Disjunctions_8} shows the process of making the anchors of the equijoins increasingly small,
leading in the limit to the edge-based disjunction from \cref{Fig_Prior_Disjunctions_5}.
\end{example}

\begin{figure}[t]
    \centering	
    \begin{subfigure}[b]{.25\linewidth}
        \includegraphics[scale=0.4]{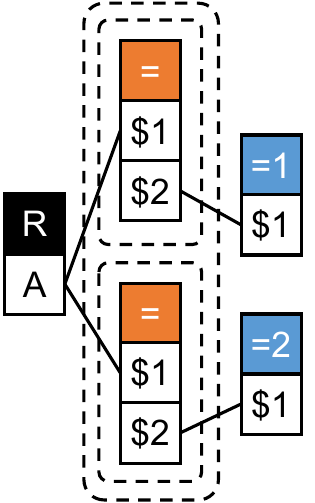}
    \caption{}
    \label{Fig_Disjunctions_5}
    \end{subfigure}	
    \hspace{2.7mm}
    \begin{subfigure}[b]{.2\linewidth}
        \includegraphics[scale=0.4]{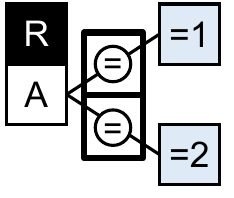}
    \vspace{8mm}
    \caption{}
    \label{Fig_Disjunctions_6}
    \end{subfigure}	
    \hspace{2.7mm}
    \begin{subfigure}[b]{.2\linewidth}
        \includegraphics[scale=0.4]{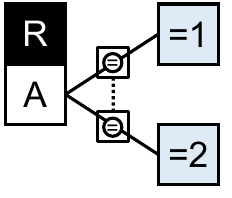}
    \vspace{8mm}
    \caption{}
    \label{Fig_Disjunctions_7}
    \end{subfigure}	
    \hspace{2.7mm}
    \begin{subfigure}[b]{.2\linewidth}
        \includegraphics[scale=0.4]{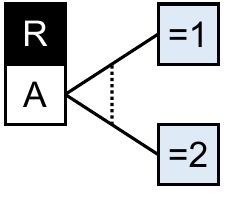}
    \vspace{8mm}
    \caption{}
    \label{Fig_Disjunctions_8}
    \end{subfigure}	
    \caption{\Cref{ex:disjunctions as edges} in \cref{app:recover edge-based disjunctions}: 
    $\systemx$ can recover edge-based disjunctions (\cref{Fig_Prior_Disjunctions_5})
    via increasingly small anchors on the equijoins.}
    \label{Fig_Disjunctions}
\end{figure}

\subsection{Perceptual justification of our solution and Peirce shading (\cref{sec:perceptual choices})}
\label{sec:perceptual justifications}

Our main contributions in this paper is a representation system that solves the disjunction and the safety problem
with its constituitive features. 
We discuss here some of the enabling features of our representation which, 
recall from \cref{sec:enabling feature}, are those are features that facilitate interpretation but are not essential.
Those may thus appear as orthogonal to some readers and we list them only as supplemental information.
But they informed us in our design.

Chamberlin in his recent CACM article on SQL~\cite{DBLP:journals/cacm/Chamberlin24} states 
``\emph{Our specific goals were to design a query language with the following properties:
...
user with no specialized training could, in simple cases, understand the meaning of a query simply by reading it. We called this the `walk-up-and-read' property}.''
Diehl~\cite{Diehl:2007ly} writes:
``\emph{If done right, diagrams group relevant information together to make searching more eﬃcient, and use visual cues to make information more explicit.}''

Similarly, our goal was to develop an ($i$) intuitive and ($ii$) principled diagrammatic representation for ($iii$) arbitrary nestings of disjunctions, ($iv$) with minimal additional notations.
Our solution $\systemx$ 
is heavily inspired by the design choices of $\diagrams$
and meets the challenge without introducing any fundamentally novel visual symbol to the visual grammar of $\diagrams$.
We merely give a stricter syntactic interpretation of edge labels, allow constants outside table attributes, and allow disjunction boxes used inside the diagrams.

\Cref{Fig_graphical_codes} shows how we see our conservative extension with disjunctions fit into the overall
visual grammar and semantic patterns (sometimes called ``codes'')
of node-link diagrams and relationship representations~\cite{ware:2021:visual_thinking}.
As Ware writes, ``\emph{a good diagram takes advantage of basic perceptual mechanisms that have evolved to perceive structure in the environment}''~\cite{ware:2021:visual_thinking} and we tried to follow that practice.
Notice how our choices of disjunctions inherit the ``nested regions'' code from negation scopes and DeMorgan,
while also using the ``shapes in proximity'' code (widely used and known from UML) and alternatively the ``shapes connected by contour'' code.

\begin{figure}[t]
    \centering
    \includegraphics[scale=0.39]{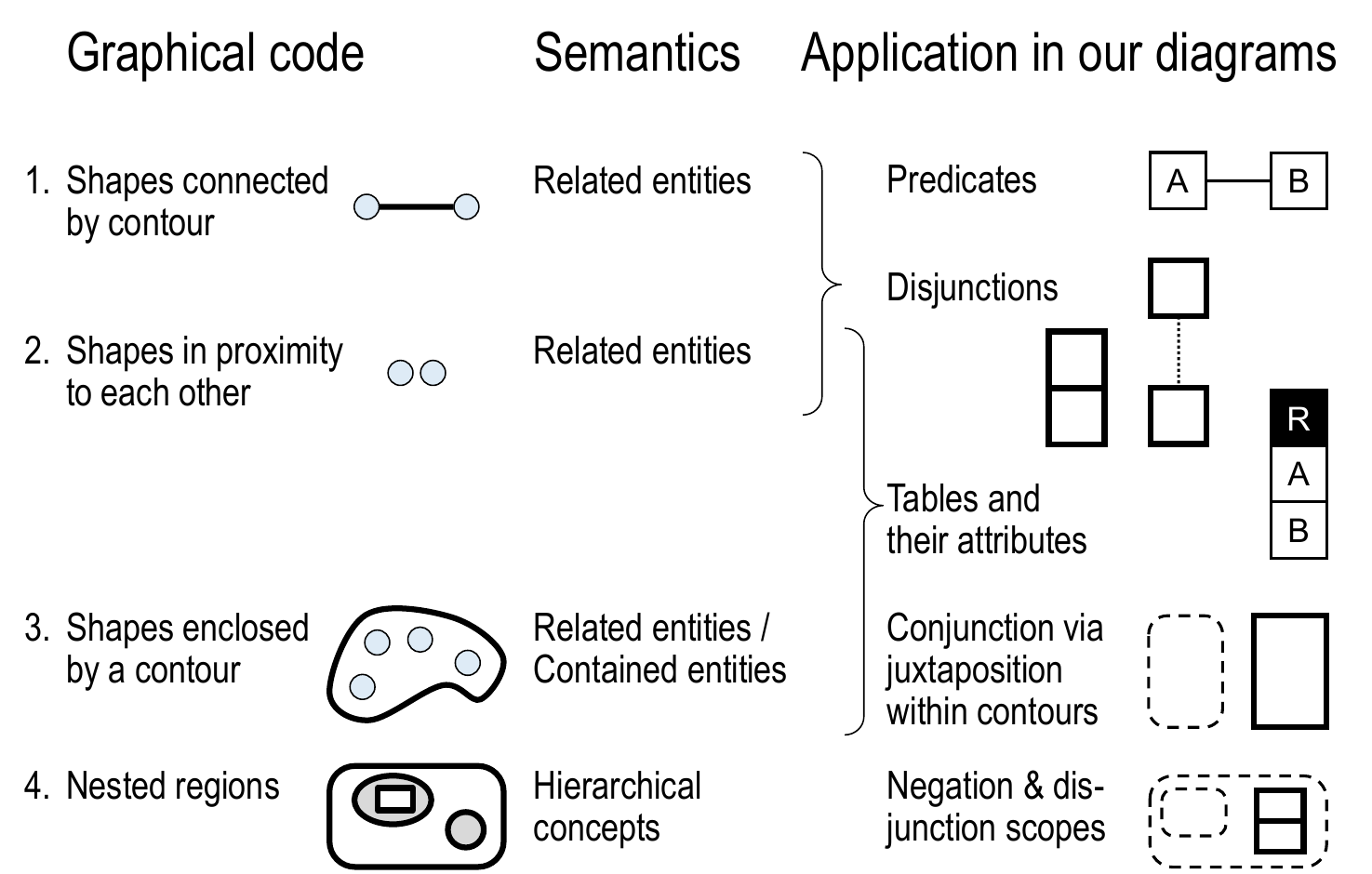}
    \caption{
    \Cref{sec:perceptual justifications}:
    Our design choices in the larger context of the visual grammar of node-link diagrams and relationship representations.
    Figure is heavily inspired from~\cite{ware:2021:visual_thinking} but modified to our context.
    }
    \label{Fig_graphical_codes}
\end{figure}

\begin{figure}[t]
    \centering
    \begin{subfigure}[b]{.3\linewidth}
        \includegraphics[scale=0.4]{figs/Fig_my_safety_example_Diagram2}
        \caption{}
        \label{Fig_my_safety_example_Diagram2_appendix}
    \end{subfigure}	
    \hspace{10mm}
    \begin{subfigure}[b]{.3\linewidth}
        \includegraphics[scale=0.4]{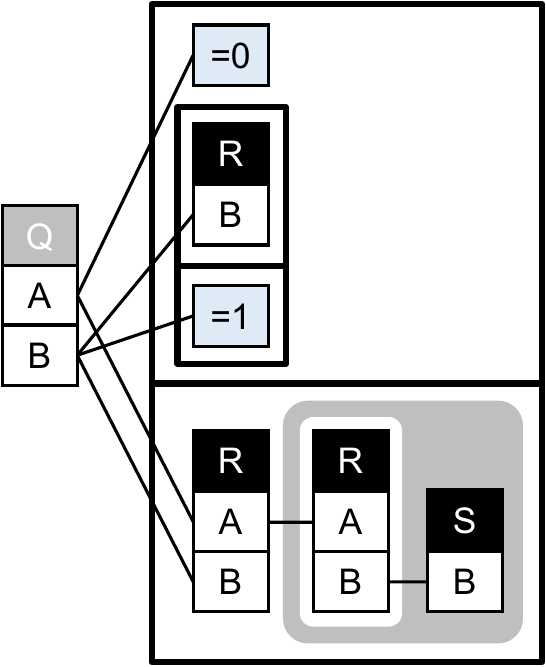}
        \caption{}
        \label{Fig_my_safety_example_Diagram}
    \end{subfigure}	
    \caption{(a): \Cref{ex:safety_continued} (\cref{ex:safety} continued).
    (b) shows Peirce shading applied.}
\end{figure}

\subsubsection{Peirce shading to facilitate reading}
\label{sec:multiple readings}
We give here an additional enabling feature for $\systemx$
that was originally proposed by Peirce~\cite[Paragraph~4.621]{peirce:1933}
and became more widely known due to Sowa~\cite{SOWA2008213,Sowa+2011+347+394}.
Call zones in the diagram as either \emph{positive} 
(if nested in an even number of negation scopes, possibly zero) 
or \emph{negative} (if nested in an odd number of negations).
Then to improve contrast and facilitate reading, 
shade negative areas in gray
and keep positive areas in white.
For example, \cref{Fig_my_safety_example_Diagram} shows Perice shading applied to \cref{ex:safety_continued}.

Since the semantics of DeMorgan-fuse boxes is based on double negation, 
and they thus \emph{do not change the parity of a zone},
Peirce shadings very naturally complement with disjunction (see \cref{Fig_anquors}).
We illustrate by continuing \cref{ex:Fig_disjunction_bigger_AST}.

\begin{figure}[t]
    \centering
    \includegraphics[scale=0.42]{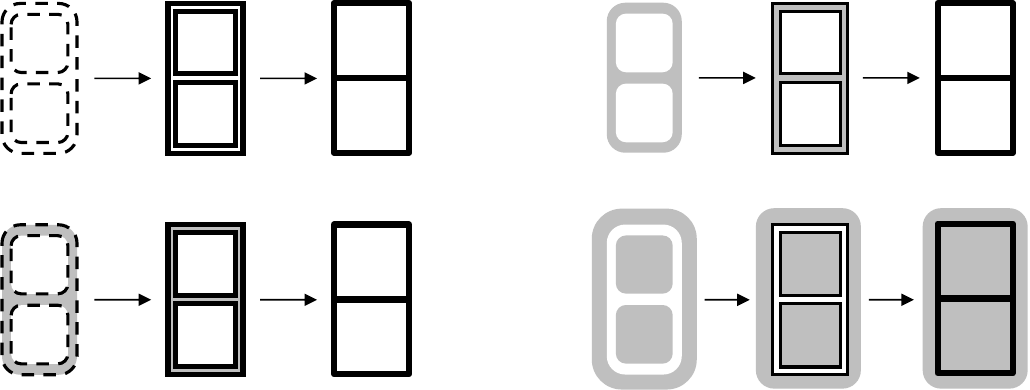}
    \caption{\cref{sec:multiple readings}: DeMorgane-fuse boxes
    do not change the parity of a zone and thus effortlessly interact with Peirce shadings.}
    \label{Fig_anquors}
\end{figure}

\begin{example}[\cref{ex:Fig_disjunction_bigger_AST} continued]
    \label{ex:Fig_disjunction_bigger_AST_continued}
    \Cref{Fig_disjunction_bigger_builtin} showed 
    \cref{ex:Fig_disjunction_bigger_AST}
    as $\diagram$ with built-in relations. 
    \Cref{Fig_disjunction_bigger_RDplus_2} now shows the $\systemx$ representation by 
    translating from query \cref{eq:Fig_disjunction_bigger_AST:3} in $\TRC^{\neg \exists \wedge}$.
    In contrast, \cref{Fig_disjunction_bigger_RDplus_3} shows $\systemx$ by 
    translating directly from the $\TRC^{\neg \exists \wedge \vee}$ query \cref{eq:Fig_disjunction_bigger_AST:2}
    and using disjunctions.
    Notice how the disjunction boxes do not change the parity of the zones in which they are 
    and can really be interpreted as visual shortcuts (thus providing a post hoc justification for our name ``DeMorgan-fuse'').
\end{example}

\begin{figure}[t]
    \centering
    \begin{subfigure}[b]{0.47\linewidth}
        \includegraphics[scale=0.42]{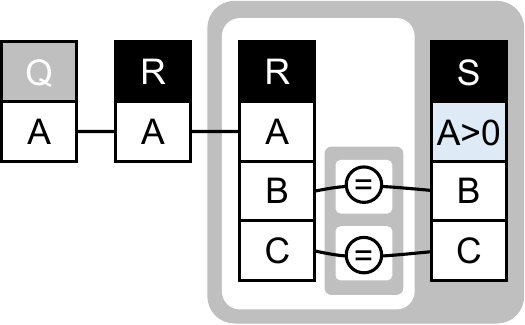}
        \caption{}
        \label{Fig_disjunction_bigger_RDplus_2}
    \end{subfigure}	
    \hspace{1mm}
    \begin{subfigure}[b]{0.47\linewidth}
        \includegraphics[scale=0.42]{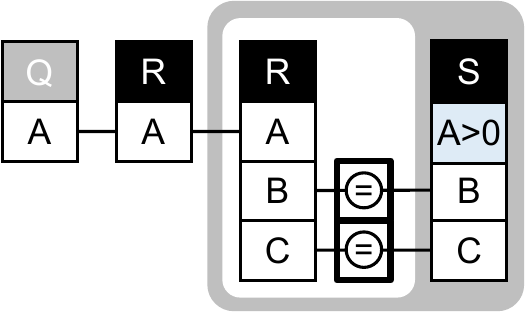}
        \caption{}
        \label{Fig_disjunction_bigger_RDplus_3}
    \end{subfigure}	
    \hspace{1mm}
    \caption{\cref{ex:Fig_disjunction_bigger_AST_continued}:
    $\systemx$ by translating from $\TRC^{\neg \exists \wedge}$ (a)
    or directly from $\TRC^{\neg \exists \wedge \vee}$ with DeMorgan-fuse boxes (b), both with Peirce shading applied.
    }
    \label{Fig_disjunction_bigger_RDplus}
\end{figure}

\subsection{$\systemx$ representations for 4 prior challenges (\cref{sec:prior challenges}, \cref{sec:introduction})}
\label{sec:RDconclusions}

We give here solutions in $\systemx$ to 4 challenges listed in the introduction. 
We use the word ``query'' also for a statement (i.e.\ a Boolean query).

\Cref{Fig_Thalheim_solution} shows $\systemx$ for \cref{Fig_Prior_Disjunctions_6},
a simplified version of a problem listed in  
two presentations~\cite{Thalheim:2007:Slides,Thalheim:2013:Slides}
by Thalheim (the original images are shown in \cref{Fig_original_disjunctions_Thalheim2,Fig_original_disjunctions_Thalheim1}). The representation reads as:
\begin{align}
    R.A \equal 1 \vee R.B \equal 2 \vee (R.C \equal 3 \wedge R.D \equal 4) \label{eq:Thalheim}
\end{align}

\Cref{Fig_Disjunction_FutureWork_1,Fig_Disjunction_FutureWork_2} shows $\systemx$
for the two challenges listed in~\cite{DBLP:journals/pacmmod/GatterbauerD24}:
\begin{align}
    & \exists r\in R, \exists s \in S[r.A \!<\! s.E \wedge (r.B \!<\! s.F \vee r.C \!<\! s.G)]                       \label{eq:RD_dis_1}\\
    & \exists r\in R[(r.A \!>\! 0 \wedge r.A \!<\! 10) \vee (r.A \!>\! 20 \wedge r.A \!<\! 30)]     \label{eq:RD_dis_2}
\end{align}

\begin{figure}[t]
\centering
\begin{subfigure}[b]{0.19\linewidth}
    \centering
    \includegraphics[scale=0.42]{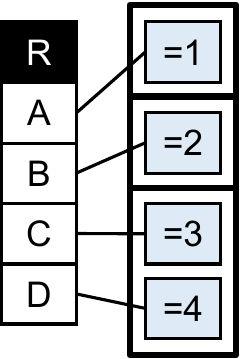}
    \caption{Query \cref{eq:Thalheim}, \cref{Fig_Prior_Disjunctions_6}}
    \label{Fig_Thalheim_solution}
\end{subfigure}	
\hspace{8mm}
\begin{subfigure}[b]{0.17\linewidth}
    \centering
    \includegraphics[scale=0.42]{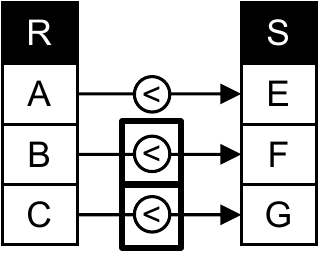}
    \vspace{1.3mm}
    \caption{Query \cref{eq:RD_dis_1}}
    \label{Fig_Disjunction_FutureWork_1}
\end{subfigure}	
\hspace{8mm}
\begin{subfigure}[b]{0.14\linewidth}
    \centering
    \includegraphics[scale=0.42]{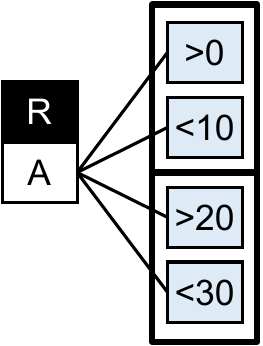}
    \vspace{0.6mm}    
    \caption{Query \cref{eq:RD_dis_2}}
    \label{Fig_Disjunction_FutureWork_2}
\end{subfigure}	
\hspace{8mm}
\begin{subfigure}[b]{0.2\linewidth}
    \centering
    \includegraphics[scale=0.35]{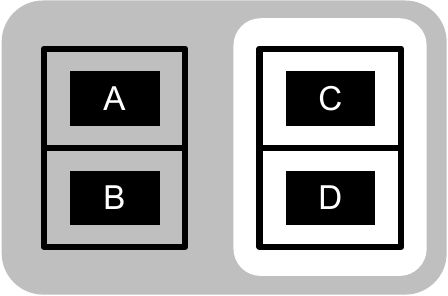}
    \caption{Query \cref{eq:Gardner}, \cref{Fig_Gardner_VENN}}
    \label{Fig_Gardner_VENN2}
\end{subfigure}	
\begin{subfigure}[b]{0.3\linewidth}
    \centering
    \includegraphics[scale=0.42]{figs/Fig_Disjunction_solutions_1}
    \vspace{0mm}    
    \caption{\cref{Fig_Prior_Disjunctions_10}, interpretation \cref{eq:Fig_Disjunction_solutions_1}}
    \label{Fig_Disjunction_solutions_1_appendix}
\end{subfigure}	
\hspace{1mm}
\begin{subfigure}[b]{0.3\linewidth}
    \centering
    \includegraphics[scale=0.42]{figs/Fig_Disjunction_solutions_2}
    \vspace{0mm}    
    \caption{\cref{Fig_Prior_Disjunctions_10}, interpretation \cref{eq:Fig_Disjunction_solutions_2}}
    \label{Fig_Disjunction_solutions_2_appendix}
\end{subfigure}	
\caption{\cref{sec:RDconclusions}: $\systemx$ representations
for challenging examples raised in prior work.}
\label{Fig_Disjunction_FutureWork}
\end{figure}

Gardner in his 1958 book `Logic Machines and Diagrams'~\cite{Gardner:1958:logicMachines}
discusses a challenging disjunction 
and concludes that
``there seems to be no simple way in which the statement, as it stands, can be diagramed''~\cite[Section 4.3]{Gardner:1958:logicMachines}:
\begin{align}
    (A \vee B) \rightarrow (C \vee D) \label{eq:Gardner}
\end{align}    
He proposes to use what he calls a \emph{diagrammatic compound statement} shown in 
\cref{Fig_Gardner_VENN3}, which needs to repeat the individual predicates, i.e.\ it is thus not pattern-isomorphic to query \cref{eq:Gardner}.
\Cref{Fig_Gardner_VENN2} shows $\systemx$ for query \cref{eq:Gardner}, 
which follows from rewriting the statement as:
\begin{align*}
    \neg(    
    (A \vee B) 
    \wedge
    \neg (C \vee  D) 
    )
\end{align*}
Notice that while $\TRC$ is commonly not defined with nullary ($0$-ary) predicates,
atomic propositions like $A$ can be naturally interpreted as nullary relations $A()$ that can be set to true or false.
In other words, a symbol ``$A$'' is true iff ``$\exists a \in A$'', i.e.\  thus table $A$ is not empty.

\begin{figure}[t]
    \centering
    \begin{subfigure}[b]{0.3\linewidth}
        \centering
        \includegraphics[scale=0.35]{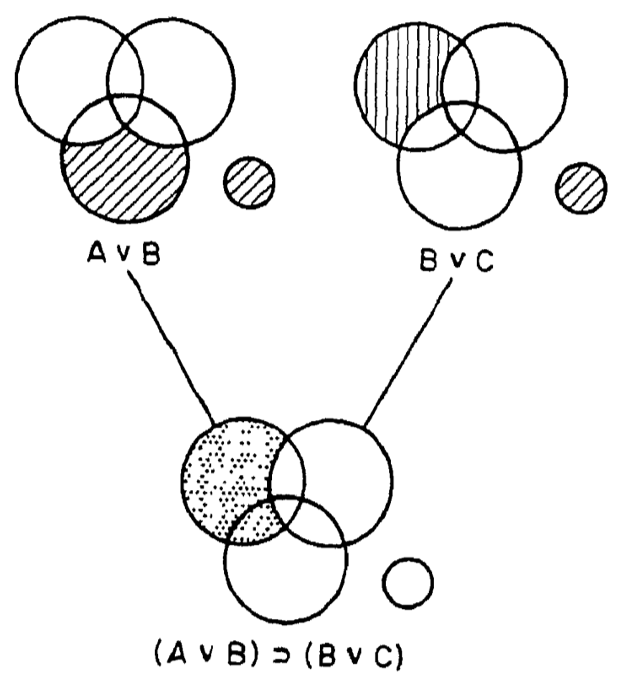}
        \caption{\cite[Fig.~44]{Gardner:1958:logicMachines}}
        \label{Fig_Gardner_VENN3}
    \end{subfigure}	
    \hspace{10mm}
    \begin{subfigure}[b]{0.4\linewidth}
        \centering
        \includegraphics[scale=0.38]{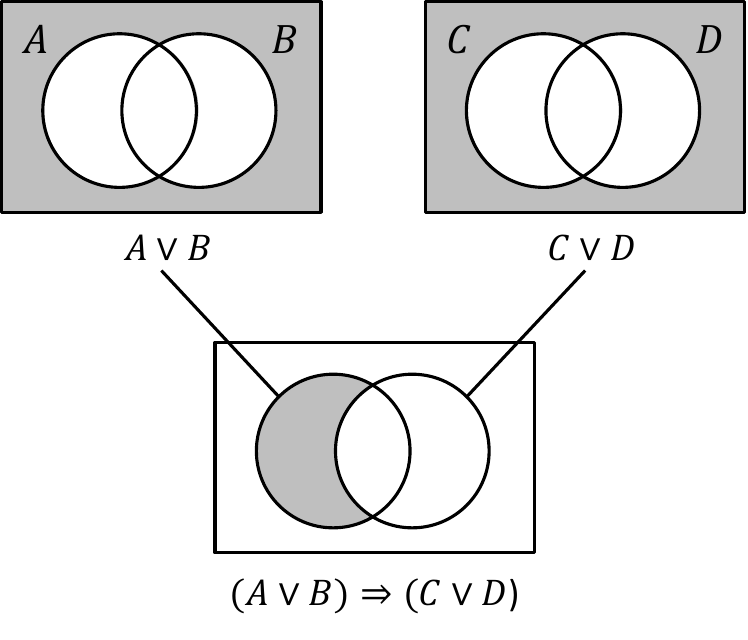}
        \caption{}
        \label{Fig_Gardner_VENN1}
    \end{subfigure}	
    \caption{Garner's challenging disjunction 
    $(A \vee B) \rightarrow (C \vee D)$ shown in its original form (a) and an equivalent more modern use of Venn diagrams.}
    \label{Fig_Gardner_VENN}
\end{figure}

Most interesting is \cref{Fig_Prior_Disjunctions_10}, raised as a challenge in the earlier mentioned tutorial~\cite{ICDE:2024:diagrammatic:tutorial}.
Here we point out an issue that -- to the best of our knowledge -- has never been raised in the literature on visual representation and diagrammatic reasoning: 
There are two different ways to interpret the figure shown in \cref{Fig_Prior_Disjunctions_10}:
\begin{align}
    & \exists r\in R, \exists s \in S[(r.A \equal s.A \wedge r.B=0) \vee (r.B \equal 1 \wedge r.C \equal 2)] 
        \label{eq:Fig_Disjunction_solutions_1}\\
    & \exists r\in R [(\exists s \in S[r.A \equal s.A] \wedge r.B=0) \vee (r.B \equal 1 \wedge r.C \equal 2)] 
        \label{eq:Fig_Disjunction_solutions_2}
\end{align}    

\noindent
The difference is that 
query \cref{eq:Fig_Disjunction_solutions_1} 
is true on the database $\{R(0,1,2)\}$, 
whereas query \cref{eq:Fig_Disjunction_solutions_2} is false.
The reason is the different scoping of $S$.
Our principled translation into $\systemx$ with DeMorgan-fuse boxes creates the two \emph{distinct} diagrams
\cref{Fig_Disjunction_solutions_1_appendix,Fig_Disjunction_solutions_2_appendix} and can thus handle the distinction, as expected.
We do not know any edge-based approach that could handle and distinguish the cases.

\section{Details on textbook benchmark (\cref{sec:pattern coverage})}
\label{sec:pattern coverage appendix}

We list here the 3 queries that $\diagrams$ cannot pattern-represent:
\circled{1}~``Find the names of sailors who have reserved a red or a green 
boat''~\cite[Query 5, Sect.~5.2]{cowbook:2002}.
\circled{2}~``Make a list of project numbers for projects that involve an employee whose last name is `Smith', either as a worker or as manager of the controlling department for the 
project''~\cite[Query 4, Ch.~8.6]{Elmasri:dq}.
\circled{3}~``Get part numbers for parts that either weigh more than 16 pounds or are supplied by supplier S2, or 
both''~\cite[Query 9, Ch.8.3]{date2004introduction}.
\Cref{Fig_Textbook} shows those queries in their original definitions
together with their $\systemx$ representations.

\begin{figure}[h]
\centering
\begin{subfigure}[b]{0.58\linewidth}
    \includegraphics[scale=0.36]{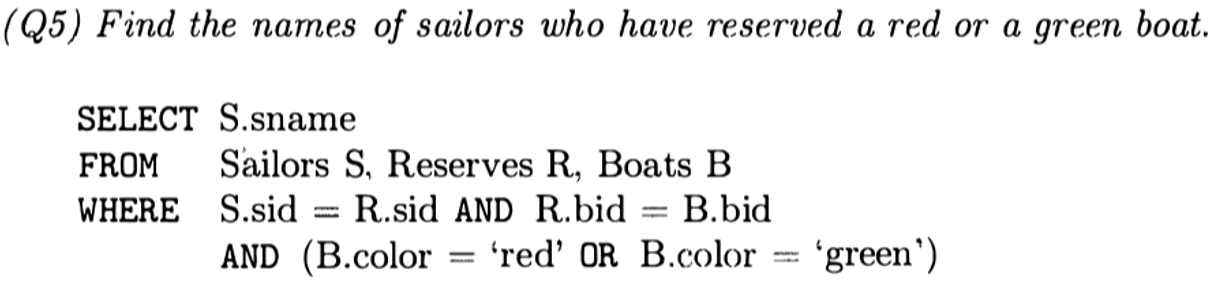}
    \caption{\cite[Query 5, Sect.~5.2]{cowbook:2002}}
    \label{Fig_CowBook_Query}
\end{subfigure}	
\hspace{1mm}
\begin{subfigure}[b]{0.4\linewidth}
    \includegraphics[scale=0.35]{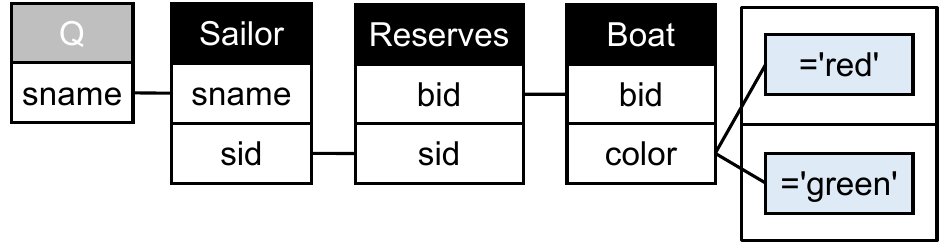}
    \caption{}
    \label{Fig_CowBook_RD}
\end{subfigure}	
\hspace{1mm}
\begin{subfigure}[b]{0.58\linewidth}
	\vspace{1mm}
    \includegraphics[scale=0.37]{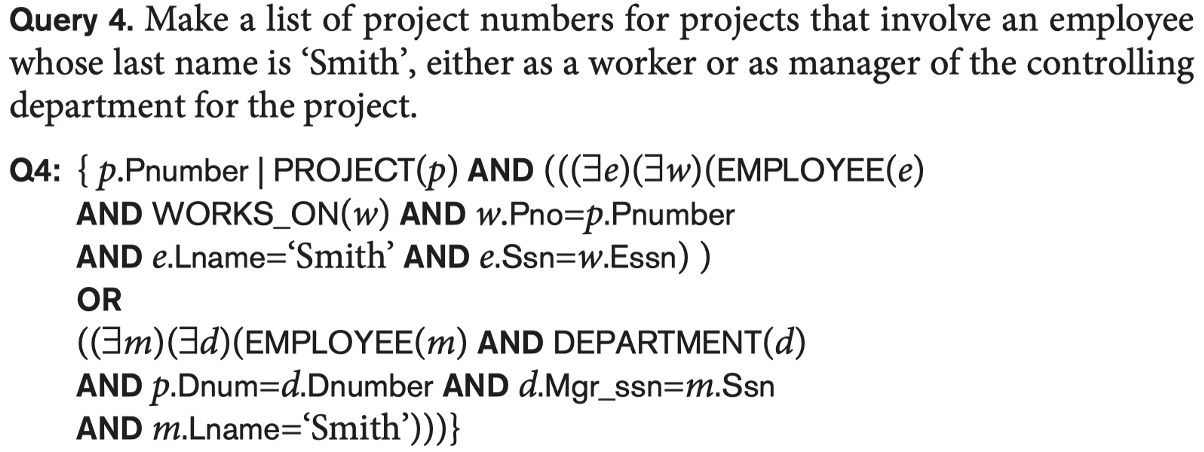}
    \caption{\cite[Query 4, Ch.~8.6]{Elmasri:dq}}
    \label{Fig_Elmasri_Query}
\end{subfigure}	
\hspace{1mm}
\begin{subfigure}[b]{0.4\linewidth}
    \includegraphics[scale=0.35]{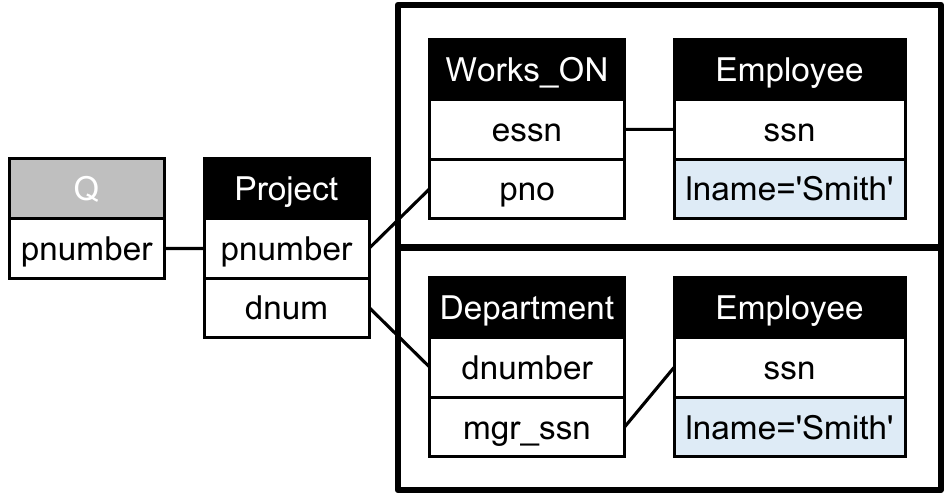}
    \caption{}
    \label{Fig_Elmasri_RD}
\end{subfigure}	
\hspace{1mm}
\begin{subfigure}[b]{0.68\linewidth}
    \includegraphics[scale=0.39]{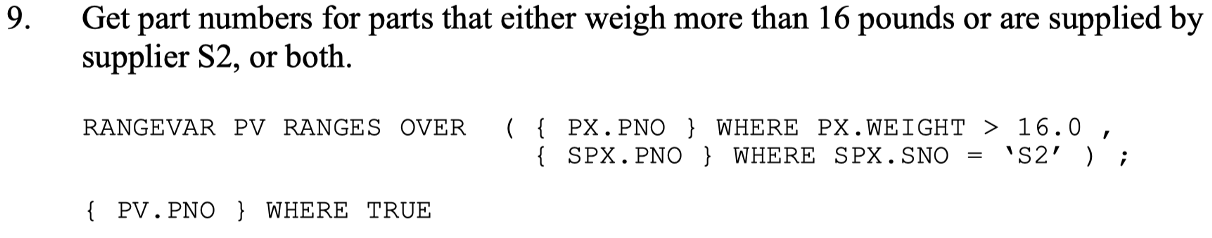}
    \caption{\cite[Query 9, Ch.8.3]{date2004introduction}}
    \label{Fig_Date_Query}
\end{subfigure}	
\hspace{1mm}
\begin{subfigure}[b]{0.3\linewidth}
    \includegraphics[scale=0.35]{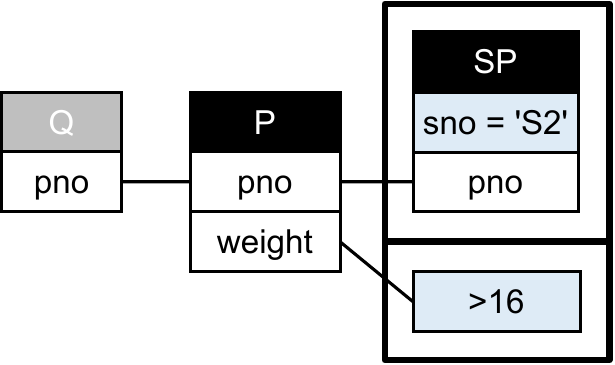}
    \caption{}
    \label{Fig_Date_RD}
\end{subfigure}	
\hspace{1mm}
\caption{\Cref{sec:pattern coverage appendix}: $\systemx$ representations for 3 relational query patterns 
that cannot be represented by prior approaches.}
\label{Fig_Textbook}
\end{figure}

\clearpage

\end{document}